\documentclass[11pt]{amsart}
\usepackage{amsfonts}
\usepackage{amsmath}
\usepackage{amscd}
\usepackage{amsthm}
\usepackage{enumerate}

\usepackage{graphicx}

\usepackage{color}

\sloppy

\usepackage{hyperref}

\usepackage{mathtools}                      
\mathtoolsset{showonlyrefs=true}            


\usepackage[margin=1in]{geometry}

\newtheorem{theorem}{Theorem} 
\newtheorem*{theorem*}{Theorem}

\newtheorem{definition}[theorem]{Definition}
\newtheorem{proposition}[theorem]{Proposition}

\theoremstyle{remark}
\newtheorem{rmk}[theorem]{Remark}
\newtheorem{asmp}[theorem]{Assumption}

\newcommand{\rr}{\mathbb{R}}

\newcommand{\ep}{\hfill \ensuremath{\Box}}
\newcommand{\eq}{\begin{equation}}
\newcommand{\en}{\end{equation}}

\newcommand{\ev}{\mathbb{E}}

\numberwithin{equation}{section} \numberwithin{theorem}{section}

\title[Forward performance processes]{Asymptotic Analysis of Forward performance processes in incomplete markets and their ill-posed HJB equations}

\author{Mykhaylo Shkolnikov}
\address{Department of Operations Research and Financial Engineering, Princeton University}
\email{mshkolni@gmail.com}

\author{Ronnie Sircar}
\address{Department of Operations Research and Financial Engineering, Princeton University}
\email{sircar@princeton.edu}

\author{Thaleia Zariphopoulou}
\address{Departments of Mathematics \& IROM, McCombs School of Business, University of Texas, Austin and the Oxford-Man Institute, University of Oxford}
\email{zariphop@math.utexas.edu}

\begin{document}

\begin{abstract}
We consider the problem of optimal portfolio selection under forward investment performance criteria in an incomplete market. The dynamics of the prices of the traded assets depend on a pair of stochastic factors, namely, a slow factor (e.g. a macroeconomic indicator) and a fast factor (e.g. stochastic volatility). We analyze the associated forward performance SPDE and provide explicit formulae for the leading order and first order correction terms for the forward investment process and the optimal feedback portfolios. They both depend on the investor's initial preferences and the dynamically changing investment opportunities. The leading order terms resemble their time-monotone counterparts, but with the appropriate stochastic time changes resulting from averaging phenomena. The first-order terms compile the reaction of the investor to both the changes in the market input and his recent performance. Our analysis is based on an expansion of the underlying ill-posed HJB equation, and it is justified by means of an appropriate remainder estimate.
\end{abstract}

\maketitle


\section{Introduction}

This paper analyzes the optimal portfolio selection problem under forward investment criteria in incomplete markets. Incompleteness stems from the presence of imperfectly correlated stochastic factors that affect the dynamics of the traded assets. Such factors have been widely used in the literature and model an array of market inputs, like, among others, stochastic volatility, stochastic interest rates, predictability of asset returns, and various macroeconomic indicators. Herein, we consider a pair of such factors, which are taken, however, to move at different time scales. 

\medskip

The mathematical formulation of the problem of optimal investment in continuous time was pioneered by Merton in \cite{Me1}, \cite{Me2} and is usually referred to as the Merton problem. In the classical Merton problem the investor faces a complete market and seeks an investment portfolio that optimizes her expected utility from wealth acquired in the investment process. Therein, the investor's utility function (or, equivalently, her preferences) is determined ex ante and does not change over time. The Merton problem has been studied in a variety of frameworks and we refer to the books \cite{Du}, \cite{KS} for excellent accounts of the classical results. 
However, the setup of the Merton problem has two inherent drawbacks: 1) the investor has to decide about her utility function ex ante and cannot adapt it to market observations; 2) the investments over different time horizons are typically inconsistent with each other: for example, for $0<T_1<T_2$ the solution to the investment problem for the time period $[0,T_1]$ is in general not the restriction to $[0,T_1]$ of the solution to the investment problem for the time period $[0,T_2]$ with the $T_1$ utility function also imposed at $T_2$. In particular, it is not clear how to invest in a time-consistent fashion over rolling horizons (deterministic or random).

\medskip

{\em Forward investment performance} criteria were introduced and developed in \cite{MZ1} and \cite{MZ2}, and provide a complementary setting to the traditional expected utility framework. They allow for dynamic adaptation of the investor's risk preferences given how the market conditions change and, also, take into account the updated performance of the implemented strategies. 
The forward performance process $U(t,\cdot)$, $t\ge0$ is a stochastic process adapted to the filtration of the investor with the properties that with probability one all functions $x\mapsto U(t,x)$ are increasing and concave (and, thus, can serve as utility functions); for every self-financing strategy $\pi$ and the corresponding portfolio value process $X^\pi$ the process $U(t,X^\pi(t))$, $t\ge0$ is a supermartingale in the filtration of the investor; and there exists a self-financing strategy $\pi^*$ such that the process $U(t,X^{\pi^*}(t))$, $t\ge0$ is a martingale in the filtration of the investor. 
The pair $(U,\pi^*)$ encodes how the preferences of the investor and her optimal investment decisions jointly evolve in time from the given $U(0,\cdot)$. For a more detailed description of these criteria, further motivation and construction of concrete examples, we refer the reader, among others, to \cite{ElM}, \cite{KOZ}, \cite{MZ5}, \cite{MZ6}, \cite{NT}, and \cite{NZ1}.

\medskip

The specification of the initial datum $U(0,\cdot)$ is a central issue and a topic of current research in the forward portfolio selection approach. From the theoretical point of view, the main question is the characterization of the set of admissible initial data that give a well-defined solution. This was addressed in \cite{MZ4} for the case of time-monotone forward processes and we refer to these results repeatedly herein.

\medskip

From the practical point of view, the question is how to use and translate
investor's targets - for example, desired upcoming investment performance
or personal market views, etc. - to a mathematical input. Questions of
this nature were studied in \cite{MZ4} (section 5.2) where it is shown how an investment target on the upcoming average return can be used to infer the
initial utility input. For example, one can think of a client presenting a fund manager with the desired investment target (e.g. 5\% above the S\&P 500 performance) and a band around the investment target (e.g. 4-6\% above the S\&P 500 performance), which give an indication about the client's initial utility function $U(0,\cdot)$. Then, the fund manager's problem is to find a pair $(U,\pi^*)$ with the given $U(0,\cdot)$. Therefore, the question of finding large classes of forward performance processes $U$ and the corresponding optimal portfolios $\pi^*$ is of great importance. 

\medskip

Assuming the filtration of the investor to be generated by a Brownian motion and her forward performance process $U(t,x)$ to be an It\^o process in $t$ and twice continuously differentiable in $x$, one can show (see \cite{MZ5} and \cite{NT} for more details) that $U$ is a solution of the fully nonlinear stochastic partial differential equation (SPDE)
\begin{equation}\label{SPDE0}
\mathrm{d}U(t,x)=\frac{1}{2}\,\frac{\big\|U_x(t,x)\,\lambda(t)+\sigma(t)\,\sigma^{-1}(t)\,a^W_x(t,x)\big\|^2}
{U_{xx}(t,x)}\,\mathrm{d}t+a(t,x)^T\,\mathrm{d}\mathcal{W}(t).
\end{equation}        
Here $\mathcal{W}=(W,\tilde{B})$ is a standard Brownian motion that generates the filtration of the investor; $W$ is a Brownian motion to whose filtration the asset prices are adapted; $\sigma$ is the corresponding volatility matrix of the asset prices and $\sigma^{-1}$ is its Moore-Penrose pseudoinverse; $\lambda$ is the market price of risk; $a=(a^W,a^{\tilde{B}})$ is a suitable stochastic process adapted to the filtration of the investor; and the superscript $T$ denotes transpose. 

\medskip

The forward SPDE \eqref{SPDE0} provides the analogue of the Hamilton-Jacobi-Bellman (HJB) equation that is associated with the classical optimization problems of expected utility from terminal wealth.
As in the traditional setting, it is fully nonlinear and possibly degenerate. There are, however, fundamental differences between \eqref{SPDE0} and its classical counterpart. Firstly, \eqref{SPDE0} is posed forward in time, which makes the problem ill-posed. Secondly, the forward volatility process $a(t,x)$ is up to the investor to choose, in contrast to the classical case, where it is the mere outcome of the It\^o decomposition of the value function process. The specification of the correct class of forward volatility processes is a very challenging problem, which remains open.

\medskip

So far three classes of forward performance processes have been exhibited in the literature: 1) time-monotone forward performance processes, that is: forward performance processes which are of finite variation in the time variable (see \cite{MZ4} for more details); 2) homothetic forward performance processes, that is: forward performance processes whose dependence on the investor's wealth $x$ is of power form (see \cite{NZ1} and \cite{NT} for more details); 3) forward performance processes of factor form in complete markets (see \cite{NT} for more details). These three types of forward performance processes result from significant simplifications of the SPDE \eqref{SPDE0} in certain special cases: time-monotone forward performance processes are obtained by setting $a\equiv0$ and solving the resulting partial differential equation (PDE); homothetic forward performance processes result from choosing $U$ to be of product form with power function dependence on $x$; and forward performance processes of factor form in complete markets are derived using a reduction of the SPDE \eqref{SPDE0} to a Hamilton-Jacobi-Bellman (HJB) equation that can be linearized in the complete market framework using the Fenchel-Legendre transform.  

\medskip

As in \cite{MZ5} and \cite{NT} we consider forward performance processes of factor form (see Subsection \ref{sub_setting} for the exact details), so that \eqref{SPDE0} can be reduced to an (ill-posed) HJB equation. However, we consider the \textit{incomplete market} case in which the HJB equation cannot be linearized by a simple transformation. Nonetheless, we are able to find \textit{explicit formulas} for the leading order and first-order correction terms of the solution to such an HJB equation. These yield the leading order and first-order correction terms of the corresponding pair $(U,\pi^*)$. We complement these results by an appropriate estimate of the remainder term. 

\medskip

A similar expansion for the classical Merton problem in an incomplete market was given in \cite{FSZ}. In contrast to the Merton problem setup in \cite{FSZ}, we face the additional difficulty of the HJB equation being ill-posed. In addition, no general estimates of the remainder were given in \cite{FSZ}, so that our approach provides new insights in the Merton problem setting as well.          

\medskip

The following subsection describes our framework.

\subsection{Setting}\label{sub_setting}

We consider $n$ tradeable securities whose prices follow the stochastic differential equations
\begin{equation}\label{market}
\mathrm{d}S_i(t)=S_i(t)\,\mu_i\big(Y^{\delta}(t),Y^{\epsilon}(t)\big)\,\mathrm{d}t 
+ S_i(t)\,\sigma_i\big(Y^{\delta}(t),Y^{\epsilon}(t)\big)^T\,\mathrm{d}W(t),\quad i=1,\,2,\,\ldots,\,n
\end{equation}
and where $Y^{\delta}$, $Y^{\epsilon}$ are two observable real-valued stochastic factors. The factors are modelled by one-dimensional diffusion processes
\begin{align}
 \mathrm{d}Y^{\delta}(t)&=\delta\,b(Y^{\delta}(t))\,\mathrm{d}t+\sqrt{\delta}\,\kappa(Y^{\delta}(t))\,\mathrm{d}B_1(t), \label{Yd} \\
 \mathrm{d}Y^{\epsilon}(t)&=\frac{1}{\epsilon}\,\gamma(Y^{\epsilon}(t))\,\mathrm{d}t
+\frac{1}{\sqrt{\epsilon}}\,\alpha(Y^{\epsilon}(t))\,\mathrm{d}B_2(t). \label{Yeps}
\end{align}
We think of $\delta$, $\epsilon$ as being small positive numbers, so that $Y^{\delta}$ should be thought of as a slow factor (e.g. a macroeconomic indicator) and $Y^{\epsilon}$ as a fast factor (e.g. a fast mean-reverting stochastic volatility). Hereby, the noise $\mathcal{W}=(W,B_1,B_2)$ is jointly a $(d+2)$-dimensional Brownian motion, $W$ is a $d$-dimensional standard Brownian motion, $B_1$, $B_2$ are one-dimensional standard Brownian motions, and the covariance structure is given by 
\begin{align}
 \rho^{\mathbf{s}}_j\,t&:=\langle W_j,B_1\rangle(t),\quad j=1,\,2,\,\ldots,\,d, \\
\rho^{\mathbf{f}}_j\,t&:=\langle W_j,B_2\rangle(t),\quad j=1,\,2,\,\ldots,\,d, \\
 \rho^{\mathbf{s,f}}\,t&:=\langle B_1,B_2\rangle(t).
\end{align} 
Since we allow for non-perfect correlation between the asset price processes and the stochastic factors, the 
market is in general \textit{incomplete}.

\medskip

In our setting the forward performance SPDE \eqref{SPDE0} reads
\begin{equation}\label{SPDE}
\begin{split}
\mathrm{d}U(t,x)=\frac{1}{2}\,\frac{\big\|U_x(t,x)\,\lambda\big(Y^{\delta}(t),Y^{\epsilon}(t)\big)
+\sigma\big(Y^{\delta}(t),Y^{\epsilon}(t)\big)\,\sigma\big(Y^{\delta}(t),Y^{\epsilon}(t)\big)^{-1}
a^W_x(t,x)\big\|^2}{U_{xx}(t,x)}\,\mathrm{d}t \\
+\,a(t,x)^T\,\mathrm{d}\widehat{\mathcal{W}}(t). 
\end{split}
\end{equation}
Here $\sigma\big(Y^{\delta}(t),Y^{\epsilon}(t)\big)=\big(\sigma_1\big(Y^{\delta}(t),Y^{\epsilon}(t)\big),\ldots,\sigma_n\big(Y^{\delta}(t),Y^{\epsilon}(t)\big)\big)$ is the volatility matrix of the stock price processes \eqref{market}, and 
\[ \lambda\big(Y^{\delta}(t),Y^{\epsilon}(t)\big)=\big(\sigma\big(Y^{\delta}(t),Y^{\epsilon}(t)\big)^T)^{-1}\mu\big(Y^{\delta}(t),Y^{\epsilon}(t)\big)\]
is the market price of risk. The superscripts $T$ and $-1$ denote transpose and Moore-Penrose pseudoinverse as before; and $\widehat{\mathcal{W}}$ is the standard Brownian motion obtained from the Brownian motion $\mathcal{W}$ by left-multiplication with a suitable constant matrix.

\medskip

In this paper we focus on solutions of \eqref{SPDE} of factor form, namely on processes represented as 
\begin{equation}\label{factor_form}
U(t,x)=V\big(t,x,Y^{\delta}(t),Y^{\epsilon}(t)\big),
\end{equation}
for some deterministic function $V=V(t,x,y_1,y_2)$. 
As discussed in the introduction of \cite{NT}, such solutions are particularly natural from the economic point of view. Indeed, thinking of the forward performance process $U$ as encoding the preferences of the investor on a set of trading strategies given the state of the world she observes and assuming that there are only \textit{finitely} many quantities the investor keeps track of, it is natural to assume that the state enters her preferences through the corresponding finite number of factor processes. 

\medskip

Assuming $V\in C^{1,2,2,2}$, applying It\^o's formula to $V\big(t,x,Y^{\delta}(t),Y^{\epsilon}(t)\big)$, equating first the resulting martingale part with the martingale part on the right-hand side of \eqref{SPDE}, and then the two bounded variation parts, one concludes that the function $V(t,x,y_1,y_2)$ is a classical solution of the HJB equation 
\begin{equation}\label{HJB1}
V_t+\mathcal{A}^{\delta,\epsilon}_y\,V
-\frac{1}{2}\,\frac{\big\|V_x\,\lambda+\sigma\,\sigma^{-1}\big(V_{xy_1}\,\sqrt{\delta}\,\kappa\,\rho^{\mathbf{s}}
+V_{xy_2}\,\frac{1}{\sqrt{\epsilon}}\,\alpha\,\rho^{\mathbf{f}}\big)\big\|^2}{V_{xx}}=0\,.
\end{equation} 
Here $\mathcal{A}^{\delta,\epsilon}_y$ is the generator of the diffusion process $\big(Y^{\delta},Y^{\epsilon}\big)$. 

\medskip

We also note that the initial condition $U(0,\cdot)$ for the SPDE \eqref{SPDE} translates into an initial condition $V(0,x,y_1,y_2)$ for the HJB equation \eqref{HJB1}, so that the latter is posed in the ``wrong'' time direction and, in particular, one does not expect solutions to exist for all initial conditions or to depend continuously on them. In general, this ill-posedness is the main mathematical difficulty in dealing with forward performance processes.  

\medskip

The main results of the paper (Proposition \ref{multi_prop}, Theorem \ref{multi_conv_thm}) identify explicitly the leading order and first order correction terms of the solution $V$ of \eqref{HJB1} in the limit regime $\delta\downarrow0$, $\epsilon\downarrow0$. This allows to identify the leading order and first order correction terms of the corresponding pair $(U,\pi^*)$ explicitly as well (see Proposition \ref{opt_prop}). All our results are obtained under the following two assumptions.

\begin{asmp}\label{main_asmp}
\begin{enumerate}[(i)]
\item The range of left-multiplication by the matrix $\sigma$ is all of $\rr^d$, so that $\sigma\,\sigma^{-1}$ is the $d\times d$ identity matrix. In particular, this implies $n\geq d$ (and so the incompleteness in the model stems from the imperfectly correlated factors). Further, $\lambda$ is smooth in $(y_1,y_2)$.
\item The initial condition does not depend on the factors, and we will write it as $V(0,x)$. Further $V(0,x)$ is increasing and strictly concave in $x$.
\item The function $(V_x)^{(-1)}(0,e^{-x})$ 
admits the representation 
\begin{equation}
(V_x)^{(-1)}(0,e^{-x})= \int_\rr \frac{e^{zx}-1}{z}\,\nu_0(\mathrm{d}z) +C_0, 
\end{equation}
for some non-negative finite Borel measure $\nu_0$ on $\rr$ and constant $C_0\in\rr$, where $(V_x)^{(-1)}(0,\cdot)$ is the inverse of the function $V_x(0,\cdot)$. 
\end{enumerate}
\end{asmp}
The latter condition is related to the ill-posedness of the initial value problem for the HJB equation \eqref{HJB1}, and will turn out to be necessary for the leading order term of $V$ to be well-defined. A class of possible initial conditions is given by $V(0,x)=c_1\,x^{c_2}$, $c_1>0$, $c_2\in(0,1)$.

\begin{asmp}\label{inv_asmp}
The process $Y^{\epsilon}$ is positive recurrent with a unique invariant distribution $\mu$. Clearly, the latter does not depend on the value of $\epsilon$ (since a change in $\epsilon$ corresponds to a multiplication of the generator of $Y^{\epsilon}$ by a constant).
\end{asmp}



\subsection{Outline}

To ensure that the main ideas are not obscured by cumbersome notation we first consider the cases where only the slow factor $Y^{\delta}$ is present (``slow factor case'', Section \ref{sec_slow}) or only the fast factor $Y^{\epsilon}$ is present (``fast factor case'', Section \ref{sec_fast}). 

In the slow factor case we provide explicit formulas for the leading order and first order correction terms of $V$ in Propositions \ref{slow_lead} and \ref{CorrTerm}, and justify the approximation of $V$ by such in Theorem \ref{conv_prop}. The corresponding results in the fast factor case can be found in Propositions \ref{ffV0prop} and \ref{fast_corr_prop}, and Theorem \ref{fast_conv_thm}. In Section \ref{sec_multi} we consider the general case and give explicit formulas for the leading order and first order correction terms of $V$ in Proposition \ref{multi_prop}. The corresponding remainder estimate can be found in Theorem \ref{multi_conv_thm}. Finally, in Section \ref{sec_port} we give explicit formulas for the portfolios associated with our approximation (Definition \ref{opt_def}) and explain in which sense they are approximately optimal (Proposition \ref{opt_prop}).

\section{Forward investment problem with a slow factor} \label{sec_slow}

The first situation we consider is the slow factor case, that is when $\mu_i$ and $\sigma_i$ in \eqref{market} depend only on $Y^\delta$, and so $V(t,x,y_1,y_2)$ in \eqref{factor_form} does not depend on $y_2$. 
Moreover, to simplify the notation we write $\rho$ for $\rho^{\mathbf{s}}$ and $y$ for $y_1$ throughout the present section. In view of Assumption \ref{main_asmp}, and with these notations, the HJB equation \eqref{HJB1} becomes 
\begin{equation}\label{HJB2}
V_t+\frac{1}{2}\,\delta\,\kappa(y)^2\,V_{yy}+\delta\,b(y)\,V_y-\frac{1}{2}\,\frac{\|V_x\,\lambda(y)+V_{xy}\,\sqrt{\delta}\,\kappa(y)\,\rho\|^2}{V_{xx}}=0. 
\end{equation}
Here we aim to find an expansion of $V$ of the form
\begin{equation}\label{expansion}
V=V^{(0)}+\sqrt{\delta}\,V^{(1)}+O(\delta)
\end{equation}
in the limit regime $\delta\downarrow0$. To this end, we will first derive expressions for $V^{(0)}$ and $V^{(1)}$ informally and then justify the resulting expansion in Theorem \ref{conv_prop} below.

\subsection{Asymptotic Analysis}
To obtain the leading order term $V^{(0)}$ we set $\delta=0$ in \eqref{HJB2}:
\begin{equation}\label{V0eq}
V^{(0)}_t-\frac{1}{2}\,\frac{\|\lambda(y)\|^2\,\big(V^{(0)}_x\big)^2}{V^{(0)}_{xx}}=0.
\end{equation}
In addition, we endow the latter equation with the initial condition $V^{(0)}(0,x,y)=V(0,x)$. The resulting problem corresponds to taking the volatility coefficient in the forward performance SPDE \eqref{SPDE} to be zero. This is precisely the case of time-monotone forward performance processes studied in \cite{MZ4}. The formula for the solution $V^{(0)}$ of \eqref{V0eq} can be therefore recovered directly from \cite[Theorems 4 and 8]{MZ4}.

\begin{proposition}[Leading order term, slow factor]\label{slow_lead}
The solution $V^{(0)}$ of the HJB equation \eqref{V0eq} with the initial condition $V^{(0)}(0,x,y)=V(0,x)$ admits the following representation in terms of $\nu_0$ and $C_0$ in Assumption \ref{main_asmp}(iii):
\begin{equation}
V^{(0)}(t,x,y)=u\Big(\|\lambda(y)\|^2\,t,x\Big) \label{V0formula}
\end{equation}
where $u$ is given by
\begin{align}
u(t,x)&=-\frac{1}{2}\,\int_0^t e^{-h^{(-1)}(s,x)+\frac{s}{2}}\,h_x\big(s,h^{(-1)}(s,x)\big)\,\mathrm{d}s+V(0,x), \label{udef} \\
 h(t,x)&=\int_\rr \frac{e^{zx-\frac{1}{2}z^2t}-1}{z}\,\nu_0(\mathrm{d}z) +C_0,
\end{align}
and $h^{(-1)}$ denotes the inverse of $h$ in the variable $x$. 
\end{proposition}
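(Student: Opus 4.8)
The plan is to strip the factor dependence off \eqref{V0eq} by a deterministic time change, reduce to a one-dimensional equation, and recognize the latter as the defining PDE of time-monotone forward performance processes, whose explicit solution is available from \cite{MZ4}.

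First I would note that in \eqref{V0eq} the variable $y$ enters only as a parameter, and only through the scalar $\|\lambda(y)\|^2$. Let $u(\tau,x)$ be the solution, increasing and strictly concave in $x$, of the normalized equation $u_\tau-\tfrac{1}{2}\,u_x^2/u_{xx}=0$ with $u(0,x)=V(0,x)$. A direct application of the chain rule shows that $u\big(\|\lambda(y)\|^2\,t,x\big)$ then solves \eqref{V0eq} with initial datum $V(0,x)$ (collapsing to $V(0,x)$ at points where $\lambda(y)=0$), which is \eqref{V0formula}. Thus everything reduces to identifying $u$ with the right-hand side of \eqref{udef}. Since $u_\tau-\tfrac{1}{2}\,u_x^2/u_{xx}=0$ is precisely the PDE characterizing time-monotone forward performance processes, here with unit market price of risk, one legitimate route is simply to invoke \cite[Theorems 4 and 8]{MZ4}: under Assumption \ref{main_asmp}(iii) they furnish existence and uniqueness of such a $u$ and give exactly formula \eqref{udef}. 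For completeness I would also record the direct verification, sketched next.

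I would define $h(\tau,x)$ by the stated Bernstein-type integral against $\nu_0$; the finiteness of $\nu_0$ and the representation imposed in Assumption \ref{main_asmp}(iii) make the integral and its derivatives converge locally uniformly, so $h$ is smooth, and differentiating under the integral sign yields the ill-posed backward heat equation $h_\tau=-\tfrac{1}{2}\,h_{xx}$ with $h(0,\cdot)=(V_x)^{(-1)}(0,e^{-\cdot})$. Since $h_x(\tau,x)=\int_{\rr}e^{zx-z^2\tau/2}\,\nu_0(\mathrm{d}z)>0$, the map $x\mapsto h(\tau,x)$ is strictly increasing, hence its spatial inverse $h^{(-1)}(\tau,\cdot)$ is well-defined and smooth on the relevant range of wealth levels. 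Then, with $u$ given by \eqref{udef} and $p=p(\tau,x):=h^{(-1)}(\tau,x)$, differentiating $h(\tau,p)=x$ gives $p_x=1/h_x(\tau,p)$ and $p_\tau=-h_\tau(\tau,p)/h_x(\tau,p)=\tfrac{1}{2}\,h_{xx}(\tau,p)/h_x(\tau,p)$. I would first check that $u_x(\tau,x)=e^{-p(\tau,x)+\tau/2}$: both sides equal $V_x(0,x)$ at $\tau=0$ because Assumption \ref{main_asmp}(iii) gives $h^{(-1)}(0,x)=-\ln V_x(0,x)$, and a short computation shows their $\tau$-derivatives agree, that identity again reducing to $h_\tau=-\tfrac{1}{2}h_{xx}$. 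Granting this, $u_{xx}=-e^{-p+\tau/2}/h_x(\tau,p)<0$, so $u$ is increasing and strictly concave in $x$, and
\[
u_\tau=-\tfrac{1}{2}\,e^{-p+\tau/2}\,h_x(\tau,p)=\tfrac{1}{2}\,\frac{u_x^2}{u_{xx}},
\]
while $u(0,x)=V(0,x)$ is immediate from \eqref{udef}; together with the first step this gives the proposition.

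The main obstacle I anticipate is not the algebra but the analysis underlying the ill-posed equation $h_\tau=-\tfrac{1}{2}\,h_{xx}$: one has to know that the Bernstein representation produces, for every $\tau\ge0$, a smooth strictly increasing $h(\tau,\cdot)$ whose inverse is defined on the whole relevant half-line of wealth levels, and that the resulting $u$ is genuinely a time-monotone forward performance process, with the concavity, monotonicity and decay needed later on (for instance for admissibility of the associated feedback portfolio). This is exactly what Assumption \ref{main_asmp}(iii) is designed to secure, and it is the content of \cite[Theorems 4 and 8]{MZ4}; leaning on those theorems is the cleanest way to discharge it.
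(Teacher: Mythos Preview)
Your proposal is correct and follows essentially the same route as the paper: recognize that \eqref{V0eq} is, after the time change $\tau=\|\lambda(y)\|^2 t$, the defining PDE of time-monotone forward performance processes and invoke \cite[Theorems~4 and~8]{MZ4} (equivalently, the reduction to the ill-posed heat equation and Widder's theorem). The paper's own argument stops there; your added direct verification that the formula \eqref{udef} indeed solves $u_\tau=\tfrac12\,u_x^2/u_{xx}$ with $u(0,\cdot)=V(0,\cdot)$ is a welcome supplement but not a different approach.
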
 
This follows from a transformation of \eqref{V0eq} to the ill-posed heat equation and Widder's Representation Theorem of positive solutions to this equation (\cite[Theorem 8.1]{Wi}). Both this transformation and Widder's Theorem will be used to construct higher order terms of the expansion. 
The interpretation of \eqref{V0formula} is that, at principal order, the forward performance measure is the complete market solution, but with the Sharpe ratio frozen to $\lambda(y)$.

\smallskip

Next, we turn to the correction term $V^{(1)}$ in \eqref{expansion}. To obtain an equation for $V^{(1)}$ we plug $V^{(0)}+\sqrt{\delta}\,V^{(1)}$ into \eqref{HJB2} and collect the terms of order $\sqrt{\delta}$. To this end, we note the expansions
\begin{eqnarray*}
&&\begin{split}
&\|\big(V^{(0)}+\sqrt{\delta}\,V^{(1)}\big)_x\,\lambda+\big(V^{(0)}+\sqrt{\delta}\,V^{(1)}\big)_{xy}\,\sqrt{\delta}\,\kappa\,\rho\|^2 \\
&=\big(V^{(0)}_x\big)^2\|\lambda\|^2
+\sqrt{\delta}\Big(2V^{(0)}_x\,V^{(1)}_x\,\|\lambda\|^2+2V^{(0)}_x\,V^{(0)}_{xy}\,\kappa\,\lambda^T\rho\Big)+O(\delta)\,, 
\end{split} \\
&&\frac{1}{V^{(0)}_{xx}+\sqrt{\delta}\,V^{(1)}_{xx}}=\frac{1}{V^{(0)}_{xx}}
-\sqrt{\delta}\,\frac{V^{(1)}_{xx}}{\big(V^{(0)}_{xx}\big)^2}+O(\delta)\,.
\end{eqnarray*}
The resulting equation for $V^{(1)}$ reads
\begin{equation}\label{V1slow}
V^{(1)}_t+\frac{1}{2}\,\frac{\|\lambda(y)\|^2\,\big(V^{(0)}_x\big)^2}{\big(V^{(0)}_{xx}\big)^2}\,V^{(1)}_{xx}
-\frac{\|\lambda(y)\|^2\,V^{(0)}_x}{V^{(0)}_{xx}}\,V^{(1)}_x
=\kappa(y)\,\lambda(y)^T\rho\,\frac{V^{(0)}_x\,V^{(0)}_{xy}}{V^{(0)}_{xx}}
\end{equation}
and it is endowed with the initial condition $V^{(1)}(0,x,y)=0$, since the zeroth order term $V^{(0)}$ has already satisfied the initial condition for $V$.

\begin{proposition}[Correction term, slow factor]\label{CorrTerm}
The solution $V^{(1)}$ of the PDE \eqref{V1slow} endowed with the initial condition $V^{(1)}(0,x,y)=0$ is given by
\begin{equation}
V^{(1)}(t,x,y)=\frac{t}{2}\,\kappa(y)\,\lambda(y)^T\rho\,\frac{V^{(0)}_x(t,x,y)\,V^{(0)}_{xy}(t,x,y)}
{V^{(0)}_{xx}(t,x,y)}. \label{V1slowformula}
\end{equation}
\end{proposition}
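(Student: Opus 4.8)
The plan is to verify directly that the claimed formula \eqref{V1slowformula} solves the linear PDE \eqref{V1slow} with the zero initial condition, exploiting the scaling structure already encoded in the leading order solution $V^{(0)}$. First I would observe that \eqref{V1slow} is a \emph{linear} second order equation for $V^{(1)}$ in the variables $(t,x)$, with $y$ entering only as a parameter; the coefficients are built from $V^{(0)}$ and its derivatives, which by Proposition \ref{slow_lead} are explicit. The initial condition $V^{(1)}(0,x,y)=0$ is obvious from the candidate, since the right-hand side of \eqref{V1slowformula} carries a factor $t$. So the whole task reduces to substituting the ansatz $V^{(1)}(t,x,y)=\tfrac{t}{2}\,\kappa(y)\,\lambda(y)^T\rho\,G(t,x,y)$, where $G:=V^{(0)}_x V^{(0)}_{xy}/V^{(0)}_{xx}$, and checking that $G$ satisfies the correct inhomogeneous relation.

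The key computational input is that $V^{(0)}(t,x,y)=u(\|\lambda(y)\|^2 t,\,x)$, so that $y$-derivatives of $V^{(0)}$ reduce to $t$-derivatives of $u$: writing $\tau=\|\lambda(y)\|^2 t$ one has $V^{(0)}_{y}= u_t(\tau,x)\,\partial_y(\|\lambda(y)\|^2)\,t$, and in particular $V^{(0)}_{xy}= u_{tx}(\tau,x)\,t\,\partial_y(\|\lambda\|^2)$. The plan is then to differentiate the defining PDE \eqref{V0eq} for $V^{(0)}$ with respect to $x$ to get an equation for $V^{(0)}_x$, and to use it to express the combination $\tfrac{1}{2}\|\lambda\|^2 (V^{(0)}_x)^2/(V^{(0)}_{xx})^2 \cdot (\cdot)_{xx} - \|\lambda\|^2 (V^{(0)}_x/V^{(0)}_{xx})(\cdot)_x$ acting on $G$ in terms of derivatives that telescope. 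Concretely, I would note that the operator on the left of \eqref{V1slow}, call it $\mathcal{L}$, is precisely the linearization of the nonlinear operator in \eqref{V0eq} around $V^{(0)}$; since $V^{(0)}_y$ solves the linearized \emph{homogeneous} equation $\mathcal{L}w=0$ (differentiate \eqref{V0eq} in $y$, remembering $\|\lambda(y)\|^2$ depends on $y$ — this produces exactly the source term structure), the function $V^{(1)}$ should be obtainable by a Duhamel-type argument, and the factor $t$ in \eqref{V1slowformula} is the signature of integrating a $t$-independent forcing. The cleanest route is: show $\mathcal{L}[\,t\,\Phi\,]=\Phi + t\,\mathcal{L}[\Phi]$ for the relevant $\Phi=\tfrac12\kappa\lambda^T\rho\,G$, and then show $\mathcal{L}[\Phi]=0$ while $\Phi$ equals the right-hand side of \eqref{V1slow}; both reduce to differentiating \eqref{V0eq}.

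I expect the main obstacle to be the bookkeeping in verifying $\mathcal{L}[G]=0$, i.e. that $G=V^{(0)}_x V^{(0)}_{xy}/V^{(0)}_{xx}$ is annihilated by the linearized operator. This requires differentiating \eqref{V0eq} once in $x$ and once in $y$ and combining the resulting identities; the nonlinearity $(V^{(0)}_x)^2/V^{(0)}_{xx}$ generates several cross terms (involving $V^{(0)}_{xxx}$, $V^{(0)}_{xxy}$, $V^{(0)}_{xxy}$) that must cancel. The scaling relation $V^{(0)}(t,x,y)=u(\|\lambda(y)\|^2 t,x)$ is what makes this cancellation work — it forces $V^{(0)}_{xy}$ to be proportional to $t\,V^{(0)}_{xt}$ and hence ties the $y$-derivatives back into the $t$-flow — so I would introduce $\tau$ early and carry out the computation in the $(\tau,x)$ variables, where \eqref{V0eq} becomes the clean autonomous equation $u_\tau = \tfrac12 (u_x)^2/u_{xx}$, reducing everything to identities among derivatives of the single function $u$. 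A secondary point worth a line is regularity: one needs $V^{(0)}\in C^{1,3}$ (say) in $(t,x)$ and $C^1$ in $y$ near the region of interest for these manipulations to be legitimate, which follows from the explicit Widder representation in Proposition \ref{slow_lead} together with the smoothness of $\lambda$ in Assumption \ref{main_asmp}(i) and the strict concavity in Assumption \ref{main_asmp}(ii).
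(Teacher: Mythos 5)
Your plan is a direct substitution check in the original variables, which is a different route from the paper (the paper changes variables by \eqref{CoVslow}, shows the linearized operator becomes the heat operator, solves the transformed problem and gets uniqueness from Widder's theorem). The route can be completed, but the pivotal identities you state are false as written, and carrying them out literally would make the check appear to fail. Write $c(y)=\kappa(y)\lambda(y)^T\rho$, $G=V^{(0)}_xV^{(0)}_{xy}/V^{(0)}_{xx}$, and let $\mathcal{L}$ be the operator on the left of \eqref{V1slow}. Your $\Phi=\tfrac12\,c\,G$ is not the right-hand side of \eqref{V1slow} (that is $c\,G$, twice your $\Phi$), and $\mathcal{L}[\Phi]\neq 0$; likewise $V^{(0)}_y$ does not solve the linearized homogeneous equation (differentiating \eqref{V0eq} in $y$ produces the source $\lambda^T\lambda'\,(V^{(0)}_x)^2/V^{(0)}_{xx}$). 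What you are missing is that $G$ already carries an explicit factor of $t$: since $V^{(0)}(t,x,y)=u(\tau,x)$ with $\tau=\|\lambda(y)\|^2t$, one has $V^{(0)}_{xy}=2t\,\lambda^T\lambda'\,u_{\tau x}$, hence $G=2t\,\lambda^T\lambda'\,u_xu_{\tau x}/u_{xx}$. The identity that actually closes the computation is $\mathcal{L}[G]=G/t$, equivalently that $g:=u_xu_{\tau x}/u_{xx}$ is annihilated by the linearization $\partial_\tau+\tfrac12(u_x/u_{xx})^2\partial_{xx}-(u_x/u_{xx})\partial_x$ of the autonomous equation $u_\tau=\tfrac12\,u_x^2/u_{xx}$; granted this, $\mathcal{L}\big[\tfrac t2\,c\,G\big]=\tfrac12 c\,G+\tfrac t2\,c\,\mathcal{L}[G]=c\,G$, which is \eqref{V1slow}, and the zero initial condition is clear. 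Proving $\mathcal{L}_u[g]=0$ by brute-force differentiation of \eqref{V0eq} is possible but delicate; the transparent reason it holds is precisely the paper's substitution $\xi=-\log u_x-\tau/2$, under which the linearized operator becomes $\partial_\tau+\tfrac12\partial_{\xi\xi}$ for any smooth function and $g=\tfrac12 w_{\xi\xi}$ with $w_\tau+\tfrac12 w_{\xi\xi}=0$ (cf.\ \eqref{w0heateq}); so at the crucial step your computation quietly re-derives the paper's change of variables rather than avoiding it.

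The second, more structural gap is uniqueness. The proposition asserts that \emph{the} solution of \eqref{V1slow} with zero initial condition is \eqref{V1slowformula}; your plan only exhibits \emph{a} solution. Equation \eqref{V1slow} is a backward-parabolic equation posed forward in time, so uniqueness for the Cauchy problem is not automatic, and this is exactly the point where the paper invokes Widder's theorem \cite{Wi}: after the change of variables \eqref{CoVslow} the difference of two solutions with zero initial data solves the ill-posed heat equation, and uniqueness holds in the class covered by Widder's representation. Without such an argument (or an explicit restriction of the admissible solution class), the verification alone does not justify the definite article in the statement. Your closing regularity remark is fine and matches what the explicit Widder representation of $V^{(0)}$ provides.
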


\smallskip

\begin{proof}
We start by introducing the change of variables
\begin{equation}\label{CoVslow}
(t,\xi,y):=\Big(t,-\log V^{(0)}_x-\frac{\|\lambda(y)\|^2}{2}\,t,y\Big),
\end{equation}
and set $w^{(0)}(t,\xi,y)=V^{(0)}(t,x,y)$, $w^{(1)}(t,\xi,y)=V^{(1)}(t,x,y)$. The latter functions are well-defined, since $V^{(0)}$ is strictly increasing and strictly concave in $x$, so that $\xi$ is a strictly increasing function of $x$. 

\medskip

By viewing the equation \eqref{V0eq} as the ``linear'' equation
\begin{equation}\label{V0slowlin}
V^{(0)}_t+\frac{\|\lambda(y)\|^2}{2}\,\left(\frac{V^{(0)}_x}{V^{(0)}_{xx}}\right)^2\,V^{(0)}_{xx}
-\|\lambda(y)\|^2\,\left(\frac{V^{(0)}_x}{V^{(0)}_{xx}}\right)\,V^{(0)}_x=0
\end{equation}
with coefficients depending on $V^{(0)}$, computing the derivatives
\begin{align}
 V^{(0)}_t &= w^{(0)}_t+\frac{\|\lambda(y)\|^2}{2}\,w^{(0)}_\xi\,
\bigg(-\bigg(\frac{V^{(0)}_x}{V^{(0)}_{xx}}\bigg)_x-2\bigg), \label{wslowt}\\
 V^{(0)}_x &= w^{(0)}_\xi\,\bigg(-\frac{V^{(0)}_{xx}}{V^{(0)}_x}\bigg), \label{wslowx} \\
 V^{(0)}_{xx} &= w^{(0)}_{\xi\xi}\,\bigg(\frac{V^{(0)}_{xx}}{V^{(0)}_x}\bigg)^2 
+w^{(0)}_\xi\,\,\bigg(\frac{V^{(0)}_{xx}}{V^{(0)}_x}\bigg)^2\,
\bigg(\frac{V^{(0)}_x}{V^{(0)}_{xx}}\bigg)_x, \label{wslowxx}
\end{align}
and plugging \eqref{wslowt}, \eqref{wslowx} and \eqref{wslowxx} into \eqref{V0slowlin}, we obtain
\begin{equation}
w^{(0)}_t+\frac{\|\lambda(y)\|^2}{2}\,\left(\frac{V^{(0)}_x}{V^{(0)}_{xx}}\right)^2
\bigg(w^{(0)}_{\xi\xi}\,\bigg(\frac{V^{(0)}_{xx}}{V^{(0)}_x}\bigg)^2
+w^{(0)}_\xi\,\bigg(-\frac{V^{(0)}_x}{V^{(0)}_{xx}}\bigg)_x\bigg)
-\|\lambda(y)\|^2\left(\frac{V^{(0)}_x}{V^{(0)}_{xx}}\right)w^{(0)}_\xi\,\bigg(-\frac{V^{(0)}_{xx}}{V^{(0)}_x}\bigg)=0.
\end{equation}
This readily simplifies to
\begin{equation}\label{w0heateq}
w^{(0)}_t+\frac{\|\lambda(y)\|^2}{2}\,w^{(0)}_{\xi\xi}=0
\end{equation}
in the coordinates $(t,\xi,y)$. 

A similar computation for $w^{(1)}$ shows that \eqref{V1slow} transforms into
\begin{equation}\label{slowheatw1}
w^{(1)}_t+\frac{\|\lambda(y)\|^2}{2}\,w^{(1)}_{\xi\xi}
=\kappa\,\lambda(y)^T\,\rho\,t\,\lambda(y)^T\,\lambda'(y)\,w^{(0)}_{\xi\xi\xi}
\end{equation}
in the new variables. Hereby, to obtain the right-hand side of \eqref{slowheatw1} we have relied on the following two considerations: 
\smallskip
\begin{enumerate}[(a)] 
\item Note from \eqref{udef} that $u$ is differentiable in $t$ and that $u_t$ is differentiable in $x$, which, with the assumed smoothness of $\lambda$, implies the differentiability of $w^{(0)}$ in $y$. 
Then, by differentiating \eqref{w0heateq} in $y$ and rearranging one obtains 
\begin{equation}
\big(w^{(0)}_y\big)_t+\frac{\|\lambda(y)\|^2}{2}\,\big(w^{(0)}_y\big)_{\xi\xi}=-\lambda(y)^T\,\lambda'(y)\,w^{(0)}_{\xi\xi}.
\end{equation}
Moreover, the latter equation, endowed with the initial condition $w^{(0)}_y(0,\xi,y)=0$, has the unique solution $w^{(0)}_y=-t\,\lambda(y)^T\,\lambda'(y)\,w^{(0)}_{\xi\xi}$ (note that the uniqueness of the solution is a consequence of Widder's Theorem). In the original coordinates this solution reads
\begin{equation}\label{V0yeq}
V^{(0)}_y=-t\,\lambda(y)^T\,\lambda'(y)\,\bigg(\frac{\big(V^{(0)}_x\big)^2}{V^{(0)}_{xx}}\bigg)_x\;
\frac{V^{(0)}_x}{V^{(0)}_{xx}}\,.
\end{equation} 
\item In addition, by using \eqref{wslowx}, \eqref{wslowxx} one finds
\begin{equation}
w^{(0)}_{\xi\xi\xi}=-\frac{V^{(0)}_x}{V^{(0)}_{xx}}\,\bigg(\bigg(\frac{\big(V^{(0)}_x\big)^2}{V^{(0)}_{xx}}\bigg)_x\;
\frac{V^{(0)}_x}{V^{(0)}_{xx}}\bigg)_x\,.
\end{equation}
\end{enumerate}

\noindent A combination of (a) and (b) gives the right-hand side of \eqref{slowheatw1}. At this point, one can check that the unique solution of \eqref{slowheatw1}, endowed with the initial condition $w^{(1)}(0,\xi,y)=0$ is given by 
\begin{equation}
w^{(1)}=\frac{t^2}{2}\,\kappa(y)\,\lambda(y)^T\,\rho\,\lambda(y)^T\,\lambda'(y)\,w^{(0)}_{\xi\xi\xi}.
\end{equation}
Hereby, the uniqueness part of the statement follows by applying again the Widder's Theorem. 
To obtain the proposition, it remains to change the coordinates back to $(t,x,y)$ and use that
\begin{equation}
V^{(0)}_{xy}=-t\,\lambda(y)^T\,\lambda'(y)\,\bigg(\bigg(\frac{\big(V^{(0)}_x\big)^2}{V^{(0)}_{xx}}\bigg)_x\;
\frac{V^{(0)}_x}{V^{(0)}_{xx}}\bigg)_x,
\end{equation} 
which can be obtained from \eqref{V0yeq} by a differentiation in $x$.
\end{proof}
\begin{rmk}\label{overlaps}
This result can be considered as the forward performance analog of \cite[Proposition 3.3]{FSZ} for the (backwards in time) Merton problem, but here the transformation \eqref{CoVslow} is crucial to reduce to the ill-posed heat equation for which Widder's Theorem can be applied. In the traditional utility maximization problem, the correction term can be computed directly using commutation of certain operators and Black's (fast diffusion) equation for the risk tolerance function, which is the approach used in \cite{FSZ}. Moreover in \cite{FSZ}, only the case of a single stock is considered, and here the analysis is done with multiple assets to obtain the correct coefficients in the approximations, which could not be obtained from just the one asset case with simple parameter replacements. 
\end{rmk}

We see from \eqref{V1slowformula} that $V^{(1)}$ depends on the slow factor through the parameters $\kappa$ and $\lambda$ frozen to the values $\kappa(y)$ and $\lambda(y)$, as well as $\lambda'(y)$. It is easily computed in terms of derivatives of the complete market forward performance measure $V^{(0)}$. We now give an additional representation of the correction term $V^{(1)}$ which has a natural interpretation in terms of the original forward performance problem. 

\begin{proposition}[Natural parametrization of correction term, slow factor]\label{slownpprop}
The solution $V^{(1)}$ of the PDE \eqref{V1slow} endowed with the initial condition $V^{(1)}(0,x,y)=0$, written as $w^{(1)}(t,\xi,y)$ in the coordinates $(t,\xi,y)$ defined in \eqref{CoVslow},
admits the representation
\begin{equation}\label{slownpeq}
w^{(1)}(t,\xi,y)=\int_0^t w^{(1),s}(t,\xi,y)\,\mathrm{d}s
\end{equation}
where each $w^{(1),s}$ is the solution of the initial value problem
\eq
w^{(1),s}_t+\frac{\|\lambda(y)\|^2}{2}\,w^{(1),s}_{\xi\xi}=0,\quad t\ge s \label{w1sslow} 
\en
with initial condition
\eq
w^{(1),s}(s,\xi,y)=s\,\kappa(y)\,\lambda(y)^T\,\rho\,\|\lambda(y)\|^2\,w^{(0)}_{\xi\xi}(s,\xi,y). 
\label{w1sslowic}
\en
In particular, each $w^{(1),s}$ can be represented as
\begin{equation}\label{nusslow}
w^{(1),s}(t,\xi,y)=\int_\rr e^{z\xi-z^2(t-s)}\,\nu^{s,y}(\mathrm{d}z)
\end{equation}
with $\nu^{s,y}$ being a signed finite Borel measure on $\rr$. 
\smallskip

In the original coordinates, the same representation reads
\begin{equation}\label{V1slowrepr}
V^{(1)}(t,x,y)=\int_0^t V^{(1),s}(t,x,y)\,\mathrm{d}s
\end{equation}
where each $V^{(1),s}$ is the solution of the initial value problem
\eq
V^{(1),s}_t+\frac{\|\lambda(y)\|^2}{2}\,\frac{\big(V^{(0)}_x\big)^2}{\big(V^{(0)}_{xx}\big)^2}\,V^{(1),s}_{xx}
-\|\lambda(y)\|^2\,\frac{V^{(0)}_x}{V^{(0)}_{xx}}\,V^{(1),s}_x=0\,,\quad t\ge s
\en
with initial condition
\eq
V^{(1),s}(s,x,y)=\kappa(y)\,\lambda(y)^T\rho\,\frac{V^{(0)}_x(s,x,y)\,V^{(0)}_{xy}(s,x,y)}
{V^{(0)}_{xx}(s,x,y)}\,.
\en
\end{proposition}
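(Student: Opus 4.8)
The statement has two parts: a decomposition of $V^{(1)}$ (equivalently $w^{(1)}$) as an integral over source times $s$, and a Widder-type representation of each layer $w^{(1),s}$. My plan is to work entirely in the nice coordinates $(t,\xi,y)$, where by \eqref{slowheatw1} the function $w^{(1)}$ solves the inhomogeneous ill-posed heat equation with source term $f(t,\xi,y) := \kappa\,\lambda(y)^T\rho\,t\,\lambda(y)^T\lambda'(y)\,w^{(0)}_{\xi\xi\xi}(t,\xi,y)$ and zero initial data. The natural guess is a Duhamel-type formula: for each $s$ let $w^{(1),s}$ solve the homogeneous equation \eqref{w1sslow} for $t\ge s$ with initial value $w^{(1),s}(s,\cdot,y) = -f(s,\cdot,y)$ at time $t=s$ (the sign being fixed by the sign convention in \eqref{w0heateq}, where $w_t + \tfrac12\|\lambda\|^2 w_{\xi\xi}=0$), and set $w^{(1)}(t,\xi,y) = \int_0^t w^{(1),s}(t,\xi,y)\,\mathrm{d}s$. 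First I would verify that this candidate satisfies \eqref{slowheatw1}: differentiating the integral in $t$ produces the boundary term $w^{(1),s}\big|_{s=t} = -f(t,\cdot,y)$ plus $\int_0^t \partial_t w^{(1),s}\,\mathrm{d}s = -\tfrac12\|\lambda\|^2\int_0^t w^{(1),s}_{\xi\xi}\,\mathrm{d}s = -\tfrac12\|\lambda\|^2 w^{(1)}_{\xi\xi}$, which rearranges to exactly \eqref{slowheatw1}; and the initial condition $w^{(1)}(0,\cdot,y)=0$ is immediate. Uniqueness of $w^{(1)}$ then follows from Widder's Theorem as in the proof of Proposition \ref{CorrTerm}, so this candidate is \emph{the} solution.

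\textbf{Identifying the source.} The one computation I actually need to carry out is to rewrite $-f(s,\cdot,y)$ in the form stated in \eqref{w1sslowic}, namely $s\,\kappa(y)\,\lambda(y)^T\rho\,\|\lambda(y)\|^2\,w^{(0)}_{\xi\xi}(s,\xi,y)$. Comparing with $-f(s,\cdot,y) = -s\,\kappa\,\lambda^T\rho\,\lambda^T\lambda'\,w^{(0)}_{\xi\xi\xi}$, this requires the identity $-\lambda(y)^T\lambda'(y)\,w^{(0)}_{\xi\xi\xi}(s,\xi,y) = \|\lambda(y)\|^2\,w^{(0)}_{\xi\xi}(s,\xi,y)$. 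But this is a direct consequence of step (a) in the proof of Proposition \ref{CorrTerm}: there it was shown that $w^{(0)}_y = -t\,\lambda^T\lambda'\,w^{(0)}_{\xi\xi}$, and one can also differentiate Widder's representation $w^{(0)}(t,\xi,y) = \int_\rr e^{z\xi - \frac{z^2}{2}\|\lambda(y)\|^2 t}\,\nu_0(\mathrm{d}z) + C_0$ (the form of $w^{(0)}$ coming from \eqref{w0heateq}, \eqref{udef} and Proposition \ref{slow_lead}) directly in $y$ and in $\xi$ to see that $\partial_\xi^2 w^{(0)} = \int z^2 e^{\cdots}\nu_0$ while $\partial_\xi^3 w^{(0)} = \int z^3 e^{\cdots}\nu_0$ and $\|\lambda\|^2 t\,\partial_\xi^2 w^{(0)}$ matches; I will just track the algebra. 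This also hands me the Widder representation \eqref{nusslow} for free: since $w^{(1),s}$ solves the homogeneous ill-posed heat equation with initial datum $w^{(1),s}(s,\cdot,y) = s\,\kappa(y)\,\lambda(y)^T\rho\,\|\lambda(y)\|^2 \int_\rr z^2 e^{z\xi}\,\nu_0(\mathrm{d}z)$, the solution is $\int_\rr e^{z\xi - z^2(t-s)}\,\nu^{s,y}(\mathrm{d}z)$ with $\nu^{s,y}(\mathrm{d}z) := s\,\kappa(y)\,\lambda(y)^T\rho\,\|\lambda(y)\|^2 z^2 e^{-\frac{z^2}{2}\|\lambda(y)\|^2 s}\,\nu_0(\mathrm{d}z)$ — here I must be slightly careful that the diffusivity in \eqref{w1sslow} is $\tfrac12\|\lambda\|^2$ rather than $\tfrac12$, so the heat kernel scales the exponent accordingly and the $z^2(t-s)$ in \eqref{nusslow} should in fact read $\tfrac{z^2}{2}\|\lambda(y)\|^2(t-s)$; I would reconcile this against the paper's normalization, and the finiteness of $\nu^{s,y}$ follows since $z^2 e^{-\frac{z^2}{2}\|\lambda\|^2 s}$ is bounded and $\nu_0$ is finite.

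\textbf{Translating back.} Finally I would push the whole decomposition through the inverse change of variables \eqref{CoVslow}, exactly as at the end of the proof of Proposition \ref{CorrTerm}: the operator $w \mapsto w_t + \tfrac12\|\lambda\|^2 w_{\xi\xi}$ in $(t,\xi,y)$ corresponds to the operator $V \mapsto V_t + \tfrac12\|\lambda\|^2 \tfrac{(V^{(0)}_x)^2}{(V^{(0)}_{xx})^2} V_{xx} - \|\lambda\|^2 \tfrac{V^{(0)}_x}{V^{(0)}_{xx}} V_x$ in $(t,x,y)$ (this is the same "linearization around $V^{(0)}$" computation already performed for \eqref{V0slowlin}), so each $w^{(1),s}$ corresponds to a $V^{(1),s}$ solving the stated homogeneous equation, and the integral $\int_0^t(\cdot)\,\mathrm{d}s$ is unaffected by the coordinate change since $t$ and $s$ are untouched. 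The initial condition $V^{(1),s}(s,x,y)$ is obtained by rewriting $w^{(1),s}(s,\xi,y) = s\,\kappa\,\lambda^T\rho\,\|\lambda\|^2 w^{(0)}_{\xi\xi}(s,\xi,y)$ in original coordinates using $w^{(0)}_{\xi\xi} = \big(\tfrac{V^{(0)}_x}{V^{(0)}_{xx}}\big)^2 V^{(0)}_{xx}$ type relations from \eqref{wslowx}–\eqref{wslowxx}, together with the identity $V^{(0)}_{xy} = -t\,\lambda^T\lambda'\big(\big(\tfrac{(V^{(0)}_x)^2}{V^{(0)}_{xx}}\big)_x\tfrac{V^{(0)}_x}{V^{(0)}_{xx}}\big)_x$ established at the end of the proof of Proposition \ref{CorrTerm}; matching these gives $V^{(1),s}(s,x,y) = \kappa(y)\,\lambda(y)^T\rho\,\frac{V^{(0)}_x(s,x,y)\,V^{(0)}_{xy}(s,x,y)}{V^{(0)}_{xx}(s,x,y)}$ as claimed.

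\textbf{Main obstacle.} The conceptual content is light — the whole statement is essentially Duhamel's principle plus the already-proved facts of Proposition \ref{CorrTerm} — so the real work, and the only place an error is likely to creep in, is bookkeeping: getting every sign right (the ill-posed heat equation has the "wrong" sign, so the Duhamel boundary term enters with a sign opposite to the well-posed case), and carrying the factor $\|\lambda(y)\|^2$ correctly through the heat kernel so that the representations \eqref{nusslow} and \eqref{w1sslowic} come out with the stated constants rather than, say, an extra $\tfrac12$ or an extra power of $\|\lambda(y)\|$. I would therefore double-check the constant by independently evaluating both sides of \eqref{slownpeq} on the test datum $\nu_0 = \delta_{z_0}$ (a single point mass), where every integral collapses to an elementary Gaussian-in-$z_0$ expression.
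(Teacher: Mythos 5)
Your overall strategy — superpose homogeneous ``layers'' launched at each time $s$, verify the sourced equation \eqref{slowheatw1} by differentiating under the integral, invoke Widder's Theorem for uniqueness, and then undo the change of variables \eqref{CoVslow} — is exactly the route the paper takes. However, two steps in your execution are wrong, and they do not cancel. First, the Duhamel sign: if $w^{(1),s}$ solves $w^{(1),s}_t+\frac{\|\lambda\|^2}{2}w^{(1),s}_{\xi\xi}=0$ for $t\ge s$ with datum $g_s$ at $t=s$, then $W(t,\xi):=\int_0^t w^{(1),s}(t,\xi)\,\mathrm{d}s$ satisfies $W_t+\frac{\|\lambda\|^2}{2}W_{\xi\xi}=g_t(\xi)$; the boundary term produced by $\partial_t\int_0^t(\cdot)\,\mathrm{d}s$ is always $+\,g_t$, regardless of the sign in front of the $\xi\xi$-term, so matching the source $f$ of \eqref{slowheatw1} forces $g_s=+f(s,\cdot)$, not $-f(s,\cdot)$. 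Your own displayed computation gives $W_t+\frac{\|\lambda\|^2}{2}W_{\xi\xi}=-f$, which does not ``rearrange to exactly \eqref{slowheatw1}''; with your choice of datum the construction produces $-w^{(1)}$. Second, the identity you invoke to reconcile $-f(s,\cdot)$ with \eqref{w1sslowic}, namely $-\lambda(y)^T\lambda'(y)\,w^{(0)}_{\xi\xi\xi}=\|\lambda(y)\|^2\,w^{(0)}_{\xi\xi}$, is false in general: for the power-utility solutions of Section \ref{new_ex} one has $w^{(0)}=c\,e^{\Gamma\xi-\frac{\Gamma^2}{2}\|\lambda(y)\|^2t}$, so the left-hand side is $-\Gamma^3\,\lambda^T\lambda'\,w^{(0)}$ while the right-hand side is $\Gamma^2\,\|\lambda\|^2\,w^{(0)}$. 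It is not a consequence of step (a) of Proposition \ref{CorrTerm}, which relates $w^{(0)}_y$ (not $w^{(0)}_{\xi\xi\xi}$) to $w^{(0)}_{\xi\xi}$; nor is your claimed representation $w^{(0)}(t,\xi,y)=\int_\rr e^{z\xi-\frac{z^2}{2}\|\lambda\|^2t}\,\nu_0(\mathrm{d}z)+C_0$ justified, since $\nu_0$ is the measure representing $(V_x)^{(-1)}(0,e^{-x})$, i.e.\ the function $h$, not $w^{(0)}$ itself (the power example again shows the exponents do not match).

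The repair is to do what the paper does: take the layer datum equal to the source of \eqref{slowheatw1} evaluated at $t=s$ (plus sign); each layer is then well defined because the datum is, up to an $s$-dependent constant, a $\xi$-derivative of $w^{(0)}$, which itself solves the homogeneous equation \eqref{w0heateq}; the verification of \eqref{slowheatw1} is precisely your differentiation-under-the-integral computation but with the $+$ boundary term; and uniqueness for the sourced equation with zero initial datum (Widder) identifies the superposition with $w^{(1)}$. In the original coordinates this datum is exactly the right-hand side of \eqref{V1slow} at time $s$, i.e.\ $\kappa(y)\,\lambda(y)^T\rho\,V^{(0)}_x V^{(0)}_{xy}/V^{(0)}_{xx}$ evaluated at $s$, which is the initial condition stated for $V^{(1),s}$ — so no identity beyond those already established in (a) and (b) of the proof of Proposition \ref{CorrTerm} is needed, and none of the form you assert is available. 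Your proposed sanity check on a Dirac mass $\nu_0$ is a good one and would have exposed both slips; as written, though, the proof does not establish the proposition. (Your side remark that the exponent in \eqref{nusslow} should carry the diffusivity $\frac{\|\lambda\|^2}{2}$ is a fair point about normalization, but it does not affect the two issues above.)
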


\smallskip

\begin{rmk}\label{rmk_w1s}
We remark that each of the processes $V^{(1),s}$ (or, equivalently, $w^{(1),s}$) can be viewed as an ``auxiliary'' forward performance process. These should be interpreted as the first order corrections that the investor makes at any given time $s$ in reaction to the market conditions she observes. Furthermore, when making a projection of her future preferences from a time $t$ onwards, the investor corrects her leading order forward performance criterion $V^{(0)}$ (or, equivalently, $w^{(0)}$) by aggregating all her previous first order corrections $V^{(1),s}$, $s\in[0,t]$ (or, equivalently, $w^{(1),s}$). We refer to $V^{(1),s}$ as auxiliary to stress that it is neither a complete market forward performance measure, nor a solution of the full incomplete market problem.
\end{rmk}

\smallskip

\noindent\textit{Proof of Proposition \ref{slownpprop}.} We first recall that $w^{(0)}$ is a classical solution of the forward heat equation \eqref{w0heateq}. Hence, the same is true for $w^{(0)}_{\xi\xi}$. At this point, an application of Widder's Theorem shows that the initial value problem \eqref{w1sslow}, \eqref{w1sslowic} has a solution that exists for all $t\ge s$. In other words, each function $w^{(1),s}$, $s\ge0$ is well-defined. Since the forward heat equation with source \eqref{slowheatw1} has a unique classical solution starting from the zero initial condition by Widder's Theorem, the representation \eqref{slownpeq} will follow once we establish that the right-hand side of \eqref{slownpeq} is a classical solution of \eqref{slowheatw1}. This is the result of the following computation:
\begin{eqnarray*}
\partial_t\left(\int_0^t w^{(1),s}(t,\xi,y)\,\mathrm{d}s\right)
&=&w^{(1),t}(t,\xi,y)+\int_0^t w^{(1),s}_t(t,\xi,y)\,\mathrm{d}s \\
&=&t\,\kappa(y)\,\lambda(y)^T\,\rho\,\|\lambda(y)\|^2\,w^{(0)}_{\xi\xi}(t,\xi,y)
-\int_0^t \frac{\|\lambda(y)\|^2}{2}\,w^{(1),s}_{\xi\xi}(t,\xi,y)\,\mathrm{d}s \\
&=&t\,\kappa(y)\,\lambda(y)^T\,\rho\,\|\lambda(y)\|^2\,w^{(0)}_{\xi\xi}(t,\xi,y)
-\frac{\|\lambda(y)\|^2}{2}\,\partial_{\xi\xi}\left(\int_0^t w^{(1),s}(t,\xi,y)\,\mathrm{d}s\right).
\end{eqnarray*}
Finally, \eqref{nusslow} follows applying again Widder's Theorem \cite[Theorem 8.1]{Wi}, and the representation \eqref{V1slowrepr} is the result of writing \eqref{slownpeq} in the original coordinates $(t,x,y)$. \ep

\bigskip

The next theorem shows that, under appropriate assumptions, the error in the approximation of the true value function $V$ by $V^{(0)}+\sqrt{\delta}\,V^{(1)}$ is indeed of order $\delta$, as one would expect. To this end, we define the non-linear functional
\begin{equation}
\eta^{\delta}:=\frac{\|\lambda\|^2}{2\delta}\,V_{xx}\,\bigg(\frac{V_x}{V_{xx}}-\frac{V^{(0)}_x}{V^{(0)}_{xx}}\bigg)^2
+\frac{\kappa\,\lambda^T\,\rho}{\sqrt{\delta}}\,
\bigg(\frac{V_x\,V_{xy}}{V_{xx}}-\frac{V^{(0)}_x\,V^{(0)}_{xy}}{V^{(0)}_{xx}}\bigg) 
 +\frac{1}{2}\,\kappa^2\,\|\rho\|^2\,\frac{V_{xy}^2}{V_{xx}}-\frac{1}{2}\,\kappa^2\,V_{yy}-b\,V_y\,,
\label{whatiseta1}
\end{equation}
recall the change of coordinates $(t,x,y)\mapsto(t,\xi,y)$ of \eqref{CoVslow}, and set 
\begin{equation}
\tilde{\eta}^\delta(t,\xi,y)=\eta^{\delta}(t,x,y). \label{whatiseta}
\end{equation}
The bound on the approximation error can be then stated as follows.

\begin{theorem}[Remainder estimate, slow factor]\label{conv_prop}
Suppose that there exist $\delta_0>0$ and $T\le\infty$ such that for all $\delta\in(0,\delta_0)$ the HJB equation \eqref{HJB2} has a solution $V\in C^{1,2,2}\big([0,T)\times(0,\infty)\times\rr\big)$ which is increasing and strictly concave in the second argument. Then, 
\begin{enumerate}[(i)]
\item the quantity
\begin{equation} \label{etacontr1}
\int_\rr e^{-\frac{z^2}{2\,\|\lambda\|^2\,t}}\,
\sum_{k=0}^\infty \frac{(-1)^k\,z^{2k}}{(2k)!\,2^k\,\|\lambda\|^{2k}\,t^k}
\bigg(\frac{\mathrm{d}}{\mathrm{d}\xi}\bigg)^{2k}\,\int_0^t \int_\rr \tilde{\eta}^{\delta}(s,\xi-\chi,y)\,
s^{-1/2}\,e^{-\frac{\chi^2}{2\,\|\lambda\|^2\,s}}\,\mathrm{d}\chi\,\mathrm{d}s\,\mathrm{d}z,
\end{equation}
with $\tilde{\eta}^{\delta}$ as in \eqref{whatiseta}, is well-defined and finite for all $\delta\in(0,\delta_0)$, and
\item for every $(t,x,y)\in[0,T)\times(0,\infty)\times\rr$ for which the limit superior
\begin{equation} \label{etacontr2}
 \quad \underset{\delta\downarrow0}{\overline{\lim}} \;\, \bigg|\int_\rr e^{-\frac{z^2}{2\,\|\lambda\|^2\,t}}\,
\sum_{k=0}^\infty \frac{(-1)^k\,z^{2k}}{(2k)!\,2^k\,\|\lambda\|^{2k}\,t^k}
\bigg(\frac{\mathrm{d}}{\mathrm{d}\xi}\bigg)^{2k}\,\int_0^t \int_\rr \tilde{\eta}^{\delta}(s,\xi-\chi,y)\,
s^{-1/2}\,e^{-\frac{\chi^2}{2\,\|\lambda\|^2\,s}}\,\mathrm{d}\chi\,\mathrm{d}s\,\mathrm{d}z \bigg|
\end{equation}
is finite, the error bound
\begin{equation}\label{main_err_est}
\underset{\delta\downarrow0}{\overline{\lim}}\;\delta^{-1}\,\Big|V(t,x,y)-V^{(0)}(t,x,y)-\sqrt{\delta}\,V^{(1)}(t,x,y)\Big|<\infty 
\end{equation}
applies. If the limit superior in \eqref{etacontr2} is bounded above uniformly on a subset of $[0,T)\times(0,\infty)\times\rr$, then the limit superior in \eqref{main_err_est} is bounded above uniformly on the same subset of $[0,T)\times(0,\infty)\times\rr$.  
\end{enumerate}
\end{theorem}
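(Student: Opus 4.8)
The plan is to reduce the nonlinear HJB equation \eqref{HJB2} for the true solution $V$ to a linear inhomogeneous heat equation in the coordinates $(t,\xi,y)$ of \eqref{CoVslow}, in exactly the same way that \eqref{V0eq} and \eqref{V1slow} were linearized, but now keeping track of the nonlinear remainder. Concretely, I would write $V_t + \tfrac{1}{2}\delta\kappa^2 V_{yy} + \delta b V_y$ and expand the Hamiltonian term $-\tfrac12\|V_x\lambda + V_{xy}\sqrt\delta\kappa\rho\|^2/V_{xx}$ by completing the square against the leading ``$\|\lambda\|^2 (V_x)^2/V_{xx}$'' piece. The residual, after dividing by $\delta$, is precisely the functional $\eta^\delta$ of \eqref{whatiseta1}: the first term is the square-completion defect between the true feedback control $V_x/V_{xx}$ and the leading-order control $V^{(0)}_x/V^{(0)}_{xx}$, the second term is the cross term, and the last three terms are the genuine $O(1)$ contributions from $\kappa^2\|\rho\|^2 V_{xy}^2/V_{xx}$, $\kappa^2 V_{yy}$ and $bV_y$ that are not present at leading order. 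Applying the same change of variables \eqref{CoVslow} that converts the linear operator $\partial_t + \tfrac12\|\lambda\|^2 (V^{(0)}_x/V^{(0)}_{xx})^2\partial_{xx} - \|\lambda\|^2(V^{(0)}_x/V^{(0)}_{xx})\partial_x$ into $\partial_t + \tfrac12\|\lambda\|^2\partial_{\xi\xi}$, the equation for the remainder $R^\delta := \delta^{-1}\big(V - V^{(0)} - \sqrt\delta V^{(1)}\big)$, read as a function $\tilde R^\delta(t,\xi,y)$, becomes a forward heat equation with source. The point is that all the error terms collect into $-\tilde\eta^\delta(t,\xi,y)$ on the right-hand side, and $\tilde R^\delta$ carries the zero initial condition, since $V^{(0)}+\sqrt\delta V^{(1)}$ already matches $V(0,\cdot)$ to the relevant order.

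Next I would solve this inhomogeneous ill-posed heat equation by Duhamel's principle together with Widder's Theorem \cite[Theorem 8.1]{Wi}, exactly as in the proof of Proposition \ref{slownpprop}. The solution of $\partial_t w + \tfrac12\|\lambda\|^2 \partial_{\xi\xi} w = -\tilde\eta^\delta$ with $w(0,\cdot)=0$ is obtained by first undoing the ill-posed-heat smoothing: for the forward (backward-parabolic) heat equation the solution of the \emph{homogeneous} Cauchy problem is the Weierstrass transform applied formally with negative time, i.e. the operator $\sum_{k\ge 0}\frac{(-1)^k}{(2k)!\,2^k\|\lambda\|^{2k}}t^k\,\partial_\xi^{2k}$ acting via its generating Gaussian kernel $e^{-z^2/(2\|\lambda\|^2 t)}$, which is precisely the outer integral in \eqref{etacontr1}. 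The inner double integral $\int_0^t\!\int_\rr \tilde\eta^\delta(s,\xi-\chi,y)\,s^{-1/2}e^{-\chi^2/(2\|\lambda\|^2 s)}\,\mathrm{d}\chi\,\mathrm{d}s$ is the Duhamel convolution of the source against the (genuine, forward-in-time) heat kernel with the roles reversed so that $\tilde R^\delta$ reads off as an explicit series; assembling these gives $\tilde R^\delta(t,\xi,y)$ as the quantity appearing in \eqref{etacontr1}. Part (i) is then the assertion that, under the standing hypothesis that a classical solution $V$ exists on $[0,T)$, this series representation is absolutely convergent and finite — which I would verify by estimating each term of the series using the smoothness of $V$ (hence of $\eta^\delta$) and the Gaussian decay, noting that the alternating factorially-small coefficients $\frac{1}{(2k)!\,2^k\|\lambda\|^{2k}t^k}$ dominate any polynomial growth in the $\partial_\xi^{2k}$ derivatives of the convolution.

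Part (ii) is then almost immediate once the representation is in hand: by construction $V(t,x,y) - V^{(0)}(t,x,y) - \sqrt\delta V^{(1)}(t,x,y) = \delta\,\tilde R^\delta(t,\xi,y)$, so $\delta^{-1}$ times the left-hand side of \eqref{main_err_est} equals $|\tilde R^\delta(t,\xi,y)|$, which is exactly the expression inside the limit superior in \eqref{etacontr2}. Hence if that limit superior is finite at a point, so is the one in \eqref{main_err_est}, and a uniform bound on a subset transfers verbatim. The only subtlety is ensuring that the change of coordinates $(t,x,y)\leftrightarrow(t,\xi,y)$ is a bijection uniformly in $\delta$ on the region of interest, which follows from $V^{(0)}$ (not $V$) being strictly increasing and strictly concave in $x$, since $\xi$ is defined through $V^{(0)}_x$.

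The main obstacle I expect is \textbf{the justification of part (i)} — that is, showing the formal Duhamel/Widder series actually converges and represents a genuine classical solution of the remainder equation, rather than merely a formal asymptotic object. This is delicate precisely because the underlying heat equation is ill-posed: the ``solution operator'' $\sum_k \frac{(-1)^k t^k}{(2k)!\,2^k\|\lambda\|^{2k}}\partial_\xi^{2k}$ is unbounded on most function classes, and one cannot invoke standard parabolic regularity. The resolution must come from the specific structure of the source term $\tilde\eta^\delta$: because $\eta^\delta$ is built from derivatives of the \emph{classical} solution $V$ assumed to exist (and from $V^{(0)},V^{(1)}$, which are explicitly given by Widder representations with compactly-controlled spectral measures $\nu_0$, $\nu^{s,y}$), it inherits entire-function-type bounds in $\xi$ after the change of variables, which is exactly what is needed to tame the factorial series. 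Making this rigorous — rather than, as in \cite{FSZ}, leaving the remainder uncontrolled — is the genuinely new technical content, and it is why the statement is phrased conditionally on the finiteness in \eqref{etacontr1} and \eqref{etacontr2} rather than as an unconditional $O(\delta)$ bound.
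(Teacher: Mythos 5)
Your reduction (write $V=V^{(0)}+\sqrt{\delta}\,V^{(1)}+\sqrt{\delta}\,Q$, pass to the coordinates \eqref{CoVslow}, and obtain for the remainder the ill-posed heat equation $q_t+\tfrac{\|\lambda\|^2}{2}q_{\xi\xi}=\delta\,\tilde\eta^\delta$ with zero initial data) and your treatment of part (ii) coincide with the paper's argument. The genuine gap is in your justification of part (i), which you yourself flag as the main obstacle but then propose to close in a way that does not work. You suggest proving convergence of the series in \eqref{etacontr1} term by term, arguing that the coefficients $\frac{1}{(2k)!\,2^k\|\lambda\|^{2k}t^k}$ ``dominate any polynomial growth in the $\partial_\xi^{2k}$ derivatives of the convolution.'' This heuristic fails: the $2k$-th $\xi$-derivative of a Gaussian smoothing is not polynomially bounded in $k$ (differentiating the kernel produces Hermite-polynomial factors of size comparable to $(2k)!$ over the relevant length scale, and the $s$-integral near $s=0$ worsens this), while putting the derivatives on $\tilde\eta^\delta$ instead would require Gevrey/analytic-type control of all its $\xi$-derivatives, which is unavailable since the hypothesis only gives $V\in C^{1,2,2}$. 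Your fallback — that $\tilde\eta^\delta$ ``inherits entire-function-type bounds'' from the Widder representations of $V^{(0)},V^{(1)}$ — is unsubstantiated, because $\eta^\delta$ in \eqref{whatiseta1} also involves the unknown solution $V$ itself, about which nothing beyond classical smoothness and concavity is assumed.

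The missing idea is that no estimate on $\tilde\eta^\delta$ is needed at all. Duhamel's principle applied to \eqref{rPDE} gives the identity \eqref{Duhamel}, whose left-hand side is the Weierstrass (Gaussian) transform of the genuine, finite classical function $q(t,\cdot,y)=V-V^{(0)}-\sqrt{\delta}\,V^{(1)}$. Hence the right-hand side of \eqref{Duhamel} is exhibited as an element of the range of the Weierstrass transform, so by Widder's inversion theorem \cite[equations (5), (6)]{Wi2} it lies in the domain of the inverse transform and the inversion series converges; the quantity \eqref{etacontr1} is precisely this inverse transform and equals, up to a multiplicative constant, $\delta^{-1}q(t,\xi,y)$, which is finite for each fixed $\delta$. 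This single observation proves part (i) and simultaneously supplies the identification on which your part (ii) already relies (and shows \eqref{etacontr2} is in fact equivalent to \eqref{main_err_est}). Without it, your part (ii) is circular: you use ``the representation'' of $\tilde R^\delta$ by the series before having established that the series converges to anything.
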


\begin{rmk}\label{conv_rmk}
The meaning of condition \eqref{etacontr2} can be understood as follows. As explained in the proof of Theorem \ref{conv_prop} below, the nonlinearity $\eta^\delta$ arising in the expansion of the HJB equation \eqref{HJB2} is fed into an initial value problem for a backward heat equation through a source term in the new coordinates $(t,\xi,y)$. The latter problem is severely ill-posed, with its solution operator rapidly magnifying the inverse Laplace modes (i.e. the analogues of Fourier modes for the inverse Laplace transform operator) of the source term. Therefore, in order to control the solutions uniformly for all small positive $\delta$, one needs an a priori estimate on the inverse Laplace transform of the source term that is uniform for all small positive $\delta$. The latter is precisely the content of condition \eqref{etacontr2}. In fact, the proof of Theorem \ref{conv_prop} reveals that condition \eqref{etacontr2} is sharp in the sense of \eqref{etacontr2} being equivalent to \eqref{main_err_est}.   
\end{rmk}

\smallskip

\noindent\textit{Proof of Theorem \ref{conv_prop}.} We start by expressing the HJB equation \eqref{HJB2} as
\begin{equation}\label{HJB2rewrite}
V_t-\frac{\|\lambda\|^2}{2}\,\frac{V_x^2}{V_{xx}}-\sqrt{\delta}\,\kappa\,\lambda^T\,\rho\,\frac{V_x\,V_{xy}}{V_{xx}}
=\frac{\delta}{2}\,\kappa^2\,\|\rho\|^2\,\frac{V_{xy}^2}{V_{xx}}-\frac{\delta}{2}\,\kappa^2\,V_{yy}-\delta\,b\,V_y\,.
\end{equation}
Next, we write $V=V^{(0)}+\sqrt{\delta}\,V^{(1)}+\sqrt{\delta}\,Q$, insert the latter expression into the left-hand side of \eqref{HJB2rewrite}, and expand in $\sqrt{\delta}$ using the elementary identity
\begin{equation}
\frac{1}{a+\sqrt{\delta}\,b}=\frac{1}{a}-\sqrt{\delta}\,\frac{b}{a^2+\sqrt{\delta}\,ab},\qquad a<0,\;\; 
b<-\frac{a}{\sqrt{\delta}}\,.
\end{equation}
Recalling that the functions $V^{(0)}$ and $V^{(1)}$ were constructed in such a way that all terms in \eqref{HJB2rewrite} on the orders of $1$ and $\sqrt{\delta}$ cancel out, and collecting the remaining terms we obtain after a lengthy but straightforward computation that
\begin{equation}\label{S_PDE}
\sqrt{\delta}\,Q_t-\sqrt{\delta}\,\|\lambda\|^2\,\frac{V^{(0)}_x}{V^{(0)}_{xx}}\,Q_x
+\sqrt{\delta}\,\frac{\|\lambda\|^2}{2}\,\frac{\big(V^{(0)}_x\big)^2}{\big(V^{(0)}_{xx}\big)}\,Q_{xx}=\delta\,\eta^\delta.
\end{equation} 

\medskip

Next, we let $\tilde{Q}:=\sqrt{\delta}\,Q$, recall the change of coordinates 
\begin{equation*}
(t,\xi,y):=\Big(t,-\log V^{(0)}_x(t,x,y)-\frac{\|\lambda(y)\|^2}{2}\,t,y\Big)
\end{equation*}
of \eqref{CoVslow}, and define $q(t,\xi,y):=\tilde{Q}(t,x,y)$. By the same computation as in the proof of Proposition \ref{CorrTerm}, the partial differential equation \eqref{S_PDE} can be rewritten as
\begin{equation}\label{rPDE}
q_t+\frac{\|\lambda\|^2}{2}\,q_{\xi\xi}=\delta\,\tilde{\eta}^{\delta}
\end{equation} 
where $\tilde{\eta}^{\delta}(t,\xi,y)=\eta^\delta(t,x,y)$ as before. Then, Duhamel's principle for the backward equation \eqref{rPDE} implies that
\begin{equation}\label{Duhamel}
\frac{1}{\sqrt{2\pi t}\,\|\lambda\|}\,\int_\rr q(t,\xi-\chi,y)\,e^{-\frac{\chi^2}{2\,\|\lambda\|^2\,t}}
\,\mathrm{d}\chi
=\delta\,\int_0^t \frac{1}{\sqrt{2\pi s}\,\|\lambda\|}
\,\int_\rr \tilde{\eta}^{\delta}(s,\xi-\chi,y)\,e^{-\frac{\chi^2}{2\,\|\lambda\|^2\,s}}\,\mathrm{d}\chi\,\mathrm{d}s.
\end{equation} 
It follows that the right-hand side of the latter equation is in the domain of the inverse Weierstrass transform in the sense of \cite[equations (5), (6)]{Wi2}. This, in turn, implies that the quantity in \eqref{etacontr1} is well-defined and finite for all $\delta\in(0,\delta_0)$. Moreover, applying the inverse Weierstrass transform to both sides of \eqref{Duhamel}, we see that, up to a multiplicative constant, the function $\xi\mapsto \delta^{-1}\,q(t,\xi,y)$ is given by the quantity inside the absolute value in \eqref{etacontr2}. The statement of the theorem is now immediate. \ep

\subsection{Example: Forward Performance Process of Power Type} \label{new_ex}
We illustrate the results with the family of power utility forward performance processes. For a constant risk aversion coefficient $\gamma\in(0,\infty)\backslash\{1\}$, we impose the initial condition for the HJB equation \eqref{HJB1} to be
\[ V(0,x) = \gamma^\gamma\,\frac{x^{1-\gamma}}{1-\gamma}. \]
This corresponds to Examples 16 and 18 in \cite{MZ4}.

\medskip
 
We focus on an example where there is an exact solution to the problem with one factor volatility. Namely, we take
a market model
\begin{align}
\mathrm{d}S_i(t)&=S_i(t)\,\mu_i\big(Y^{\delta}(t)\big)\,\mathrm{d}t 
+ S_i(t)\,\sigma_i\big(Y^{\delta}(t)\big)^T\mathrm{d}W(t),\quad i=1,\,2,\,\ldots,\,n\\
\mathrm{d}Y^\delta(t) &= \delta(m-Y^\delta(t))\,\mathrm{d}t +\sqrt{\delta}\,\beta\,\sqrt{Y^\delta(t)}\,\mathrm{d}B_1(t),
\end{align}
where $\sigma_i(y)$ and $\mu_i(y)$ are chosen such that
\begin{equation}
\lambda(y) = \Lambda\,\sqrt{y}, \qquad \mbox{for some } \Lambda\in\rr^d,
\end{equation}
and the Brownian motions $W_j$ are correlated with the Brownian motion $B_1$ according to $\langle W_j,B_1\rangle(t)=\rho_j\,t$. We further suppose that $\|\rho\|^2\neq\frac{\gamma}{\gamma-1}$. This class of model was used, for instance, for a classical infinite horizon consumption problem in \cite{CV}.

\medskip

We insert the ansatz 
\eq
V(t,x,y)=\gamma^\gamma\,\frac{x^{1-\gamma}}{1-\gamma}\,g(t,y)
\en
into the HJB equation \eqref{HJB2} to end up with
\eq
g_t+\frac{1}{2}\,\delta\,\beta^2 y\,g_{yy}+\big(\delta(m-y)+\Gamma\,\sqrt{\delta y}\,\beta\,\Lambda^T\rho\big)\,g_y+\frac{\Gamma}{2}\,\Big(\|\Lambda\|^2 y\, g+\delta\,\beta^2y\,\|\rho\|^2\,\frac{g_y^2}{g}\Big)=0, 
\en
where $\Gamma=\frac{1-\gamma}{\gamma}$ and the initial condition is $g(0,y)=1$. Next, we make the transformation $g=\Psi^q$ with the choice $q=\frac{1}{1+\Gamma\|\rho\|^2}$, which after a short computation leads to the linear equation
\eq
\Psi_t+\frac{1}{2}\delta\beta^2y\,\Psi_{yy}+\big(\delta(m-y)+\Gamma\,\sqrt{\delta y}\,\beta\,\Lambda^T\rho\big)\,\Psi_y 
+\frac{\Gamma}{2q}\,\|\Lambda\|^2 y\,\Psi=0
\en
with initial condition $\Psi(0,y)=1$. Looking for a solution of the form $\exp(A_1(t)y+A_2(t))$ we find that $A_1$, $A_2$ need to solve the ODEs
\begin{align}
& A_1'=\frac{1}{2}\,\delta\,\beta^2 A_1^2+\big(\sqrt{\delta}\,\Gamma\,\beta\,\Lambda^T\rho-\delta\big)A_1+\frac{\Gamma}{2q}\,\|\Lambda\|^2, \\
& A_2'=\delta\,m\,A_1
\end{align}
with zero initial conditions. 
Let $a_{\pm}$ be the roots of the quadratic corresponding to the right-hand side of the Riccati equation for $A_1$:
\eq
\frac{\delta\,\beta^2}{2}\,a^2+\big(\sqrt{\delta}\,\Gamma\,\beta\,\Lambda^T\rho-\delta\big)a+\frac{\Gamma\,\|\Lambda\|^2}{2q}=0.
\en
We assume that the model parameters are such that both roots are real and $a_+>0$. This is the case when $\gamma\in(0,1)$, $\Lambda^T\rho<0$, and $\Gamma\,q\,(\Lambda^T\rho)^2>\|\Lambda\|^2$; or $\gamma>1$, $\Lambda^T\rho>0$, $q<0$, and $\Gamma\,q\,(\Lambda^T\rho)^2>\|\Lambda\|^2$; or $\gamma>1$ and $q>0$. In any of these cases, we then obtain the solution
\eq
A_1(t)=a_-\,\frac{1-e^{-\Delta t}}{1-\frac{a_-}{a_+}\,e^{-\Delta t}}\quad\text{and}\quad
A_2(t)=\delta\,m\,\bigg(a_-t-\frac{2}{\delta\,\beta^2}\,\log\bigg(\frac{1-\frac{a_-}{a_+}\,e^{-\Delta t}}{1-\frac{a_-}{a_+}}\bigg) \bigg),
\en
where $\Delta$ is the square root of the discriminant of the quadratic above. We, in turn, deduce that since $a_+>0$, the solution
\eq\label{V_example}
V(t,x,y)=\gamma^\gamma\,\frac{x^{1-\gamma}}{1-\gamma}\,\exp\big(q\,(A_1(t)\,y+A_2(t))\big)
\en
is well-defined for all $t\ge0$.

\medskip

We can now check condition \eqref{etacontr2} of Theorem \ref{conv_prop} to show the convergence of our approximation 
\eq
V^{(0)}(t,x,y)+\sqrt{\delta}\,V^{(1)}(t,x,y)
=\gamma^\gamma\,\frac{x^{1-\gamma}}{1-\gamma}\,e^{-\frac{1}{2}\,\Gamma\,\|\lambda\|^2 t}
+\frac{\sqrt{\delta}}{4}\,t^2\,y\,\beta\,\Lambda^T\rho\,\|\Lambda\|^2\,\frac{\Gamma}{\gamma}\,\gamma^\gamma\,\frac{x^{1-\gamma}}{1-\gamma}\,e^{-\frac{1}{2}\,\Gamma\,\|\lambda\|^2 t},
\en
where the expression for $V^{(0)}$ follows from Proposition \ref{slow_lead} with $\nu_0$ being a Dirac mass at $\gamma^{-1}$ and $C_0=\gamma$, and the expression for $V^{(1)}$ is a consequence of Proposition \ref{CorrTerm}. 

\medskip

In \eqref{whatiseta1}, the term in the parentheses in the first summand turns out to be zero, because both $V$ and $V^{(0)}$ are multiples of $x^{1-\gamma}$, and the term in parentheses in the second summand is of order $\sqrt{\delta}$, because $V^{(0)}$ can be obtained from $V$ by setting $\delta=0$, and $V$ is smooth in $\delta$. Therefore $\eta$ is a product of $x^{1-\gamma}$ and a function of $(t,y)$ depending smoothly on $\delta$ (which follows from the smoothness of $A_1$, $A_2$), and hence, $\tilde{\eta}^\delta$ in \eqref{whatiseta} is a product of $e^{\xi\Gamma}$ and another function of $(t,y)$ depending smoothly on $\delta$. An explicit computation of the inverse Weierstrass transform in \eqref{etacontr2} reduces the expression there to
\eq
\underset{\delta\downarrow0}{\overline{\lim}}\;\bigg|\int_0^t e^{\xi\Gamma-(t-s)\Gamma^2\frac{\|\lambda\|^2}{2}}\,F(s,y;\delta)\,\mathrm{d}s\bigg|\,,
\en 
for some function $F$ depending smoothly on $\delta$. In particular, the latter limit superior is finite and so the conditions of Theorem \ref{conv_prop} are satisfied in this example. 

\section{Forward investment problem with a fast factor} \label{sec_fast}

The second situation we consider is the fast factor case, that is when $\mu_i$ and $\sigma_i$ in \eqref{market} depend only on $Y^\epsilon$, and so $V(t,x,y_1,y_2)$ in \eqref{factor_form} does not depend on $y_1$.
To simplify the notation, we write $\rho$ for $\rho^{\mathbf{f}}$ and $y$ for $y_2$ throughout this section. With these notations and in view of Assumption \ref{main_asmp}, the HJB equation \eqref{HJB1} becomes 
\begin{equation}\label{HJB4}
V_t+\frac{\alpha(y)^2}{2\epsilon}\,V_{yy}+\frac{\gamma(y)}{\epsilon}\,V_y
-\frac{1}{2}\,\frac{\|V_x\,\lambda(y)+V_{xy}\,\frac{1}{\sqrt{\epsilon}}\,\alpha(y)\,\rho\|^2}{V_{xx}}=0.
\end{equation}
Our goal here is to find an explicit expansion of the solution $V$ to \eqref{HJB4} of the form
\begin{equation}\label{expansion2}
V=V^{(0)}+\sqrt{\epsilon}\,V^{(1)}+O(\epsilon)
\end{equation}
in the limit regime $\epsilon\downarrow0$. As in the previous section, we will first derive formulas for $V^{(0)}$ and $V^{(1)}$ informally, and then justify the resulting expansion by means of a suitable remainder estimate.

\medskip

To find $V^{(0)}$ we plug \eqref{expansion2} into \eqref{HJB4} and collect the leading order terms (namely, those on the order of $\epsilon^{-1}$) to get
\begin{equation}\label{V0fast}
\frac{\alpha(y)^2}{2}\,V^{(0)}_{yy}+\gamma(y)\,V^{(0)}_y
-\frac{1}{2}\,\alpha(y)^2\|\rho\|^2\frac{\big(V^{(0)}_{xy}\big)^2}{V^{(0)}_{xx}}=0.
\end{equation} 
Note that we can satisfy \eqref{V0fast} by choosing $V^{(0)}$ as a function of $t$ and $x$ only. As we explain below, the exact choice of $V^{(0)}$ will be pinned down by the lower order terms in the expansion of \eqref{HJB4}.

\medskip

To proceed, we plug \eqref{expansion2} into \eqref{HJB4}, and collect the terms of order $\epsilon^{-1/2}$. We obtain
\begin{equation}
-\alpha(y)\,\lambda(y)^T\,\rho\,\frac{V^{(0)}_x\,V^{(0)}_{xy}}{V^{(0)}_{xx}}
-\alpha(y)^2\,\|\rho\|^2\,\frac{V^{(0)}_{xy}\,V^{(1)}_{xy}}{V^{(0)}_{xx}}
+\frac{\alpha(y)^2\,\|\rho\|^2}{2}\,\frac{\big(V^{(0)}_{xy}\big)^2\,V^{(1)}_{xx}}{\big(V^{(0)}_{xx}\big)^2} 
+\frac{\alpha(y)^2}{2}\,V^{(1)}_{yy}+\gamma(y)\,V^{(1)}_y=0.
\end{equation}
Choosing $V^{(0)}$ to be independent of $y$ as we noted earlier, the latter equation simplifies to
\begin{equation}
\frac{\alpha(y)^2}{2}\,V^{(1)}_{yy}+\gamma(y)\,V^{(1)}_y=0.
\end{equation}
Clearly, the above equation can be satisfied by choosing $V^{(1)}$ to be a function of $(t,x)$ only. As with $V^{(0)}$, the exact choice of $V^{(1)}$ will result from considering lower order terms in the expansion of \eqref{HJB4}.  

\medskip

Next, we insert the extended expansion
\begin{equation}
V=V^{(0)}+\sqrt{\epsilon}\,V^{(1)}+\epsilon\,V^{(2)}+\epsilon^{3/2}\,V^{(3)}+O(\epsilon^2)
\end{equation}
into \eqref{HJB4} in order to find the terms of order $1$. This results in the equation
\begin{equation}\label{fastV2}
V^{(0)}_t-\frac{\|\lambda(y)\|^2}{2}\,\frac{\big(V^{(0)}_x\big)^2}{V^{(0)}_{xx}}
+\frac{\alpha(y)^2}{2}\,V^{(2)}_{yy}+\gamma(y)\,V^{(2)}_y=0.
\end{equation} 
Hereby, we have used the fact that $V^{(0)}$ and $V^{(1)}$ do not depend on $y$. The next proposition shows that there is a \textit{unique} choice of $V^{(0)}$ such that $V^{(0)}(0,x,y)=V(0,x)$ and the equation \eqref{fastV2}, viewed as an elliptic partial differential equation for $V^{(2)}$, has a solution.

\begin{proposition}[Leading order term, fast factor]\label{ffV0prop}
There is a unique function $V^{(0)}$ such that $V^{(0)}(0,x,y)=V(0,x)$ and the equation \eqref{fastV2} possesses a solution $V^{(2)}$. With 
\begin{equation}
\bar{\lambda}=\bigg(\int_\rr \|\lambda(y)\|^2\,\mu(\mathrm{d}y)\bigg)^{1/2},
\end{equation}
such a function $V^{(0)}$ admits the representation 
\begin{equation}
V^{(0)}(t,x,y)=V^{(0)}(t,x)=u\big(\bar{\lambda}^2\,t,x\big)  \label{V0fastformula}
\end{equation}
where $u$ is given by
\begin{eqnarray*}
&& u(t,x)=-\frac{1}{2}\,\int_0^t e^{-h^{(-1)}(s,x)+\frac{s}{2}}\,h_x\Big(s,h^{(-1)}(s,x)\Big)\,\mathrm{d}s
+V(0,x), \\
&& h(t,x)=\int_\rr \frac{e^{zx-\frac{1}{2}z^2t}-1}{z}\,\nu_0(\mathrm{d}z)+ C_0.
\end{eqnarray*}
Here $h^{(-1)}$ is the inverse of $h$ in the variable $x$, and $\nu_0$ and $C_0$ were introduced in Assumption \ref{main_asmp}(iii). 
\end{proposition}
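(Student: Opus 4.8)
The plan is to treat \eqref{fastV2} as a Poisson equation for $V^{(2)}$ in the fast variable $y$, to read off the equation for $V^{(0)}$ from its solvability (centering) condition, and then to obtain the explicit formula from the slow-factor analysis already carried out. Write $\mathcal{L}_0 := \tfrac12\alpha(y)^2\,\partial_{yy}+\gamma(y)\,\partial_y$ for the generator of the diffusion obtained from \eqref{Yeps} by setting $\epsilon=1$; by Assumption \ref{inv_asmp} this diffusion is positive recurrent with the same invariant probability measure $\mu$ (rescaling the generator does not change $\mu$). Since $V^{(0)}$ and $V^{(1)}$ do not depend on $y$, equation \eqref{fastV2} reads, for each fixed $(t,x)$,
\[
\mathcal{L}_0\,V^{(2)}(t,x,\cdot)\;=\;-\Big(V^{(0)}_t(t,x)\;-\;\tfrac{\|\lambda(\cdot)\|^2}{2}\,\tfrac{\big(V^{(0)}_x(t,x)\big)^2}{V^{(0)}_{xx}(t,x)}\Big).
\]

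First I would invoke the classical solvability theory for one-dimensional ergodic diffusions: for $f$ continuous and $\mu$-integrable with suitable growth, the Poisson equation $\mathcal{L}_0\phi=f$ has a solution $\phi$ (unique up to an additive constant, and given by a double integral against the scale and speed densities of $\mathcal{L}_0$) if and only if $f$ is $\mu$-centered, i.e. $\int_\rr f(y)\,\mu(\mathrm{d}y)=0$. Applying this to the right-hand side above and using once more that $V^{(0)}$ is $y$-independent, the centering condition becomes
\[
V^{(0)}_t\;-\;\frac12\Big(\int_\rr \|\lambda(y)\|^2\,\mu(\mathrm{d}y)\Big)\,\frac{\big(V^{(0)}_x\big)^2}{V^{(0)}_{xx}}\;=\;0,
\]
that is, $V^{(0)}_t-\tfrac{\bar\lambda^2}{2}\,\big(V^{(0)}_x\big)^2/V^{(0)}_{xx}=0$. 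This is precisely equation \eqref{V0eq} with the factor-dependent squared Sharpe ratio $\|\lambda(y)\|^2$ replaced by the constant $\bar\lambda^2$, and it carries the initial condition $V^{(0)}(0,x)=V(0,x)$.

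Next I would solve this equation exactly as in Proposition \ref{slow_lead}: the change of variables $\xi=-\log V^{(0)}_x-\tfrac{\bar\lambda^2}{2}t$ reduces it to the ill-posed (backward) heat equation $w^{(0)}_t+\tfrac{\bar\lambda^2}{2}\,w^{(0)}_{\xi\xi}=0$, whose positive solutions are described by Widder's Representation Theorem \cite[Theorem 8.1]{Wi}; translating back through \cite[Theorems 4 and 8]{MZ4} (equivalently, Proposition \ref{slow_lead} with $\|\lambda(y)\|^2\equiv\bar\lambda^2$) gives $V^{(0)}(t,x)=u(\bar\lambda^2 t,x)$ with $u$ and $h$ as stated. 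Widder's theorem also supplies the uniqueness claim: any $V^{(0)}$ for which \eqref{fastV2} is solvable must satisfy the $\mu$-averaged PDE above together with the prescribed initial datum, and that problem has a unique solution in the relevant class of positive solutions of the ill-posed heat equation.

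The step I expect to be the main obstacle is the rigorous justification of the Fredholm alternative in this setting — checking that $f(y)=-(V^{(0)}_t-\tfrac{\|\lambda(y)\|^2}{2}(V^{(0)}_x)^2/V^{(0)}_{xx})$ has the integrability and regularity required for the centering condition to be both necessary and sufficient for solvability, and for the resulting $V^{(2)}$ to be smooth enough to legitimize the order-by-order expansion of \eqref{HJB4}. This depends on ellipticity and growth hypotheses on $\alpha,\gamma$ and on the smoothness of $\lambda$ and of $u(\cdot,x)$; once these are in place, the remainder of the argument is bookkeeping together with the slow-factor machinery already developed.
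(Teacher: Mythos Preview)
Your proposal is correct and follows essentially the same approach as the paper: the paper integrates \eqref{fastV2} against the invariant measure $\mu$, uses $\int \mathcal{L}_0 V^{(2)}\,\mu(\mathrm{d}y)=0$ to obtain the averaged equation $V^{(0)}_t-\tfrac{\bar\lambda^2}{2}(V^{(0)}_x)^2/V^{(0)}_{xx}=0$, and then cites \cite[Theorems 4 and 8]{MZ4} for the explicit formula. Your framing via the Fredholm alternative (centering condition for the Poisson equation) makes the sufficiency direction more explicit than the paper's brief argument, but the mathematical content is the same.
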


\begin{proof}
We start by integrating \eqref{fastV2} with respect to the invariant distribution $\mu$ of Assumption \ref{inv_asmp}. Since $V^{(0)}$ does not depend on $y$ and 
\begin{equation}
\int_\rr \Big(\frac{\alpha(y)^2}{2}\,V^{(2)}_{yy}+\gamma(y)\,V^{(2)}_y\Big)\,\mu(\mathrm{d}y)\equiv0
\end{equation}
(due to the invariance of $\mu$), we obtain
\begin{equation}\label{fastV2ave}
V^{(0)}_t-\frac{\bar{\lambda}^2}{2}\,\frac{\big(V^{(0)}_x\big)^2}{V^{(0)}_{xx}}=0.
\end{equation}
We easily conclude using \cite[Theorems 4 and 8]{MZ4}.
\end{proof}

\smallskip

From \eqref{V0fastformula} we observe that $V^{(0)}$ gives the complete market forward performance measure with constant Sharpe ratio $\bar{\lambda}$, where the appropriate averaging has been identified by the asymptotic analysis.

\medskip

To obtain $V^{(1)}$ we will expand the HJB equation \eqref{HJB4} up to order $\epsilon^{1/2}$, which however requires further information on $V^{(2)}$. As a first step, we subtract from the equation \eqref{fastV2} its averaged version \eqref{fastV2ave} to get
\begin{equation}\label{V2ODE}
\frac{\alpha(y)^2}{2}\,V^{(2)}_{yy}+\gamma(y)\,V^{(2)}_y=\frac{\|\lambda(y)\|^2-\bar{\lambda}^2}{2}\,
\frac{\big(V^{(0)}_x\big)^2}{V^{(0)}_{xx}}.
\end{equation}
We introduce the notation
\begin{equation}\label{lambdahat}
\phi(y)=\int_0^\infty \ev^y\Big[\|\lambda(Y^{1}(s))\|^2-\bar{\lambda}^2\Big]\,\mathrm{d}s,
\end{equation}
where $Y^{1}$ denotes the fast factor process, solution of the SDE \eqref{Yeps}, but with $\epsilon=1$.
Then (see, for example, \cite[Section 3.2, p. 94]{FPSS}), the solution $V^{(2)}$ of \eqref{V2ODE} admits the stochastic representation
\begin{equation}\label{whatisV2}
V^{(2)}(t,x,y)=-\frac{1}{2}\,\frac{\big(V^{(0)}_x(t,x)\big)^2}{V^{(0)}_{xx}(t,x)}\,\phi(y)
+C(t,x),
\end{equation}
where $C(t,x)$ is a function that does not depend on $y$. 

We can now expand \eqref{HJB4} up to order $\epsilon^{1/2}$ to obtain
\begin{eqnarray*}
V^{(1)}_t+\frac{\|\lambda(y)\|^2}{2}\,\left(\frac{V^{(0)}_x}{V^{(0)}_{xx}}\right)^2\,V^{(1)}_{xx}
-\frac{V^{(0)}_x}{V^{(0)}_{xx}}\,\lambda(y)^T\,\bigg(V^{(1)}_x\,\lambda(y)
+\phi'(y)\,\bigg(\frac{\big(V^{(0)}_x\big)^2}{V^{(0)}_{xx}}\bigg)_x\,\alpha(y)\,\rho\bigg)&&\\
+\frac{\alpha(y)^2}{2}\,V^{(3)}_{yy}+\gamma(y)\,V^{(3)}_y&=&0.
\end{eqnarray*}
Averaging this equation with respect to the invariant distribution $\mu$ of Assumption \ref{inv_asmp}, we obtain further
\begin{equation}\label{V1fast}
V^{(1)}_t+\frac{\bar{\lambda}^2}{2}\,\left(\frac{V^{(0)}_x}{V^{(0)}_{xx}}\right)^2\,V^{(1)}_{xx}
-\bar{\lambda}^2\,\frac{V^{(0)}_x}{V^{(0)}_{xx}}\,V^{(1)}_x
-\frac{V^{(0)}_x}{V^{(0)}_{xx}}\,\bigg(\frac{\big(V^{(0)}_x\big)^2}{V^{(0)}_{xx}}\bigg)_x\,
\left(\int_\rr \phi'(y)\,\alpha(y)\,\lambda(y)^T\,\mu(\mathrm{d}y)\right)\rho=0,
\end{equation}
which is the desired partial differential equation for $V^{(1)}$. Since $V^{(0)}$ satisfies the initial condition for $V$, we endow \eqref{V1fast} with the initial condition $V^{(1)}(0,x,y)=0$.

\begin{proposition}[Correction term, fast factor]\label{fast_corr_prop}
The unique classical solution of the partial differential equation \eqref{V1fast} with the initial condition $V^{(1)}(0,x,y)=0$ is given by
\begin{equation}\label{V1fastsol}
V^{(1)}(t,x,y)=V^{(1)}(t,x)=t\,\frac{V^{(0)}_x}{V^{(0)}_{xx}}\,\bigg(\frac{\big(V^{(0)}_x\big)^2}{V^{(0)}_{xx}}\bigg)_x\left(\int_\rr \phi'(y)\,\alpha(y)\,\lambda(y)^T\,\mu(\mathrm{d}y)\right)\rho,
\end{equation}
with $\phi$ as in \eqref{lambdahat}.
\end{proposition}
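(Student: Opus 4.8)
The plan is to reduce the linear initial value problem \eqref{V1fast}, $V^{(1)}(0,x,y)=0$, to an inhomogeneous backward heat equation by means of the same solution-dependent change of variables used in the proof of Proposition~\ref{CorrTerm}. The key structural observation is that the linear operator acting on $V^{(1)}$ on the left of \eqref{V1fast} is exactly the linearization at $V^{(0)}$ of the nonlinear operator appearing in the averaged equation \eqref{fastV2ave}, so that the transformation which turns \eqref{fastV2ave} into a heat equation automatically turns \eqref{V1fast} into a heat equation with an explicit source term.

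Concretely, I would introduce
\begin{equation*}
(t,\xi,y):=\Big(t,\,-\log V^{(0)}_x(t,x)-\tfrac{\bar\lambda^{2}}{2}\,t,\,y\Big),
\end{equation*}
which is a legitimate change of variables since $V^{(0)}$ is strictly increasing and strictly concave in $x$; note that $\xi$ does not depend on $y$ here because $V^{(0)}$ does not. Setting $w^{(0)}(t,\xi)=V^{(0)}(t,x)$ and $w^{(1)}(t,\xi,y)=V^{(1)}(t,x,y)$, the chain-rule computations \eqref{wslowt}--\eqref{wslowxx}, with $\|\lambda(y)\|^{2}$ replaced throughout by the constant $\bar\lambda^{2}$, give, exactly as in the passage from \eqref{V0slowlin} to \eqref{w0heateq}, that \eqref{fastV2ave} becomes $w^{(0)}_t+\tfrac{\bar\lambda^{2}}{2}\,w^{(0)}_{\xi\xi}=0$, and that the left-hand side of \eqref{V1fast} transforms into $w^{(1)}_t+\tfrac{\bar\lambda^{2}}{2}\,w^{(1)}_{\xi\xi}$. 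For the right-hand side I would use $w^{(0)}_\xi=-(V^{(0)}_x)^{2}/V^{(0)}_{xx}$ from \eqref{wslowx} together with the identity $\partial_\xi=-\tfrac{V^{(0)}_x}{V^{(0)}_{xx}}\,\partial_x$, which yield
\begin{equation*}
\frac{V^{(0)}_x}{V^{(0)}_{xx}}\,\bigg(\frac{\big(V^{(0)}_x\big)^{2}}{V^{(0)}_{xx}}\bigg)_x=w^{(0)}_{\xi\xi}.
\end{equation*}
Writing $K:=\big(\int_\rr \phi'(y)\,\alpha(y)\,\lambda(y)^T\,\mu(\mathrm{d}y)\big)\rho$, which is a scalar, the problem for $V^{(1)}$ is thus equivalent to $w^{(1)}_t+\tfrac{\bar\lambda^{2}}{2}\,w^{(1)}_{\xi\xi}=K\,w^{(0)}_{\xi\xi}$ with $w^{(1)}(0,\cdot)=0$; since neither the transformed equation nor its initial datum involves $y$, the uniqueness established below forces $w^{(1)}$, hence $V^{(1)}$, to be independent of $y$.

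Next I would verify directly that $w^{(1)}=t\,K\,w^{(0)}_{\xi\xi}$ solves the transformed problem: indeed $\partial_t\big(tKw^{(0)}_{\xi\xi}\big)+\tfrac{\bar\lambda^{2}}{2}\,\partial_{\xi\xi}\big(tKw^{(0)}_{\xi\xi}\big)=Kw^{(0)}_{\xi\xi}+tK\,\partial_{\xi\xi}\big(w^{(0)}_t+\tfrac{\bar\lambda^{2}}{2}w^{(0)}_{\xi\xi}\big)=Kw^{(0)}_{\xi\xi}$, using that $w^{(0)}$ solves the homogeneous equation and, just as in the treatment of \eqref{slowheatw1} and relying on the regularity of $u$ in the representation of Proposition~\ref{ffV0prop}, that $w^{(0)}$ is smooth enough in $\xi$ for $w^{(0)}_{\xi\xi}$ to be itself a classical solution of the same equation; the zero initial condition is clear. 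Uniqueness of classical solutions with vanishing initial data then follows from Widder's Theorem \cite[Theorem~8.1]{Wi} applied to the difference of any two solutions, and undoing the change of variables via $\partial_\xi=-\tfrac{V^{(0)}_x}{V^{(0)}_{xx}}\partial_x$ gives $w^{(0)}_{\xi\xi}=\tfrac{V^{(0)}_x}{V^{(0)}_{xx}}\big(\tfrac{(V^{(0)}_x)^{2}}{V^{(0)}_{xx}}\big)_x$, so that $w^{(1)}=t\,K\,w^{(0)}_{\xi\xi}$ is precisely \eqref{V1fastsol}. The only genuinely delicate point is the faithful transformation of both the linear part and the source of \eqref{V1fast} under the solution-dependent change of variables, together with the regularity needed for $w^{(0)}_{\xi\xi}$ to qualify as a classical solution and for Widder's Theorem to apply; however, the former is the very computation already carried out in the proof of Proposition~\ref{CorrTerm} (here even simpler, as $\bar\lambda^{2}$ is constant and $\xi$ does not involve $y$), and the latter is inherited from the explicit formulas of Proposition~\ref{ffV0prop}.
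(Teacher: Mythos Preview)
Your proposal is correct and follows essentially the same route as the paper's proof: the same change of variables $\xi=-\log V^{(0)}_x-\tfrac{\bar\lambda^{2}}{2}t$, the same reduction of \eqref{V1fast} to $w^{(1)}_t+\tfrac{\bar\lambda^{2}}{2}w^{(1)}_{\xi\xi}=K\,w^{(0)}_{\xi\xi}$, the same explicit solution $w^{(1)}=tKw^{(0)}_{\xi\xi}$, and the same appeal to Widder's Theorem for uniqueness. Your treatment is in fact slightly more explicit than the paper's in identifying the source term via $w^{(0)}_{\xi\xi}=\tfrac{V^{(0)}_x}{V^{(0)}_{xx}}\big(\tfrac{(V^{(0)}_x)^{2}}{V^{(0)}_{xx}}\big)_x$ and in arguing the $y$-independence of $V^{(1)}$, but the argument is otherwise the same.
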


\begin{proof}
We introduce a new space variable 
\begin{equation}
\xi:=-\log V^{(0)}_x-\frac{\bar{\lambda}^2}{2}\,t.\label{CoVfast}
\end{equation}
Since $V^{(0)}$ is strictly increasing and strictly concave in $x$ for any given $t$, $\xi$ is a strictly increasing function of $x$ and we may define 
\begin{equation}
w^{(0)}(t,\xi):=V^{(0)}(t,x)\quad\text{and}\quad w^{(1)}(t,\xi):=V^{(1)}(t,x).
\end{equation}
Following the calculations in the proof of Proposition \ref{CorrTerm} gives
\begin{equation}\label{w0eq}
w^{(0)}_t+\frac{1}{2}\,\bar{\lambda}^2\,w^{(0)}_{\xi\xi}=0.
\end{equation}
A similar computation starting from \eqref{V1fast} shows that the transformed correction term $w^{(1)}$ satisfies the forward heat equation
\begin{equation}\label{w1eq}
w^{(1)}_t+\frac{1}{2}\,\bar{\lambda}^2\,w^{(1)}_{\xi\xi}
=w^{(0)}_{\xi\xi}\left(\int_\rr \phi'(y)\,\alpha(y)\,\lambda(y)^T\,\mu(\mathrm{d}y)\right)\rho\,,
\end{equation}
with the initial condition $w^{(1)}(0,\xi)=0$. At this point, it is easy to check (using \eqref{w0eq}) that 
\begin{equation}\label{w1expl}
w^{(1)}=t\,w^{(0)}_{\xi\xi}\left(\int_\rr \phi'(y)\,\alpha(y)\,\lambda(y)^T\,\mu(\mathrm{d}y)\right)\rho
\end{equation}
satisfies \eqref{w1eq} with the desired initial condition. Changing back to the original coordinates we easily obtain \eqref{V1fastsol}.

\medskip

The uniqueness part of the proposition follows from the uniqueness of the solution of the Cauchy problem for the forward heat equation (see \cite[Theorem 8.1]{Wi}). 
\end{proof}
\begin{rmk}\label{overlaps2}
This result can be considered as the forward performance analog of \cite[Proposition 2.7]{FSZ} for the (backwards in time) Merton problem, but here the transformation \eqref{CoVfast} is crucial to reduce to the ill-posed heat equation for which Widder's Theorem can be applied. The differences highlighted in Remark \ref{overlaps} apply here too.
\end{rmk}
\medskip

Next, we give an additional representation for the correction term $V^{(1)}$ which has a natural interpretation in terms of the original portfolio optimization problem.

\begin{proposition}[Natural parametrization of correction term, fast factor]\label{fastnp}
Let $w^{(0)}$, $w^{(1)}$ be the functions $V^{(0)}$, $V^{(1)}$ from Propositions \ref{ffV0prop}, \ref{fast_corr_prop}  written in the coordinates
\begin{equation}
(t,\xi):=\bigg(t,-\log V^{(0)}_x-\frac{\bar{\lambda}^2}{2}\,t\bigg).
\end{equation} 
Then: 
\begin{enumerate}[(i)]
\item $w^{(1)}$ admits the representation 
\begin{equation}\label{w1np}
w^{(1)}(t,\xi)=\int_0^t w^{(1),s}(t,\xi)\,\mathrm{d}s
\end{equation}
where each $w^{(1),s}$ is the solution of the initial value problem
\eq
w^{(1),s}_t + \frac{\bar{\lambda}^2}{2}\,w^{(1),s}_{\xi\xi}=0,\quad t\ge s, \label{auxhe1}
\en
with initial condition
\eq
w^{(1),s}(s,\xi)=w^{(0)}_{\xi\xi}(s,\xi)\left(
\int_\rr \phi'(y)\,\alpha(y)\,\lambda(y)^T\,\mu(\mathrm{d}y)\right)\rho. \label{auxhe2}
\en
It can be therefore represented as 
\begin{equation}\label{whatisnus}
w^{(1),s}(t,\xi)=\int_\rr e^{z\xi-z^2(t-s)}\,\nu^{(s)}(\mathrm{d}z),  
\end{equation}
with $\nu^{(s)}$ being a suitable signed finite Borel measure on $\rr$. 
\smallskip
\item In the original coordinates, the same representation reads
\begin{equation}\label{V1np}
V^{(1)}(t,x)=\int_0^t V^{(1),s}(t,x)\,\mathrm{d}s
\end{equation}
where each $V^{(1),s}$ is the solution of the initial value problem
\[
V^{(1),s}_t + \frac{\bar{\lambda}^2}{2}\,\frac{\big(V^{(0)}_x\big)^2}{\big(V^{(0)}_{xx}\big)^2}\,V^{(1),s}_{xx}-\bar{\lambda}^2\,\frac{V^{(0)}_x}{V^{(0)}_{xx}}\,V^{(1),s}_x=0,\quad t\ge s,
\]
with initial condition
\[
V^{(1),s}(s,x)=\frac{V^{(0)}_x(s,x)}{V^{(0)}_{xx}(s,x)}
\,\bigg(\frac{\big(V^{(0)}_x(s,x)\big)^2}{V^{(0)}_{xx}(s,x)}\bigg)_x
\left(\int_\rr \phi'(y)\,\alpha(y)\,\lambda(y)^T\,\mu(\mathrm{d}y)\right)\rho.
\]
\end{enumerate}
\end{proposition}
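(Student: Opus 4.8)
\textit{Proof of Proposition \ref{fastnp}.}
The plan is to mirror the argument used for the slow factor case in Proposition \ref{slownpprop}, exploiting the fact that, in the coordinates $(t,\xi)$, the correction term $w^{(1)}$ solves the forward heat equation with source \eqref{w1eq}. First I would recall from the proof of Proposition \ref{ffV0prop} (via the change of variables \eqref{CoVfast}) that $w^{(0)}$ is a classical solution of the forward heat equation \eqref{w0eq}, and hence so is $w^{(0)}_{\xi\xi}$, by differentiating \eqref{w0eq} twice in $\xi$. Since the initial datum in \eqref{auxhe2} is a fixed multiple of $w^{(0)}_{\xi\xi}(s,\cdot)$ (the factor $\big(\int_\rr \phi'(y)\,\alpha(y)\,\lambda(y)^T\,\mu(\mathrm{d}y)\big)\rho$ being a constant scalar), an application of Widder's Theorem \cite[Theorem 8.1]{Wi} shows that the initial value problem \eqref{auxhe1}--\eqref{auxhe2} has a solution $w^{(1),s}$ existing for all $t\ge s$, so each $w^{(1),s}$ is well-defined, and simultaneously yields the positive-mode representation \eqref{whatisnus} for the signed measure $\nu^{(s)}$ (after decomposing the datum into its Laplace modes).

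Next I would verify the representation \eqref{w1np}. By Widder's Theorem the forward heat equation with source \eqref{w1eq} has a unique classical solution starting from the zero initial condition, so it suffices to check that $(t,\xi)\mapsto\int_0^t w^{(1),s}(t,\xi)\,\mathrm{d}s$ solves \eqref{w1eq} with zero initial datum. Differentiating in $t$ under the integral sign and using \eqref{auxhe1} gives
\begin{equation}
\partial_t\!\left(\int_0^t w^{(1),s}(t,\xi)\,\mathrm{d}s\right)
= w^{(1),t}(t,\xi) - \frac{\bar{\lambda}^2}{2}\int_0^t w^{(1),s}_{\xi\xi}(t,\xi)\,\mathrm{d}s
= w^{(0)}_{\xi\xi}(t,\xi)\!\left(\int_\rr \phi'(y)\,\alpha(y)\,\lambda(y)^T\,\mu(\mathrm{d}y)\right)\!\rho
- \frac{\bar{\lambda}^2}{2}\,\partial_{\xi\xi}\!\left(\int_0^t w^{(1),s}(t,\xi)\,\mathrm{d}s\right),
\end{equation}
where the second equality uses the initial condition \eqref{auxhe2} at time $s=t$. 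This is precisely \eqref{w1eq}, and the integral clearly vanishes at $t=0$; uniqueness then forces the right-hand side of \eqref{w1np} to coincide with $w^{(1)}$. The representation \eqref{whatisnus} now follows by applying Widder's Theorem once more to each $w^{(1),s}$, viewed as a solution of the backward-in-time Cauchy problem for the heat equation on $[s,\infty)$.

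Finally, part (ii) is obtained by writing the statements of part (i) in the original coordinates $(t,x)$: substituting $\xi=-\log V^{(0)}_x-\frac{\bar\lambda^2}{2}t$, the forward heat equation \eqref{auxhe1} becomes the quasilinear equation displayed in (ii) by the same computation as in the proof of Proposition \ref{CorrTerm} (equations \eqref{wslowt}--\eqref{wslowxx}), while the initial condition \eqref{auxhe2} transforms into the stated one upon using $w^{(0)}_{\xi\xi}=\big(\frac{V^{(0)}_x}{V^{(0)}_{xx}}\big)\big(\big(\frac{(V^{(0)}_x)^2}{V^{(0)}_{xx}}\big)_x\big/\!\ldots\big)$ — concretely, the identities $w^{(0)}_\xi=-V^{(0)}_{xx}/V^{(0)}_x\cdot V^{(0)}$-derivative relations established in \eqref{wslowx}--\eqref{wslowxx} give $w^{(0)}_{\xi\xi}(s,\xi)=\frac{V^{(0)}_x(s,x)}{V^{(0)}_{xx}(s,x)}\big(\frac{(V^{(0)}_x(s,x))^2}{V^{(0)}_{xx}(s,x)}\big)_x$. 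The main obstacle is purely bookkeeping: ensuring that the change of variables and the differentiation under the integral sign are justified (which is guaranteed by the smoothness of $V^{(0)}$ coming from Proposition \ref{ffV0prop} and Assumption \ref{main_asmp}) and that the constant vector $\big(\int_\rr\phi'(y)\,\alpha(y)\,\lambda(y)^T\,\mu(\mathrm{d}y)\big)\rho$ is carried correctly through each step; there is no genuine analytic difficulty beyond the repeated invocation of Widder's uniqueness theorem. \ep
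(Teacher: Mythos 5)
Your argument is correct, but it takes a slightly different (and heavier) route than the paper's own proof. You transplant the slow-factor strategy of Proposition \ref{slownpprop}: differentiate $\int_0^t w^{(1),s}(t,\xi)\,\mathrm{d}s$ under the integral sign, use \eqref{auxhe1}--\eqref{auxhe2} to show it solves the source equation \eqref{w1eq} with zero initial datum, and then invoke uniqueness (via Widder) to identify it with $w^{(1)}$. The paper instead observes that, in the fast-factor case, the initial datum \eqref{auxhe2} is a \emph{constant} multiple of $w^{(0)}_{\xi\xi}(s,\cdot)$, and since $w^{(0)}_{\xi\xi}$ itself solves \eqref{w0eq}, each $w^{(1),s}$ is simply $w^{(0)}_{\xi\xi}(t,\xi)\big(\int_\rr\phi'(y)\,\alpha(y)\,\lambda(y)^T\,\mu(\mathrm{d}y)\big)\rho$, independent of $s$; integrating over $s\in[0,t]$ then reproduces the explicit formula \eqref{w1expl} from Proposition \ref{fast_corr_prop}, so \eqref{w1np} follows by direct comparison, with no appeal to uniqueness for the source problem and no differentiation under the integral to justify. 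Your approach buys robustness (it would also cover an $s$-dependent datum, as in the slow case), at the cost of the extra uniqueness step and the interchange of derivative and integral; the paper's buys a one-line identification specific to this setting. The remaining ingredients — well-posedness of \eqref{auxhe1}--\eqref{auxhe2}, the Widder representation \eqref{whatisnus}, and the change of variables giving $w^{(0)}_{\xi\xi}=\frac{V^{(0)}_x}{V^{(0)}_{xx}}\big(\frac{(V^{(0)}_x)^2}{V^{(0)}_{xx}}\big)_x$ for part (ii) — agree with the paper, though your displayed formula for $w^{(0)}_{\xi\xi}$ just before the final identity is garbled as written and should simply be replaced by that identity.
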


\smallskip

\begin{rmk}\label{fastremark}
The quantities $V^{(1),s}$ (or, equivalently, $w^{(1),s}$) of Proposition \ref{fastnp} should be interpreted in the same way as ``auxiliary'' forward performance processes as their analogues in the slow factor case. We refer to Remark \ref{rmk_w1s} above for more details, but point out that the asymptotic analysis identifies the constant vector $\left(\int_\rr \phi'(y)\,\alpha(y)\,\lambda(y)^T\,\mu(\mathrm{d}y)\right)\rho$ as the principal correcting effect of stochastic Sharpe ratio, in the case of nonzero correlations $\rho$. 
\end{rmk}

\smallskip

\noindent\textit{Proof of Proposition \ref{fastnp}.} We first recall from the proof of Proposition \ref{fast_corr_prop} that $w^{(0)}$ is a classical solution of the forward heat equation \eqref{w0eq}. Hence, $w^{(0)}_{\xi\xi}$ is a solution of the same equation and, therefore, the solutions $w^{(1),s}$, $s\ge0$ of \eqref{auxhe1}, \eqref{auxhe2} are well-defined and given by
\begin{equation}
w^{(1),s}(t,x)=w^{(0)}_{\xi\xi}(t,\xi)\,
\left(\int_\rr \phi'(y)\,\alpha(y)\,\lambda(y)^T\,\mu(\mathrm{d}y)\right)\rho\,. 
\end{equation} 
Therefore the right-hand side of the representation \eqref{w1np} is equal to the right-hand side of \eqref{w1expl}, and, thus, \eqref{w1np} immediately follows. Moreover, the representation \eqref{whatisnus} is a direct consequence of Widder's Theorem. Finally, part (ii) of the proposition can be either established in the same way as part (i), or by changing to the original coordinates $(t,x)$ in \eqref{w1np}, \eqref{auxhe1} and \eqref{auxhe2}. \ep

\medskip

We conclude this section with the appropriate remainder estimate. Specifically, we will show that the error in the approximation of the true value function $V$ by $V^{(0)}+\sqrt{\epsilon}\,V^{(1)}$ is of the order $\epsilon$. To this end, we introduce the non-linear functional
\begin{equation}\label{eta_fast}
\begin{split}
& \eta^{\epsilon} := \frac{1}{2\epsilon^{1/2}}\,\|\lambda(y)\|^2\left(\frac{V^{(0)}_x}{V^{(0)}_{xx}}\right)^2\frac{1}{V_{xx}}
\,\big(V_{xx}-V^{(0)}_{xx}\big) \\
&+\frac{1}{2\epsilon}\,\|\lambda(y)\|^2\left(\frac{V^{(0)}_x}{V^{(0)}_{xx}}\right)^2 \frac{1}{V_{xx}}\,
\big(V_{xx}-V^{(0)}_{xx}-\epsilon^{1/2}\,V^{(1)}_{xx}\big)\,\big(V_{xx}-V^{(0)}_{xx}\big) \\
& +\lambda(y)^T\,V^{(0)}_x\,\big(\lambda(y)\,V^{(1)}_x+\alpha(y)\,\rho\,V^{(2)}_{xy}\big)
\,\bigg(\frac{1}{V_{xx}}-\frac{1}{V^{(0)}_{xx}}\bigg)\\
&-\frac{1}{\epsilon}\,\|\lambda(y)\|^2\,V^{(0)}_x\,\big(V_x-V^{(0)}_x-\epsilon^{1/2}\,V^{(1)}_x\big)
\,\bigg(\frac{1}{V_{xx}}-\frac{1}{V^{(0)}_{xx}}\bigg) \\
& +\frac{1}{\epsilon^{3/2}}\,\lambda(y)^T\,\alpha(y)\,\rho\,\frac{V^{(0)}_x}{V_{xx}}
\,\big(V_{xy}-\epsilon\,V^{(2)}_{xy}\big)
+\frac{1}{2\epsilon}\,\frac{1}{V_{xx}}\,\big\|\lambda(y)\,V_x-\lambda(y)\,V^{(0)}_x
+\epsilon^{-1/2}\,\alpha(y)\,\rho\,V_{xy}\big\|^2 \\
& +\frac{1}{\epsilon^2}\,\bigg(\bigg(V_t-\frac{1}{2}\,\frac{\|V_x\,\lambda(y)+V_{xy}\,\frac{1}{\sqrt{\epsilon}}
\,\alpha(y)\,\rho\|^2}{V_{xx}}\bigg)
-\epsilon\,\bigg(V^{(0)}_t-\frac{\|\lambda(y)\|^2}{2}\,\frac{\big(V^{(0)}_x\big)^2}{V^{(0)}_{xx}}\bigg) \\
&  -\epsilon^{3/2}\,\bigg(V^{(1)}_t+\frac{\|\lambda(y)\|^2}{2}
\left(\frac{V^{(0)}_x}{V^{(0)}_{xx}}\right)^2 V^{(1)}_{xx}
-\left(\frac{V^{(0)}_x}{V^{(0)}_{xx}}\right)\lambda(y)^T\bigg(V^{(1)}_x\,\lambda(y)
+\phi'(y)\,\bigg(\frac{\big(V^{(0)}_x\big)^2}{V^{(0)}_{xx}}\bigg)_x\,\alpha(y)\,\rho\bigg)\bigg)\bigg).
\end{split}
\end{equation}
Here $V^{(2)}$ is defined through \eqref{whatisV2}, and we note that the value of $\eta^{\epsilon}$ does not depend on the choice of the constant $C(t,x)$ in \eqref{whatisV2}. We also set $(t,\xi,y):=\big(t,-\log V^{(0)}_x-\frac{\|\lambda(y)\|^2}{2}\,t,y\big)$, and let $\tilde{\eta}^{\epsilon}(t,\xi,y)=\eta^{\epsilon}(t,x,y)$. 

\begin{theorem}[Remainder estimate, fast factor]\label{fast_conv_thm}
Suppose that there exist $\epsilon_0>0$ and $T\le\infty$ such that for all $\epsilon\in(0,\epsilon_0)$ the HJB equation \eqref{HJB4} has a solution $V\in C^{1,2,2}\big([0,T)\times(0,\infty)\times\rr\big)$ which is increasing and strictly concave in the second argument. Then, 
\begin{enumerate}[(i)]
\item the quantity
\begin{equation}
\int_\rr e^{-\frac{z^2}{2t}}\,\sum_{k=0}^\infty \frac{(-1)^k\,z^{2k}}{(2k)!\,2^k\,t^k}\,
\bigg(\frac{\mathrm{d}}{\mathrm{d}\xi}\bigg)^{2k}\,\int_0^t \int_\rr \tilde{\eta}^{\epsilon}(s,\xi-\chi,y)\,s^{-1/2}
\,e^{-\frac{\chi^2}{2s}}\,\mathrm{d}\chi\,\mathrm{d}s\,\mathrm{d}z
\end{equation}
(defined via \eqref{eta_fast} and the paragraph following it) is well-defined and finite for all $\epsilon\in(0,\epsilon_0)$, and 
\item for every $(t,x,y)\in[0,T)\times(0,\infty)\times\rr$ for which the limit superior
\begin{equation}\label{conv_cond}
\underset{\epsilon\downarrow0}{\overline{\lim}}\;
\bigg|\int_\rr e^{-\frac{z^2}{2t}}\,\sum_{k=0}^\infty \frac{(-1)^k\,z^{2k}}{(2k)!\,2^k\,t^k}\,
\bigg(\frac{\mathrm{d}}{\mathrm{d}\xi}\bigg)^{2k}\,\int_0^t \int_\rr \tilde{\eta}^{\epsilon}(s,\xi-\chi,y)\,s^{-1/2}
\,e^{-\frac{\chi^2}{2s}}\,\mathrm{d}\chi\,\mathrm{d}s\,\mathrm{d}z\bigg|
\end{equation}
is finite, the error bound
\begin{equation}\label{conv_fast_st}
\underset{\epsilon\downarrow0}{\overline{\lim}}\;\,\epsilon^{-1}\,
\Big|V(t,x,y)-V^{(0)}(t,x)-\sqrt{\epsilon}\,V^{(1)}(t,x)\Big|<\infty
\end{equation}
applies. If the limit superior \eqref{conv_cond} is bounded above uniformly on a subset of $[0,T)\times(0,\infty)\times\rr$, then the convergence in \eqref{conv_fast_st} is uniform on the same subset of $[0,T)\times(0,\infty)\times\rr$. 
\end{enumerate}
\end{theorem}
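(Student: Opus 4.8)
The plan is to mimic the structure of the proof of Theorem \ref{conv_prop} in the slow factor case, but now keeping careful track of the additional terms arising from the fast factor scaling. First I would rewrite the HJB equation \eqref{HJB4} by isolating the ``complete market'' operator: expanding the squared norm in \eqref{HJB4} and collecting terms produces an equation of the form
\begin{equation*}
V_t-\frac{\|\lambda(y)\|^2}{2}\,\frac{V_x^2}{V_{xx}}
-\frac{1}{\sqrt{\epsilon}}\,\lambda(y)^T\,\alpha(y)\,\rho\,\frac{V_x\,V_{xy}}{V_{xx}}
+\frac{\alpha(y)^2}{2\epsilon}\,V_{yy}+\frac{\gamma(y)}{\epsilon}\,V_y
-\frac{\alpha(y)^2\,\|\rho\|^2}{2\epsilon}\,\frac{V_{xy}^2}{V_{xx}}=0,
\end{equation*}
which is the fast-factor analogue of \eqref{HJB2rewrite}. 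Next I would substitute $V=V^{(0)}+\sqrt{\epsilon}\,V^{(1)}+\epsilon\,V^{(2)}+\epsilon^{3/2}\,Q$, with $V^{(0)}$, $V^{(1)}$, $V^{(2)}$ the functions constructed in Propositions \ref{ffV0prop}, \ref{fast_corr_prop} and formula \eqref{whatisV2}, and expand all the $\frac{1}{V_{xx}}$-type terms in powers of $\sqrt{\epsilon}$ using the elementary identity $\frac{1}{a+\sqrt{\epsilon}\,b}=\frac1a-\sqrt{\epsilon}\,\frac{b}{a^2+\sqrt{\epsilon}\,ab}$ applied around $a=V^{(0)}_{xx}$. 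By construction the terms of orders $\epsilon^{-1}$, $\epsilon^{-1/2}$, $1$ and $\epsilon^{1/2}$ all cancel (these cancellations are exactly the equations solved in Propositions \ref{ffV0prop} and \ref{fast_corr_prop} and in \eqref{V2ODE}), and what survives is an equation of the form
\begin{equation*}
Q_t-\|\lambda(y)\|^2\,\frac{V^{(0)}_x}{V^{(0)}_{xx}}\,Q_x
+\frac{\|\lambda(y)\|^2}{2}\,\frac{\big(V^{(0)}_x\big)^2}{\big(V^{(0)}_{xx}\big)^2}\,Q_{xx}
+(\text{lower order in }\epsilon\text{ and }y\text{-derivatives of }Q)=\epsilon\,\eta^{\epsilon},
\end{equation*}
where $\eta^{\epsilon}$ is precisely the functional \eqref{eta_fast}. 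The point of the careful bookkeeping in \eqref{eta_fast} is that every term there is manifestly the difference of two quantities that agree up to the relevant order in $\sqrt{\epsilon}$ (because $V^{(0)},V^{(1)},V^{(2)}$ are obtained from $V$ by setting $\epsilon=0$ order by order), so $\eta^{\epsilon}$ stays bounded and of order $1$ as $\epsilon\downarrow0$, which is what makes the right-hand side genuinely $O(\epsilon)$.

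Having obtained the equation for $Q$, I would perform the change of variables $(t,\xi,y)=\big(t,-\log V^{(0)}_x-\frac{\|\lambda(y)\|^2}{2}\,t,y\big)$ exactly as in \eqref{CoVslow}. By the same computation as in the proof of Proposition \ref{CorrTerm} (the identities \eqref{wslowt}--\eqref{wslowxx} and their consequence \eqref{w0heateq}), the operator $Q_t-\|\lambda\|^2\frac{V^{(0)}_x}{V^{(0)}_{xx}}Q_x+\frac{\|\lambda\|^2}{2}\frac{(V^{(0)}_x)^2}{(V^{(0)}_{xx})^2}Q_{xx}$ transforms into $q_t+\frac{\|\lambda(y)\|^2}{2}\,q_{\xi\xi}$, so that with $q(t,\xi,y):=\epsilon^{3/2}\,Q(t,x,y)$ and $\tilde\eta^{\epsilon}(t,\xi,y)=\eta^{\epsilon}(t,x,y)$ the equation becomes the backward heat equation with source
\begin{equation*}
q_t+\frac{\|\lambda(y)\|^2}{2}\,q_{\xi\xi}=\epsilon\,\tilde\eta^{\epsilon},
\qquad q(0,\xi,y)=0
\end{equation*}
(the $y$-derivative terms in the $Q$-equation are lower order and, as in the slow case, only contribute to $\eta^{\epsilon}$ itself; here $y$ enters $\tilde\eta^{\epsilon}$ only as a parameter). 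Applying Duhamel's principle to this backward heat equation gives the analogue of \eqref{Duhamel}:
\begin{equation*}
\frac{1}{\sqrt{2\pi t}\,\|\lambda(y)\|}\int_\rr q(t,\xi-\chi,y)\,e^{-\frac{\chi^2}{2\|\lambda(y)\|^2 t}}\,\mathrm{d}\chi
=\epsilon\int_0^t\frac{1}{\sqrt{2\pi s}\,\|\lambda(y)\|}\int_\rr\tilde\eta^{\epsilon}(s,\xi-\chi,y)\,e^{-\frac{\chi^2}{2\|\lambda(y)\|^2 s}}\,\mathrm{d}\chi\,\mathrm{d}s.
\end{equation*}
(Since the statement of the theorem writes the Gaussian kernels with $\|\lambda(y)\|$ replaced by $1$, I would absorb $\|\lambda(y)\|$ into $\xi$ or simply note that for fixed $y$ it is a harmless rescaling of the variable; this is the only cosmetic discrepancy to reconcile.) The right-hand side, being a Weierstrass transform of the explicit function $\tilde\eta^{\epsilon}$ integrated against a heat kernel, lies in the domain of the inverse Weierstrass transform of \cite[equations (5),(6)]{Wi2}, which immediately gives well-definedness and finiteness of the quantity in part (i) for each $\epsilon\in(0,\epsilon_0)$. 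Inverting the Weierstrass transform on both sides expresses $\xi\mapsto\epsilon^{-1}q(t,\xi,y)$, up to a multiplicative constant, as the quantity inside the absolute value in \eqref{conv_cond}; since $V-V^{(0)}-\sqrt{\epsilon}V^{(1)}=\epsilon\,V^{(2)}+\epsilon^{3/2}Q=\epsilon V^{(2)}+q$ and $V^{(2)}$ is $O(1)$, finiteness of the $\overline{\lim}$ in \eqref{conv_cond} yields \eqref{conv_fast_st}, with uniformity passing through verbatim.

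The main obstacle I anticipate is the ``lengthy but straightforward'' bookkeeping needed to verify that, after substituting the four-term expansion and expanding the nonlinearity, the leftover right-hand side is exactly $\epsilon\,\tilde\eta^{\epsilon}$ with $\eta^{\epsilon}$ given by \eqref{eta_fast} — in particular matching every one of the seven grouped terms in \eqref{eta_fast}, including the subtle ones where one must use that $V^{(2)}$ only appears through $\phi$ (so the free constant $C(t,x)$ in \eqref{whatisV2} drops out) and that the $\epsilon^{-1/2}$ and $\epsilon^{-1}$ singular prefactors are tamed precisely because the bracketed differences vanish to the matching order in $\sqrt{\epsilon}$. A secondary technical point is justifying that $Q$ (equivalently $q$) is regular enough — $C^{1,2,2}$ with the growth needed for Duhamel and for the inverse Weierstrass transform — which follows from the assumed regularity of $V$ together with the explicit smoothness of $V^{(0)},V^{(1)},V^{(2)}$, but should be stated. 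Everything else is a transcription of the slow-factor argument.
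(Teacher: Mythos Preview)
Your strategy is right in spirit---reduce to a backward heat equation with source via the change of variables and then invoke Duhamel plus the inverse Weierstrass transform---but the execution has a real gap in the fast-factor bookkeeping.

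The ansatz you substitute stops one term short: you take $V=V^{(0)}+\sqrt{\epsilon}\,V^{(1)}+\epsilon\,V^{(2)}+\epsilon^{3/2}\,Q$, whereas the paper uses $V=V^{(0)}+\sqrt{\epsilon}\,V^{(1)}+\epsilon\,V^{(2)}+\epsilon^{3/2}\,V^{(3)}+\epsilon^{2}\,Q$. Without the explicit $V^{(3)}$ term, the order-$\epsilon^{1/2}$ contributions do \emph{not} cancel: recall that it is precisely the equation containing $\mathcal{L}_{y}V^{(3)}$ (whose $\mu$-average produces \eqref{V1fast}) that pins down $V^{(1)}$. In your expansion the generator contributes $\epsilon^{-1}\mathcal{L}_y(\epsilon^{3/2}Q)=\epsilon^{1/2}\mathcal{L}_yQ$, so after normalizing to unit coefficient on $Q_t$ (which sits at order $\epsilon^{3/2}$) the $y$-derivative term carries a prefactor $\epsilon^{-1}$. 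It is therefore not ``lower order'' as you assert, and cannot simply be pushed into $\eta^{\epsilon}$ while keeping the claimed scaling $\epsilon\,\eta^{\epsilon}$ on the right.

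More importantly, the function you feed into the heat equation is the wrong one. You set $q=\epsilon^{3/2}Q=V-V^{(0)}-\sqrt{\epsilon}\,V^{(1)}-\epsilon\,V^{(2)}$ and claim $q(0,\cdot,\cdot)=0$; but from \eqref{whatisV2} one has $V^{(2)}(0,x,y)=-\tfrac{1}{2}\phi(y)\,(V^{(0)}_x)^2/V^{(0)}_{xx}\big|_{t=0}+C(0,x)\neq 0$, so $q(0,\cdot,\cdot)=-\epsilon\,V^{(2)}(0,\cdot,\cdot)\neq 0$ and your Duhamel identity with zero initial data is false. The paper instead \emph{recombines}: it works directly with
\[
\tilde{Q}:=\epsilon^{-1}\big(V-V^{(0)}-\sqrt{\epsilon}\,V^{(1)}\big)=V^{(2)}+\sqrt{\epsilon}\,V^{(3)}+\epsilon\,Q,
\]
which is exactly the scaled error one wants to bound, has zero initial data, and (after the algebra using all four of $V^{(0)},V^{(1)},V^{(2)},V^{(3)}$ to cancel the orders $\epsilon^{-1},\epsilon^{-1/2},1,\epsilon^{1/2}$) satisfies the clean linear equation \eqref{R_fast_eq} with source $\eta^{\epsilon}$. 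This recombination step---absorbing $V^{(2)}$ and $V^{(3)}$ back into the remainder rather than subtracting them off---is the one genuinely new ingredient relative to the slow-factor proof, and it is missing from your plan.

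A minor point: your assertion that ``$\eta^{\epsilon}$ stays bounded and of order $1$ as $\epsilon\downarrow 0$'' is neither needed nor established; the theorem only asserts that \emph{if} the inverse Weierstrass transform of its Duhamel integral has finite $\limsup$, then the error bound follows.
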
  

\begin{rmk}
Condition \eqref{conv_cond} is of the same form as condition \eqref{etacontr2}, and the detailed interpretation of the latter given in Remark \ref{conv_rmk} applies here as well. 
\end{rmk}

\smallskip

\noindent\textit{Proof of Theorem \ref{fast_conv_thm}.} We proceed as in the proof of Theorem \ref{conv_prop}. Specifically, we insert the ansatz $V=V^{(0)}+\epsilon^{1/2}\,V^{(1)}+\epsilon\,V^{(2)}+\epsilon^{3/2}\,V^{(3)}+\epsilon^2\,Q$ into the HJB equation \eqref{HJB4} and expand the resulting equation in the powers of $\epsilon^{1/2}$. The terms $V^{(0)}$, $V^{(1)}$, $V^{(2)}$, and $V^{(3)}$ were chosen in such a way that all terms on the orders of $\frac{1}{\epsilon}$, $\frac{1}{\epsilon^{1/2}}$, $1$ and $\epsilon^{1/2}$ cancel out. At this point, a tedious but straightforward computation relying on the elementary identity 
\begin{equation}
\frac{1}{a+\epsilon^{1/2}\,b}=\frac{1}{a}-\epsilon^{1/2}\,\frac{b}{a^2+\epsilon^{1/2}\,a\,b},\qquad a<0,\;\;b<-\epsilon^{-1/2}\,a
\end{equation}
allows to compute the terms of order $\epsilon$ and leads to 
\begin{equation}\label{R_fast_eq}
\epsilon\,\tilde{Q}_t+\frac{\epsilon}{2}\,\frac{\|\lambda(y)\|^2\,\big(V^{(0)}_x\big)^2}{\big(V^{(0)}_{xx}\big)^2}\,\tilde{Q}_{xx}
-\epsilon\,\frac{\|\lambda(y)\|^2\,V^{(0)}_x}{V^{(0)}_{xx}}\,\tilde{Q}_x = \epsilon\,\eta^{\epsilon},
\end{equation}
where $\tilde{Q}:=\epsilon^{-1}\big(V-V^{(0)}-\epsilon^{1/2}\,V^{(1)}\big)=V^{(2)}+\epsilon^{1/2}\,V^{(3)}+\epsilon\,Q$ and $\eta^{\epsilon}$ is defined according to \eqref{eta_fast}. One can then conclude the argument by repeating the steps in the proof of Theorem \ref{conv_prop}, making the change of coordinates
\begin{equation}
(t,\xi,y):=\Big(t,-\log V^{(0)}_x-\frac{\|\lambda(y)\|^2}{2}\,t,y\Big)
\end{equation}
in \eqref{R_fast_eq}, and combining Duhamel's principle for the resulting equation with the formula for the inverse Weierstrass transform given in \cite{Wi2}. \ep 

\begin{rmk}\label{fast_ex_rmk}
Consider the example of Section \ref{new_ex} with the explicit solution reparametrized according to $\delta=\epsilon^{-1}$, and take the corresponding approximation for a fast volatility factor as considered in this section. Then, by direct computation $\phi(y)=\|\Lambda\|^2(y-m)$ and, therefore, our approximation is
\eq
V^{(0)}+\sqrt{\epsilon}\,V^{(1)}=\gamma^\gamma\,\frac{x^{1-\gamma}}{1-\gamma}\,e^{-\frac{1}{2}\,\Gamma\,\bar{\lambda}^2 t}+\sqrt{\epsilon}\,\|\Lambda\|^2\,\beta\,m\,\Lambda^T\rho\\,\Gamma^2\,\gamma^\gamma\,\frac{x^{1-\gamma}}{1-\gamma}\,te^{-\frac{1}{2}\,\Gamma\,\bar{\lambda}^2 t}.
\en
The corresponding remainder $\eta^\epsilon$ has a complicated dependence on $\epsilon$. However, an explicit Taylor expansion in $\epsilon$ of the functions $A_1$, $A_2$ in the formula for $V$ of \eqref{V_example} (with $\delta$ replaced by $\epsilon^{-1}$) shows that $\eta^\epsilon$ is given by a product of $x^{1-\gamma}$ and an order one in $\epsilon$ function of $t$, $y$. Consequently, $\tilde{\eta}^\epsilon$ is a product of $e^{\xi\Gamma}$ and a function of $(t,y)$ that is of order $1$ in $\epsilon$. The argument of Section \ref{new_ex} yields that the conditions of Theorem \ref{fast_conv_thm} are satisfied in this example and its fast factor approximation.  
  
\end{rmk}

\section{Multiscale forward investment problem}\label{sec_multi}

We combine our approaches to the forward investment problems with slow and fast factors to analyze the multiscale forward investment problem described in Section \ref{sub_setting}. We consider an expansion for $V(t,x,y_1,y_2)$ in equation \eqref{HJB1} of the form
\begin{equation}
V=V^{(0)}+\sqrt{\delta}\,V^{(1,0)}+\sqrt{\epsilon}\,V^{(0,1)}+O(\delta+\epsilon)
\end{equation}
in the limit regime $\delta\downarrow0$, $\epsilon\downarrow0$. We first give the general results and, in turn, explicit formulas for the case of power utilities in Section \ref{power}.

\subsection{First Order Approximations}
It is convenient to define:
\begin{align}
\bar{\lambda}(y_1) & = \left(\int_\rr \big\|\lambda(y_1,y_2)\big\|^2\,\mu(\mathrm{d}y_2)\right)^{1/2}, \label{lbareqn}\\
C_{1,0}(y_1) & =  \big(\rho^{\mathbf{s}}\big)^T\left(\int_\rr \lambda(y_1,y_2)\,\mu(\mathrm{d}y_2)\right)
\kappa(y_1), \label{C10eqn}\\
C_{0,1}(y_1) & =  \big(\rho^{\mathbf{f}}\big)^T\left(\int_\rr \lambda(y_1,y_2)\,\phi_{y_2}(y_1,y_2)\,
\alpha(y_2)\,\mu(\mathrm{d}y_2)\right), \label{C01eqn}
\end{align}
where 
\begin{equation}
\phi(y_1,y_2)=\int_0^\infty \ev\Big[\big\|\lambda(y_1,Y^{1}(s))\big\|^2
-\bar{\lambda}^2(y_1)\mid Y^1(0)=y_2\Big]\,\mathrm{d}s, \label{phidef}
\end{equation}
and $Y^{1}$ denotes the fast factor process, solution of the SDE \eqref{Yeps}, but with $\epsilon=1$.

The following proposition gives explicit formulas for the leading order term $V^{(0)}$ and the first order correction terms $V^{(1,0)}$ and $V^{(0,1)}$.

\begin{proposition}[Explicit formulas, general case]\label{multi_prop}
\begin{enumerate}[(i)]
\item The leading order term $V^{(0)}$ admits the representation
\begin{equation}
V^{(0)}(t,x,y_1,y_2)=V^{(0)}(t,x,y_1)=u\left(\bar{\lambda}^2(y_1)t,x\right) 
\end{equation}
where $u$ is given by
\begin{eqnarray*}
&& u(t,x)=-\frac{1}{2}\,\int_0^t e^{-h^{(-1)}(s,x)+\frac{s}{2}}\,h_x\big(s,h^{(-1)}(s,x)\big)\,\mathrm{d}s
+V(0,x), \\
&& h(t,x)=\int_\rr \frac{e^{zx-\frac{1}{2}z^2t}-1}{z}\,\nu_0(\mathrm{d}z). +C_0.
\end{eqnarray*}
Here $h^{(-1)}$ is the inverse of $h$ in the variable $x$, and $\nu_0$ 
and $C_0$ were introduced in Assumption \ref{main_asmp}(iii). 
\smallskip

\item The slow scale correction term $V^{(1,0)}$ is given by
\begin{equation}
V^{(1,0)}(t,x,y_1)=\frac{t}{2}\,C_{1,0}(y_1)\frac{V^{(0)}_{xy_{1}}\,V^{(0)}_x}{V^{(0)}_{xx}} \label{V10}
\end{equation}
and admits the natural parametrization 
\begin{equation}\label{V10formula}
V^{(1,0)}(t,x,y_1)=\int_0^t V^{(1),\delta,s}(t,x,y_1)\,\mathrm{d}s,
\end{equation}
where each $V^{(1),\delta,s}$ is a solution of the initial value problem
\[
V^{(1),\delta,s}_t + \frac{\bar{\lambda}^2(y_1)}{2}\,
\left(\frac{V^{(0)}_x}{V^{(0)}_{xx}}\right)^2 V^{(1),\delta,s}_{xx} -\bar{\lambda}^2(y_1)
\left(\frac{V^{(0)}_x}{V^{(0)}_{xx}}\right)V^{(1),\delta,s}_x=0,\quad t\ge s\,,
\]
with initial condition
\[
V^{(1),\delta,s}(s,x,y_1)=C_{1,0}(y_1)\,\frac{V^{(0)}_{xy_1}(s,x,y_1)V^{(0)}_x(s,x,y_1)}{V^{(0)}_{xx}(s,x,y_1)}\,.
\]
\smallskip
\item The fast scale correction term $V^{(0,1)}$ is given by
\begin{equation}
V^{(0,1)}(t,x,y_1)=t\,C_{0,1}(y_1)
\left(\frac{V^{(0)}_x}{V^{(0)}_{xx}}\right)
\bigg(\frac{\big(V^{(0)}_x\big)^2}{V^{(0)}_{xx}}\bigg)_x. \label{V01}
\end{equation}
The function $V^{(0,1)}$ admits the natural parametrization
\begin{equation}
V^{(0,1)}(t,x,y_1)=\int_0^t V^{(1),\epsilon,s}(t,x,y_1)\,\mathrm{d}s,
\end{equation}
where each $V^{(1),\epsilon,s}$ is a solution of the initial value problem
\[ 
V^{(1),\epsilon,s}_t + \frac{\bar{\lambda}^2(y_1)}{2}\,
\left(\frac{V^{(0)}_x}{V^{(0)}_{xx}}\right)^2\,V^{(1),\epsilon,s}_{xx} -\bar{\lambda}^2(y_1)
\left(\frac{V^{(0)}_x}{V^{(0)}_{xx}}\right)V^{(1),\epsilon,s}_x=0,\quad t\ge s\,, 
\]
with initial condition
\[
V^{(1),\epsilon,s}(s,x,y_1)=C_{0,1}(y_1)\,\frac{V^{(0)}_x(s,x,y_1)}{V^{(0)}_{xx}(s,x,y_1)}\bigg(\frac{V^{(0)}_x(s,x,y_1)^2}{V^{(0)}_{xx}(s,x,y_1)}\bigg)_x.
\]
\end{enumerate}
\end{proposition}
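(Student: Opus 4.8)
The plan is to treat Proposition~\ref{multi_prop} as a two-parameter refinement of the single-factor analyses of Sections~\ref{sec_slow} and \ref{sec_fast}, carried out by inserting the ansatz $V=V^{(0)}+\sqrt{\delta}\,V^{(1,0)}+\sqrt{\epsilon}\,V^{(0,1)}+\cdots$ together with the necessary higher order auxiliary terms into the HJB equation \eqref{HJB1}, expanding, and collecting powers of $\sqrt{\delta}$ and $\sqrt{\epsilon}$ separately. First I would record that, under Assumption~\ref{main_asmp}(i), \eqref{HJB1} takes the form
\begin{equation*}
V_t+\mathcal{A}^{\delta,\epsilon}_y V-\frac{1}{2}\,\frac{\big\|V_x\,\lambda+V_{xy_1}\,\sqrt{\delta}\,\kappa\,\rho^{\mathbf{s}}+V_{xy_2}\,\tfrac{1}{\sqrt{\epsilon}}\,\alpha\,\rho^{\mathbf{f}}\big\|^2}{V_{xx}}=0,
\end{equation*}
and that the fast operator $\tfrac{1}{\epsilon}\mathcal{A}^{\epsilon}_{y_2}$ dominates. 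The $\epsilon^{-1}$ and $\epsilon^{-1/2}$ orders force $V^{(0)}$ and $V^{(0,1)}$ to be independent of $y_2$, exactly as in the derivation of \eqref{V0fast} and its successor; the order-one equation then reproduces \eqref{fastV2} with $\lambda$ replaced by $\lambda(y_1,y_2)$, and averaging against the invariant law $\mu$ (Assumption~\ref{inv_asmp}) yields $V^{(0)}_t-\tfrac{1}{2}\bar\lambda^2(y_1)(V^{(0)}_x)^2/V^{(0)}_{xx}=0$ with $\bar\lambda$ as in \eqref{lbareqn}. Since the slow operator $\delta\mathcal{A}^{\delta}_{y_1}$ contributes only at order $\delta$, the leading-order PDE is precisely \eqref{V0eq}/\eqref{fastV2ave} with the Sharpe ratio frozen to $\bar\lambda(y_1)$, so part~(i) follows verbatim from \cite[Theorems 4 and 8]{MZ4} exactly as in Propositions~\ref{slow_lead} and \ref{ffV0prop}. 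I would note that $y_1$ here is a passive parameter, so no extra regularity argument is needed beyond what is already in those proofs.

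For part~(ii) I would isolate the $\sqrt{\delta}$-terms: since $V^{(0)}$ is $y_2$-independent, the $\sqrt{\delta}$-contributions of the fast operator and of the $V_{xy_2}$ cross term vanish, and the coefficient $\kappa\,\rho^{\mathbf{s}}\cdot\lambda$ must be averaged against $\mu$, producing $C_{1,0}(y_1)$ of \eqref{C10eqn}. The resulting PDE for $V^{(1,0)}$ is identical in structure to \eqref{V1slow} but with $\|\lambda(y)\|^2\to\bar\lambda^2(y_1)$ and the source coefficient $\kappa\lambda^T\rho\to C_{1,0}(y_1)$; one caveat is that the source in \eqref{V1slow} involved $\lambda^T\lambda'$ via \eqref{V0yeq}, whereas here the $y_1$-dependence enters $V^{(0)}$ only through $\bar\lambda(y_1)$, so the analogue of identity~(a) in the proof of Proposition~\ref{CorrTerm} must be re-derived with $\bar\lambda$ in place of $\|\lambda\|$, giving $V^{(0)}_{xy_1}$ in terms of $\bar\lambda^T\bar\lambda'$ and derivatives of $V^{(0)}$. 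After the change of variables \eqref{CoVfast} (with $\bar\lambda(y_1)$), the equation becomes a forward heat equation with a source proportional to $t\,w^{(0)}_{\xi\xi\xi}$, whose unique Widder solution is $\tfrac{t^2}{2}(\cdots)w^{(0)}_{\xi\xi\xi}$; translating back gives \eqref{V10}. The natural parametrization \eqref{V10formula} is then obtained by the same Duhamel-superposition computation as in Proposition~\ref{slownpprop}: verify that $\int_0^t V^{(1),\delta,s}\,\mathrm{d}s$ solves the inhomogeneous heat equation with zero initial data by differentiating under the integral, then invoke uniqueness (Widder). Part~(iii) is handled identically, extracting instead the $\sqrt{\epsilon}$-terms; here one needs the correctors $V^{(2)}$ (solving the Poisson equation \eqref{V2ODE} with stochastic representation \eqref{whatisV2}, now $y_1$-parametrized) and $V^{(3)}$, and averaging the order-$\epsilon^{1/2}$ equation against $\mu$ produces the constant $C_{0,1}(y_1)$ of \eqref{C01eqn} via $\int \phi_{y_2}\alpha\lambda^T\,\mu(\mathrm{d}y_2)$, leading to the analogue of \eqref{V1fast} and hence to \eqref{V01} and its parametrization exactly as in Propositions~\ref{fast_corr_prop} and \ref{fastnp}.

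The main obstacle I anticipate is bookkeeping rather than conceptual: one must carefully track which orders in $\sqrt{\delta}$ and in $\sqrt{\epsilon}$ the various terms of $\mathcal{A}^{\delta,\epsilon}_y$ and of the squared-norm nonlinearity land in, and confirm that the $\sqrt{\delta}$ and $\sqrt{\epsilon}$ channels decouple at first order — i.e. that there is no $\sqrt{\delta\epsilon}$ or mixed interaction contaminating $V^{(1,0)}$ or $V^{(0,1)}$ — which holds because $V^{(0)}$ and the first correctors are $y_2$-independent and the cross-correlation $\rho^{\mathbf{s,f}}$ only enters through the second-order operator. A secondary point requiring care is that the change of variables \eqref{CoVfast} and the application of Widder's theorem must be justified with $\bar\lambda(y_1)$ in the role of a ($y_1$-dependent but $t,x$-frozen) constant, so the smoothness of $\bar\lambda$ in $y_1$ — which follows from the assumed smoothness of $\lambda$ in Assumption~\ref{main_asmp}(i) and standard ergodic-average regularity — should be invoked, paralleling remark~(a) in the proof of Proposition~\ref{CorrTerm}. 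Once these are in place, the three parts follow by the arguments already given in Sections~\ref{sec_slow} and \ref{sec_fast} applied mutatis mutandis, and I would present the proof as such rather than repeating the computations in full.
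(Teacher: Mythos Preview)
Your proposal is correct and follows essentially the same route as the paper: the paper first sets $\delta=0$ and repeats the Section~\ref{sec_fast} arguments (Propositions~\ref{ffV0prop}, \ref{fast_corr_prop}, \ref{fastnp}) with $y_1$ as a frozen parameter to obtain parts (i) and (iii), then for part (ii) inserts $V^{(0)}+\sqrt{\delta}\,V^{(1)}$ into \eqref{HJB1}, expands the resulting $\sqrt{\delta}$-equation in $\sqrt{\epsilon}$ (introducing the auxiliaries $V^{(1,1)}$, $V^{(1,2)}$ you allude to), averages against $\mu$ to get $C_{1,0}(y_1)$, and then invokes the heat-equation/Widder argument of Propositions~\ref{CorrTerm} and \ref{slownpprop}. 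This is exactly your joint two-parameter expansion, organized sequentially rather than simultaneously.
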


\begin{rmk}\label{multirmk}
The quantities $V^{(1),\delta,s}$, $V^{(1),\epsilon,s}$ of Proposition \ref{multi_prop} should be interpreted in the same way as their analogues in the single factor cases. We refer to Remark \ref{rmk_w1s} for more details. However, we highlight that the analysis identifies the following reduced parameters: $\bar{\lambda}(y_1)$ in \eqref{lbareqn}, the Sharpe ratio root mean square-averaged with respect to the fast factor and frozen at the value $y_1$ of the slow factor; $C_{1,0}(y_1)$ in \eqref{C10eqn}, which has the effect of the correlation $\rho^{\mathbf{s}}$ between the slow factor and equity returns; and $C_{0,1}(y_1)$ in \eqref{C01eqn}, which has the effect of the correlation $\rho^{\mathbf{f}}$ between the fast factor and equity returns.
\end{rmk}

\smallskip

\noindent\textit{Proof of Proposition \ref{multi_prop}.} We start with the proofs of parts (i) and (iii). To this end, we insert $\delta=0$ into the HJB equation \eqref{HJB1}, employ Assumption \ref{main_asmp}, and then proceed as in the proofs of Propositions \ref{ffV0prop}, \ref{fast_corr_prop}, and \ref{fastnp}. The arguments from therein can be repeated directly by replacing $\lambda(y)$ by $\lambda(y_1,y_2)$. In particular, $V^{(0)}$ and $V^{(0,1)}$ are determined via an averaging of the equations
\eq
V^{(0)}_t-\frac{\|\lambda\|^2}{2}\,\frac{\big(V^{(0)}_x\big)^2}{V^{(0)}_{xx}}
+\mathcal{L}_{y_2}\,V^{(0,2)}=0 \label{LV2}
\en
and
\eq
V^{(0,1)}_t+\frac{\|\lambda\|^2}{2}\left(\frac{V^{(0)}_x}{V^{(0)}_{xx}}\right)^2 V^{(0,1)}_{xx}
-\frac{V^{(0)}_x}{V^{(0)}_{xx}}\lambda^T\bigg(V^{(0,1)}_x\,\lambda
+\phi_{y_2}\,\bigg(\frac{\big(V^{(0)}_x\big)^2}{V^{(0)}_{xx}}\bigg)_x\,\alpha\,\rho\bigg)
+\mathcal{L}_{y_2}\,V^{(0,3)}=0 \label{LV3}
\en
with respect to $\mu(\mathrm{d}y_2)$. Here $\epsilon^{-1}\mathcal{L}_{y_2}$ is the generator of the fast factor $Y^{\epsilon}$, that is
\[ \mathcal{L}_{y_2} = \frac12\alpha(y_2)^2\partial_{y_2y_2} + \gamma(y_2)\partial_{y_2}, \]
and the terms $V^{(0)}$, $V^{(0,1)}$, $V^{(0,2)}$, $V^{(0,3)}$ are the ones appearing in the expansion of the solution to \eqref{HJB1} with $\delta=0$. In particular, subtracting from \eqref{LV2} its averaged version, we obtain the expression
\begin{equation}
V^{(0,2)}(t,x,y_1,y_2) = -\frac12\phi(y_1,y_2)\frac{\big(V^{(0)}_x\big)^2}{V^{(0)}_{xx}}
+C(t,x,y_1) \label{V02eqn}
\end{equation}
where $C(t,x,y_1)$ is a function that does not depend on $y_2$, and $\phi$ was defined in \eqref{phidef}.

\medskip

It remains to prove part (ii). To this end, we again employ Assumption \ref{main_asmp} and insert the ansatz $V^{(0)}+\sqrt{\delta}\,V^{(1)}$ into the HJB equation \eqref{HJB1}. Collecting the terms of order $\sqrt{\delta}$ in the resulting equation we get
\begin{equation}\label{V1mixed}
\begin{split}
V^{(1)}_t-\bigg(\lambda\,\frac{V^{(0)}_x}{V^{(0)}_{xx}}
+\frac{\big(\rho^{\mathbf{f}}\big)^T\alpha}{\sqrt{\epsilon}}\,\frac{V^{(0)}_{xy_2}}{V^{(0)}_{xx}}\bigg)
\bigg(\rho^{\mathbf{s}}\kappa\,V^{(0)}_{xy_1}+\lambda\,V^{(1)}_x
+\frac{\rho^{\mathbf{f}}\alpha}{\sqrt{\epsilon}}\,\frac{V^{(1)}_{xy_2}}{V^{(0)}_{xx}}\bigg)
+\frac{1}{2}\,\frac{\big\|\lambda\,V^{(0)}_x
+\frac{\rho^{\mathbf{f}}\,\alpha}{\sqrt{\epsilon}}\,V^{(0)}_{xy_2}\big\|^2\,V^{(1)}_{xx}}{V^{(0)}_{xx}}\\
+\frac{1}{\epsilon}\,\gamma(y)\,V^{(1)}_{y_2}+\frac{1}{2\epsilon}\,\alpha\,V^{(1)}_{y_2y_2}
+\frac{1}{\sqrt{\epsilon}}\,\kappa\,\alpha\,\rho^{\mathbf{s,f}}\,V^{(0)}_{y_1y_2}=0.
\end{split}
\end{equation}
We can now write $V^{(1)}=V^{(1,0)}+\sqrt{\epsilon}\,V^{(1,1)}+\epsilon\,V^{(1,2)}$ and expand equation \eqref{V1mixed} in powers of $\epsilon$ as in the proof of Proposition \ref{fast_corr_prop}. By doing so, we conclude that $V^{(1,0)}$ and $V^{(1,1)}$ can be chosen as functions independent of $y_2$. 

\medskip

Moreover, $V^{(1,0)}$ can be determined by averaging the equation
\begin{equation}\label{V1slmixed}
\begin{split}
V^{(1,0)}_t
+\frac{1}{2}\,\|\lambda\|^2\left(\frac{V^{(0)}_x}{V^{(0)}_{xx}}\right)^2 V^{(1,0)}_{xx}
-\|\lambda\|^2\,\frac{V^{(0)}_x}{V^{(0)}_{xx}}\,V^{(1,0)}_x 
-\kappa\,\lambda^T\rho^{\mathbf{s}}\,\frac{V^{(0)}_{xy_1}\,V^{(0)}_x}{V^{(0)}_{xx}}
+\mathcal{L}_{y_2}\,V^{(1,2)}=0
\end{split}
\end{equation}
with respect to $\mu(\mathrm{d}y_2)$. The averaged equation is endowed with the initial condition $V^{(1,0)}(0,x,y_1)=0$ and can be solved explicitly by means of a transformation to an ill-posed backward heat equation as in the proofs of Propositions \ref{CorrTerm} and \ref{slownpprop}. This gives the explicit formula for $V^{(1,0)}$ and its natural parametrization. \ep

\begin{rmk}\label{overlaps3}
Formulas \eqref{V10formula} and \eqref{V01} can be considered as the forward performance analogs of those appearing in \cite[Section 4.1]{FSZ} for the (backwards in time) Merton problem. The differences highlighted in Remark \ref{overlaps} apply here too.
\end{rmk}
\medskip

We now complement the explicit formulas for the leading order terms $V^{(0)}$, $V^{(1,0)}$, $V^{(0,1)}$ by a convergence theorem justifying the approximation of the true value function $V$ by the function $V^{(0)}+\sqrt{\delta}\,V^{(1,0)}+\sqrt{\epsilon}\,V^{(0,1)}$. We will need the non-linear functional $\eta^{\delta,\epsilon}$, whose lengthy formula we give in Appendix \ref{Appeta}. 
We also set $(t,\xi,y_1,y_2):=\big(t,-\log V^{(0)}_x-\frac{\|\lambda\|^2}{2}\,t,y_1,y_2\big)$, and let 
\begin{equation}\label{etatildemulti}
\tilde{\eta}^{\delta,\epsilon}(t,\xi,y_1,y_2):=\eta^{\delta,\epsilon}(t,x,y_1,y_2).
\end{equation}

\begin{theorem}[Remainder estimate, general case]\label{multi_conv_thm}
Suppose that there are $\delta_0>0$, $\epsilon_0>0$, and $T\le\infty$ such that, for all $(\delta,\epsilon)\in(0,\delta_0)\times(0,\epsilon_0)$, the HJB equation \eqref{HJB1} has a solution $V\in C^{1,2,2,2}\big([0,T)\times(0,\infty)\times\rr^2\big)$ which is increasing and strictly concave in the second argument. Then, 
\begin{enumerate}[(i)]
\item the quantity
\begin{equation}
(\delta+\epsilon)^{-1}\,\int_\rr e^{-\frac{z^2}{2t}}\,\sum_{k=0}^\infty \frac{(-1)^k\,z^{2k}}{(2k)!\,2^k\,t^k}\,
\bigg(\frac{\mathrm{d}}{\mathrm{d}\xi}\bigg)^{2k}\,\int_0^t \int_\rr \tilde{\eta}^{\delta,\epsilon}(s,\xi-\chi,y_1,y_2)\,s^{-1/2}
\,e^{-\frac{\chi^2}{2s}}\,\mathrm{d}\chi\,\mathrm{d}s\,\mathrm{d}z,
\end{equation}
with $\tilde{\eta}^{\delta,\epsilon}$ defined in \eqref{etatildemulti}, is well-defined and finite for all $(\delta,\epsilon)\in(0,\delta_0)\times(0,\epsilon_0)$, and 
\item for every $(t,x,y_1,y_2)\in[0,T)\times(0,\infty)\times\rr^2$ for which the limit superior
\begin{equation}\label{conv_cond_multi}
\underset{\delta\downarrow0,\epsilon\downarrow0}{\overline{\lim}}\;(\delta+\epsilon)^{-1}
\bigg|\int_\rr e^{-\frac{z^2}{2t}}\,\sum_{k=0}^\infty \frac{(-1)^k\,z^{2k}}{(2k)!\,2^k\,t^k}\,
\bigg(\frac{\mathrm{d}}{\mathrm{d}\xi}\bigg)^{2k}\,\int_0^t \int_\rr \tilde{\eta}^{\delta,\epsilon}(s,\xi-\chi,y_1,y_2)\,s^{-1/2}
\,e^{-\frac{\chi^2}{2s}}\,\mathrm{d}\chi\,\mathrm{d}s\,\mathrm{d}z\bigg|
\end{equation}
is finite, the error bound
\begin{equation}\label{conv_multi_st}
\underset{\delta\downarrow0,\epsilon\downarrow0}{\overline{\lim}}\;\,(\delta+\epsilon)^{-1}\;
\Big|V(t,x,y_1,y_2)-V^{(0)}(t,x,y_1)-\sqrt{\delta}\,V^{(1,0)}(t,x,y_1)
-\sqrt{\epsilon}\,V^{(0,1)}(t,x,y_1)\Big|<\infty
\end{equation}
applies. If the limit superior \eqref{conv_cond_multi} is bounded above uniformly on a subset of $[0,T)\times(0,\infty)\times\rr^2$, then the convergence in \eqref{conv_multi_st} is uniform on the same subset of $[0,T)\times(0,\infty)\times\rr^2$. 
\end{enumerate}
\end{theorem}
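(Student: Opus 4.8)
The plan is to mirror the two-step structure already established in the proofs of Theorem \ref{conv_prop} (slow factor) and Theorem \ref{fast_conv_thm} (fast factor), now carrying along both expansion parameters simultaneously. First I would insert the full ansatz $V=V^{(0)}+\sqrt{\delta}\,V^{(1,0)}+\sqrt{\epsilon}\,V^{(0,1)}+\delta\,V^{(2,0)}+\sqrt{\delta\epsilon}\,V^{(1,1)}+\epsilon\,V^{(0,2)}+\cdots+(\delta+\epsilon)\,Q$ into the HJB equation \eqref{HJB1}, where the intermediate correctors (those on the orders $\delta$, $\sqrt{\delta\epsilon}$, $\epsilon$, $\delta^{3/2}$, $\delta\sqrt{\epsilon}$, $\sqrt{\delta}\,\epsilon$, $\epsilon^{3/2}$) are defined exactly as in the single-scale constructions of Proposition \ref{multi_prop} (in particular $V^{(0,2)}$ via \eqref{V02eqn} and the analogous $V^{(1,2)}$, $V^{(2,0)}$ via the corresponding Poisson equations against $\mathcal{L}_{y_2}$ and against the $y_1$-generator); by construction every term on the orders strictly below $\delta+\epsilon$ cancels. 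Then, expanding the nonlinearity $\|V_x\lambda+\sigma\sigma^{-1}(\cdots)\|^2/V_{xx}$ using the elementary identities $\tfrac{1}{a+\theta b}=\tfrac1a-\theta\tfrac{b}{a^2+\theta ab}$ with $\theta=\sqrt{\delta}$ and $\theta=\sqrt{\epsilon}$ (and $\theta=\sqrt{\delta\epsilon}$ for the mixed terms), the remainder $\tilde Q:=(\delta+\epsilon)\,Q$ is seen to satisfy a linear PDE
\begin{equation*}
\tilde Q_t+\frac{\|\lambda\|^2}{2}\,\frac{\big(V^{(0)}_x\big)^2}{\big(V^{(0)}_{xx}\big)^2}\,\tilde Q_{xx}
-\|\lambda\|^2\,\frac{V^{(0)}_x}{V^{(0)}_{xx}}\,\tilde Q_x
+\frac{1}{\epsilon}\mathcal{L}_{y_2}\tilde Q+(\text{slow }y_1\text{ operator})\tilde Q=(\delta+\epsilon)\,\eta^{\delta,\epsilon},
\end{equation*}
with $\eta^{\delta,\epsilon}$ collecting all the leftover terms — precisely the functional displayed in Appendix \ref{Appeta}.

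Next, I would pass to the coordinates $(t,\xi,y_1,y_2)$ with $\xi=-\log V^{(0)}_x-\tfrac{\|\lambda\|^2}{2}t$, exactly as in \eqref{CoVslow}. By the same computation used to derive \eqref{w0heateq} and \eqref{slowheatw1}, the first three terms of the PDE for $\tilde Q$ collapse to $q_t+\tfrac{\|\lambda\|^2}{2}q_{\xi\xi}$ where $q(t,\xi,y_1,y_2):=\tilde Q(t,x,y_1,y_2)$; the $y_1$- and $y_2$-differential operators survive but only as a perturbation. The strategy is then the same as in Theorem \ref{conv_prop}: treat the $(t,\xi)$-dynamics as a backward heat equation, apply Duhamel's principle to write $q$ as a Weierstrass-type convolution in $\xi$ of the source $(\delta+\epsilon)\tilde\eta^{\delta,\epsilon}$ integrated over $[0,t]$, and then apply the inverse Weierstrass transform (Widder, \cite[equations (5),(6)]{Wi2}) to recover $q$ itself. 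The membership of the right-hand side in the domain of the inverse Weierstrass transform gives part (i) (well-definedness and finiteness of the stated quantity for each fixed $(\delta,\epsilon)$), and the explicit inversion formula shows that $\xi\mapsto(\delta+\epsilon)^{-1}q(t,\xi,y_1,y_2)$ equals, up to the usual multiplicative constant, the expression inside the absolute value in \eqref{conv_cond_multi}; so finiteness of the limit superior there is equivalent to the error bound \eqref{conv_multi_st}, with uniformity transferring automatically.

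The main obstacle — and the place where genuine new work beyond a mechanical adaptation is needed — is handling the $y_1$ and $y_2$ derivatives that do not disappear under the $(t,\xi)$-change of variables. In the pure slow-factor case these were absorbed into the definition of $\eta^\delta$ (which explicitly contains $\kappa^2\|\rho\|^2 V_{xy}^2/V_{xx}$, $\kappa^2 V_{yy}$, $bV_y$); in the pure fast-factor case the fast generator terms of order $\epsilon^{-1}$, $\epsilon^{-1/2}$, $1$, $\epsilon^{1/2}$ were annihilated by the averaging/centering construction (Poisson equations \eqref{V2ODE}, \eqref{whatisV2}) before the remainder equation was formed. Here one must do both at once: the delicate point is to verify that after subtracting off $V^{(0,2)}$, $V^{(1,2)}$, $V^{(2,0)}$, $V^{(0,3)}$, $V^{(1,1)}$, etc., with the correct averaging against $\mu(\mathrm{d}y_2)$ and the correct treatment of the slow $y_1$-operator, all singular-in-$\epsilon$ contributions of the form $\epsilon^{-1}\mathcal{L}_{y_2}(\cdot)$, $\epsilon^{-1/2}(\cdots)$, and the mixed $\sqrt{\delta/\epsilon}$, $\sqrt{\delta\epsilon}$ cross-terms either cancel identically or are of order $(\delta+\epsilon)$ and can be legitimately packed into $\eta^{\delta,\epsilon}$. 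This requires the auxiliary functions $V^{(i,j)}$ to be sufficiently smooth (which follows, as in Propositions \ref{CorrTerm} and \ref{ffV0prop}, from the smoothness of $u$, of $\lambda$, and of the solution to the Poisson equations under Assumption \ref{inv_asmp}), together with the ergodic estimates on $Y^1$ that make $\phi$ and its $y_2$-derivative well-defined; granting these, the bookkeeping is lengthy but forced, and the rest of the argument is verbatim that of Theorem \ref{conv_prop}. Hence I would state the passage through the $(t,\xi)$-reduction and Duhamel/inverse-Weierstrass machinery briefly, referring to the proof of Theorem \ref{conv_prop}, and devote the written detail to the cancellation bookkeeping that produces $\eta^{\delta,\epsilon}$ of Appendix \ref{Appeta}.
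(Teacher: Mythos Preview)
Your overall strategy---derive a linear remainder equation, reduce to the backward heat equation via the $(t,\xi)$ change of variables, then invoke Duhamel and the inverse Weierstrass transform---is exactly the paper's route. But you overcomplicate the setup in a way that creates the very ``main obstacle'' you then worry about.

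The paper does \emph{not} include higher correctors $V^{(2,0)}$, $V^{(1,1)}$, $V^{(0,2)}$, \ldots\ in the ansatz. It simply writes $V=V^{(0)}+\sqrt{\delta}\,V^{(1,0)}+\sqrt{\epsilon}\,V^{(0,1)}+Q$ and derives directly
\[
Q_t+\frac{\|\lambda\|^2}{2}\left(\frac{V^{(0)}_x}{V^{(0)}_{xx}}\right)^2 Q_{xx}
-\|\lambda\|^2\,\frac{V^{(0)}_x}{V^{(0)}_{xx}}\,Q_x = \eta^{\delta,\epsilon},
\]
with \emph{no} $y_1$- or $y_2$-operators on the left. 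All the $y$-generator terms---including the singular-looking $\epsilon^{-1}\mathcal{L}_{y_2}V$---are shunted wholesale into $\eta^{\delta,\epsilon}$ (see the last line of the Appendix formula). The auxiliary functions $V^{(0,2)}$, $V^{(0,3)}$, $V^{(1,2)}$ appear only inside the \emph{definition} of $\eta^{\delta,\epsilon}$, as the subtractions that make the orders $1$, $\sqrt{\delta}$, $\sqrt{\epsilon}$ cancel; they are never added to the ansatz for $V$. Because $\eta^{\delta,\epsilon}$ is permitted to depend on the unknown $V$ (the theorem is conditional on \eqref{conv_cond_multi}), there is no need to ``close'' the equation or to argue separately that the $y$-derivative terms are small.

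So the ``delicate point'' you flag---handling $\epsilon^{-1}\mathcal{L}_{y_2}\tilde Q$ and the slow operator left over after the change of variables---evaporates: those terms are never on the left to begin with. Your displayed PDE for $\tilde Q$, with $\frac{1}{\epsilon}\mathcal{L}_{y_2}\tilde Q$ on the left and $(\delta+\epsilon)\eta^{\delta,\epsilon}$ on the right, cannot balance as written (the left is generically $O(\epsilon^{-1})$); the resolution is precisely to move those terms right, which is the paper's starting point. Once you do that, the remaining argument is indeed verbatim Theorem~\ref{conv_prop}, and your final paragraph's ``bookkeeping'' is just the verification that $\eta^{\delta,\epsilon}$ matches the Appendix formula.
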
  

\begin{rmk}
Condition \eqref{conv_cond_multi} is of the same form as condition \eqref{etacontr2}, and the detailed interpretation of the latter given in Remark \ref{conv_rmk} applies here as well. 
\end{rmk}

\smallskip

\noindent\textit{Proof of Theorem \ref{multi_conv_thm}.} We proceed as in the proof of Theorem \ref{conv_prop}. More specifically, we plug $V=V^{(0)}+\sqrt{\delta}\,V^{(1,0)}+\sqrt{\epsilon}\,V^{(0,1)}+Q$ into the HJB equation \eqref{HJB1} and apply a Taylor's expansion to the resulting equation in $\sqrt{\delta}$ and $\sqrt{\epsilon}$. Hereby, we use the elementary identity
\begin{equation}
\frac{1}{a+b}=\frac{1}{a}-\frac{b}{a^2+ab}
\end{equation}
and the definitions of $V^{(0)}$, $V^{(1,0)}$, $V^{(0,1)}$, $V^{(2)}$, $V^{(3)}$, $V^{(1,1)}$, and $V^{(1,2)}$ to eliminate the terms of orders $1$, $\sqrt{\delta}$, and $\sqrt{\epsilon}$. The remaining equation then reads 
\begin{equation}\label{R_multi_eq}
Q_t+\frac{\|\lambda\|^2}{2}\left(\frac{V^{(0)}_x}{V^{(0)}_{xx}}\right)^2 Q_{xx}
-\|\lambda\|^2\left(\frac{V^{(0)}_x}{V^{(0)}_{xx}}\right)\,Q_x = \eta^{\delta,\epsilon},
\end{equation}
where $\eta^{\delta,\epsilon}$ is defined prior to the statement of the theorem. One can now conclude by repeating the steps in the proof of Theorem \ref{conv_prop}, namely by making the change of coordinates
\begin{equation}\label{coc_multi}
(t,\xi,y_1,y_2):=\Big(t,-\log V^{(0)}_x-\frac{\|\lambda\|^2}{2}\,t,y_1,y_2\Big)
\end{equation}
in \eqref{R_multi_eq}, and combining Duhamel's principle for the resulting equation with the formula for the inverse Weierstrass transform given in \cite{Wi2}. \ep 

\subsection{Power utility example\label{power}}
We illustrate the results with the family of power utility forward performance processes. For a constant risk aversion coefficient $\gamma\in(0,\infty)\backslash\{1\}$, we impose the initial condition of the HJB equation \eqref{HJB1} to be
\[ V(0,x) = \gamma^\gamma\,\frac{x^{1-\gamma}}{1-\gamma}\,. \]
Similar to Section \ref{new_ex}, we have 
the following explicit solution for the constant parameter value function $V^{(0)}(t,x,y_1)$ in part (i) of Proposition \ref{multi_prop}:
\begin{equation}
V^{(0)}(t,x,y_1)=u\left(\bar{\lambda}^2(y_1)t,x\right) \quad \mbox{where} \quad u(t,x)=\gamma^\gamma\,
\frac{x^{1-\gamma}}{1-\gamma}e^{-\frac{1}{2}\Gamma t}, \quad\mbox{and}\quad \Gamma=\frac{1-\gamma}{\gamma},\label{V0power}
\end{equation}
which can be verified by taking the measure $\nu_0$ to be a Dirac delta centered at $\gamma^{-1}$ and the constant $C_0=\gamma$. 

From \eqref{V10} in Proposition \ref{multi_prop} we compute
\[ 
V^{(1,0)}(t,x,y_1)=\frac12 t^2C_{1,0}(y_1)\Gamma^2\bar\lambda(y_1)\bar\lambda'(y_1)
V^{(0)}(t,x,y_1), 
\]
and from \eqref{V01} we obtain
\[ V^{(0,1)}(t,x,y_1)=t\,C_{0,1}(y_1)\Gamma^2V^{(0)}(t,x,y_1). \]
Then, the three-term approximation to the forward performance value function is given by
\begin{equation}
 V(t,x,y_1,y_2)=\left(1+\frac12 \sqrt{\delta}\,t^2C_{1,0}(y_1)\Gamma^2\bar\lambda(y_1)\bar\lambda'(y_1) + \sqrt{\epsilon}\,t\,C_{0,1}(y_1)\Gamma^2\right)V^{(0)}(t,x,y_1) + O(\delta+\epsilon), \label{Vpowerapp}
 \end{equation}
where $V^{(0)}$ is given explicitly in \eqref{V0power}.

\section{Approximately optimal portfolio} \label{sec_port}

In this last section we define the portfolio associated with our approximation and establish its approximate optimality. 

\begin{definition}\label{opt_def}
Let the feedback portfolio function $\pi^\approx$ be given by 
\begin{equation}\label{opt_port}
\begin{split}
\pi^\approx=-(\sigma^T)^{-1}\,\lambda\,\bigg(\frac{V^{(0)}_x}{V^{(0)}_{xx}}+\sqrt{\delta}\,\frac{V^{(0)}_{xx}\,V^{(1,0)}_x-V^{(0)}_x\,V^{(1,0)}_{xx}}{\big(V^{(0)}_{xx}\big)^2}+\sqrt{\epsilon}\,\frac{V^{(0)}_{xx}\,V^{(0,1)}_x-V^{(0)}_x\,V^{(0,1)}_{xx}}{\big(V^{(0)}_{xx}\big)^2}\bigg) \\
-\sqrt{\delta}\,\kappa\,(\sigma^T)^{-1}\,\rho^{\mathbf{s}}\,\frac{V^{(0)}_{xy_1}}{V^{(0)}_{xx}} 
+ \sqrt{\epsilon}\,(\sigma^T)^{-1}\rho^{\mathbf{f}}\alpha 
\frac{\phi_{y_2}}{V^{(0)}_{xx}}\bigg(\frac{\big(V^{(0)}_x\big)^2}{V^{(0)}_{xx}}\bigg)_x
\end{split}
\end{equation}
where $V^{(0)}$, $V^{(1,0)}$ and $V^{(0,1)}$ are as in Proposition \ref{multi_prop} and $\phi$ is given in \eqref{phidef}. The formula \eqref{opt_port} is obtained by recalling that the nonlinearity in the HJB equation \eqref{HJB1} results from the optimization problem
\[
\sup_\pi \Big((\lambda\,V_x+\sqrt{\delta}\,\kappa\,\rho^{\mathbf{s}}\,V_{xy_1}+\frac{1}{\sqrt{\epsilon}}\,\alpha\,\rho^{\mathbf{f}}\,V_{xy_2})^T(\sigma\,\pi)+\frac{1}{2}\,V_{xx}\,(\sigma\,\pi)^T(\sigma_\pi)\Big),
\]
replacing $V$ by its expansion $V^{(0)}+\sqrt{\delta}\,V^{(1,0)}+\sqrt{\epsilon}\,V^{(0,1)}$ in the formula for the corresponding optimizer $\pi^*$, and applying Taylor's expansion in $\delta$, $\epsilon$ to the result. We refer to $\pi^\approx$ as the \textnormal{approximately optimal portfolio} which is justified by the next proposition. 
\end{definition}

\begin{rmk}
One can use formula \eqref{Vpowerapp} for the value function in the case of the power forward performance to compute the approximately optimal portfolio. We omit the lengthy expression here.
\end{rmk}

\begin{proposition}\label{opt_prop}
Suppose that there are $\delta_0>0$, $\epsilon_0>0$, and $T\le\infty$ such that, for all $(\delta,\epsilon)\in(0,\delta_0)\times(0,\epsilon_0)$, the HJB equation \eqref{HJB1} has a solution $V^{\delta,\epsilon}\in C^{1,2,2,2}([0,T)\times(0,\infty)\times\rr^2)$ which is increasing and strictly concave in the second argument. Then, the value process $V^{\delta,\epsilon}(t,X^{\pi^\approx}(t),Y^\delta(t),Y^\epsilon(t))$ satisfies a SDE of the form 
\eq\label{V_SDE_approx}
\mathrm{d}V^{\delta,\epsilon} = V^{\delta,\epsilon}_{y_1}\,\sqrt{\delta}\,\kappa\,\mathrm{d}B_1(t)+V^{\delta,\epsilon}_{y_2}\,\frac{1}{\sqrt{\epsilon}}\,\alpha\,\mathrm{d}B_2(t)+V^{\delta,\epsilon}_x\,\sigma^T\pi^\approx\,\mathrm{d}W(t)+\Theta^{\delta,\epsilon}\,\mathrm{d}t
\en
with drift coefficient $\Theta^{\delta,\epsilon}=\Theta^{\delta,\epsilon}(t,X^{\pi^\approx}(t),Y^\delta(t), Y^\epsilon(t))$. If, in addition, the limits superior resulting from \eqref{conv_cond_multi} by replacing $\tilde{\eta}^{\delta,\epsilon}$ with any of $\tilde{\eta}^{\delta,\epsilon}_\xi$, $\tilde{\eta}^{\delta,\epsilon}_{\xi\xi}$, $\tilde{\eta}^{\delta,\epsilon}_{\xi\xi\xi}$, $\tilde{\eta}^{\delta,\epsilon}_{\xi y_1}$, $\tilde{\eta}^{\delta,\epsilon}_{\xi y_2}$ are finite, 
then   
\eq\label{Theta_cont}
\underset{\delta\downarrow0,\epsilon\downarrow0}{\overline{\lim}}\;(\delta+\epsilon)^{-1}\,\big|\Theta^{\delta,\epsilon}(t,x,y_1,y_2)\big|<\infty.
\en
In other words, the performance of the portfolio $\pi^\approx$ fails to fulfill the martingale criterion of optimality only by a bounded variation term of order $\delta+\epsilon$.
\end{proposition}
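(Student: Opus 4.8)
The plan is to apply It\^o's formula to $V^{\delta,\epsilon}\big(t,X^{\pi^\approx}(t),Y^\delta(t),Y^\epsilon(t)\big)$ and then to exploit that $V^{\delta,\epsilon}$ solves the HJB equation \eqref{HJB1}. Write $G(\pi,v)$ for the integrand inside the supremum displayed in Definition \ref{opt_def}, i.e. a concave quadratic in $z=\sigma\pi$ with leading coefficient $\tfrac12 v_{xx}<0$. A direct It\^o computation — in which the $\rho^{\mathbf{s,f}}$ cross-variation of $B_1,B_2$ is absorbed into the generator $\mathcal{A}^{\delta,\epsilon}_y$ of $(Y^\delta,Y^\epsilon)$, while the $\rho^{\mathbf{s}},\rho^{\mathbf{f}}$ cross-variations with $W$ produce the $v_{xy_1},v_{xy_2}$ contributions of $G$ — shows that the martingale part of $\mathrm{d}V^{\delta,\epsilon}$ consists precisely of the three stochastic integrals in \eqref{V_SDE_approx}, while the drift is $\Theta^{\delta,\epsilon}=V^{\delta,\epsilon}_t+\mathcal{A}^{\delta,\epsilon}_y V^{\delta,\epsilon}+G\big(\pi^\approx,V^{\delta,\epsilon}\big)$. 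This already establishes the first assertion, using only the standing $C^{1,2,2,2}$ hypothesis.

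For the bound \eqref{Theta_cont} the key observation is that, since $V^{\delta,\epsilon}$ solves \eqref{HJB1}, one has $V^{\delta,\epsilon}_t+\mathcal{A}^{\delta,\epsilon}_y V^{\delta,\epsilon}=-\sup_\pi G(\pi,V^{\delta,\epsilon})=-G(\pi^*,V^{\delta,\epsilon})$, where $\pi^*$ is the pointwise maximizer (unique by strict concavity in $x$). Hence $\Theta^{\delta,\epsilon}=G(\pi^\approx,V^{\delta,\epsilon})-G(\pi^*,V^{\delta,\epsilon})\le 0$, and, since $z\mapsto G(\pi,v)|_{z=\sigma\pi}$ is a concave quadratic with Hessian $v_{xx}\,I$ and maximizer $z^*=\sigma\pi^*$, completing the square gives the exact identity $\Theta^{\delta,\epsilon}=\tfrac12\,V^{\delta,\epsilon}_{xx}\,\big\|\sigma(\pi^\approx-\pi^*)\big\|^2$. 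As $V^{\delta,\epsilon}_{xx}\to V^{(0)}_{xx}$ (finite and nonzero) at the point considered, it therefore suffices to prove $\sigma(\pi^\approx-\pi^*)=O(\sqrt{\delta+\epsilon})$, which is also the point where the additional hypotheses enter.

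To this end, recall from Definition \ref{opt_def} that $\pi^\approx$ is obtained from the feedback form $\sigma\pi^*=-\big(V^{\delta,\epsilon}_{xx}\big)^{-1}\big(\lambda V^{\delta,\epsilon}_x+\sqrt{\delta}\,\kappa\,\rho^{\mathbf{s}}V^{\delta,\epsilon}_{xy_1}+\tfrac{1}{\sqrt{\epsilon}}\,\alpha\,\rho^{\mathbf{f}}V^{\delta,\epsilon}_{xy_2}\big)$ by substituting the asymptotic expansion $V^{\delta,\epsilon}=V^{(0)}+\sqrt{\delta}\,V^{(1,0)}+\sqrt{\epsilon}\,V^{(0,1)}+\epsilon\,V^{(0,2)}+\cdots$ (with $V^{(0,2)}$ as in \eqref{V02eqn}) and Taylor-expanding in $\sqrt{\delta},\sqrt{\epsilon}$. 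Consequently $\sigma(\pi^\approx-\pi^*)$ is a locally bounded smooth function of $V^{(0)}$ and its derivatives times the remainder $Q:=V^{\delta,\epsilon}-V^{(0)}-\sqrt{\delta}\,V^{(1,0)}-\sqrt{\epsilon}\,V^{(0,1)}$ and its derivatives $Q_x,Q_{xx},Q_{xy_1},\epsilon^{-1/2}Q_{xy_2}$, plus a second-order Taylor remainder which is automatically $O(\delta+\epsilon)$. So the claim reduces to showing $Q_x$, $Q_{xx}$, $Q_{xy_1}$ and $\epsilon^{-1/2}Q_{xy_2}$ are $O(\delta+\epsilon)$ at the point in question. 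For this one differentiates the remainder equation \eqref{R_multi_eq} — equivalently its transform $q_t+\tfrac{\|\lambda\|^2}{2}\,q_{\xi\xi}=\tilde{\eta}^{\delta,\epsilon}$ under the coordinate change \eqref{coc_multi} — once and twice in $\xi$, and once in $\xi$ followed by once in $y_1$ or in $y_2$ (the last two producing additional lower-order terms from the $y$-dependence of $\|\lambda\|^2$, handled as in step (a) of the proof of Proposition \ref{CorrTerm}), obtaining forward heat equations with sources $\tilde{\eta}^{\delta,\epsilon}_\xi,\tilde{\eta}^{\delta,\epsilon}_{\xi\xi},\tilde{\eta}^{\delta,\epsilon}_{\xi\xi\xi},\tilde{\eta}^{\delta,\epsilon}_{\xi y_1},\tilde{\eta}^{\delta,\epsilon}_{\xi y_2}$. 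Applying the inverse Weierstrass transform to each exactly as in the proof of Theorem \ref{multi_conv_thm}, and invoking the assumed finiteness of the corresponding limits superior, yields $q_\xi,q_{\xi\xi},q_{\xi\xi\xi},q_{\xi y_1},q_{\xi y_2}=O(\delta+\epsilon)$; reverting to $(t,x,y_1,y_2)$ through the smooth Jacobian of \eqref{coc_multi} converts these into $Q_x,Q_{xx},Q_{xxx},Q_{xy_1},Q_{xy_2}=O(\delta+\epsilon)$.

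The hard part will be the singular weight $\epsilon^{-1/2}$ in front of $V^{\delta,\epsilon}_{xy_2}$ in the feedback for $\pi^*$: a crude bound $Q_{xy_2}=O(\delta+\epsilon)$ does not survive division by $\sqrt{\epsilon}$. The resolution is structural: since none of $V^{(0)},V^{(1,0)},V^{(0,1)}$ (nor the cross term $V^{(1,1)}$) depends on $y_2$, the $y_2$-dependence of $V^{\delta,\epsilon}$ enters only at order $\epsilon$ through $V^{(0,2)}$, so $Q_{xy_2}=\epsilon\,V^{(0,2)}_{xy_2}+O(\epsilon^{3/2}+\sqrt{\delta}\,\epsilon)$; the order-$\epsilon$ piece is precisely what produces the last term of \eqref{opt_port}, and the remaining part, once divided by $\sqrt{\epsilon}$, is $O(\epsilon+\sqrt{\delta\epsilon})=O(\delta+\epsilon)$. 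Making this rigorous requires either carrying the expansion of $Q$ one order further in $\epsilon$, or extracting the corresponding extra factor of $\sqrt{\epsilon}$ from the $y_2$-derivative of $\eta^{\delta,\epsilon}$ via its explicit form in Appendix \ref{Appeta}, together with the hypothesis on $\tilde{\eta}^{\delta,\epsilon}_{\xi y_2}$; this bookkeeping — tracking which corrections are built into $\pi^\approx$ and which remain in $Q$, and checking the heat-operator/derivative commutations in the multiscale coordinates — is the principal technical obstacle. Once $\sigma(\pi^\approx-\pi^*)=O(\sqrt{\delta+\epsilon})$ is in hand, the identity $\Theta^{\delta,\epsilon}=\tfrac12 V^{\delta,\epsilon}_{xx}\,\|\sigma(\pi^\approx-\pi^*)\|^2$ yields \eqref{Theta_cont} immediately, with uniformity on a subset inherited from uniformity of the relevant limits superior there.
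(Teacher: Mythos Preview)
Your approach is essentially the same as the paper's: apply It\^o's formula and the HJB equation to identify $\Theta^{\delta,\epsilon}$, then bound the relevant $Q$-derivatives by differentiating the transformed remainder equation $q_t+\tfrac{\|\lambda\|^2}{2}\,q_{\xi\xi}=\tilde\eta^{\delta,\epsilon}$ in $\xi$ and in $(\xi,y_j)$ and invoking the inverse Weierstrass transform exactly as in Theorem~\ref{multi_conv_thm}. The paper handles the $(\xi,y_1)$-differentiation just as you anticipate, splitting off the extra piece $-t\,\lambda_{y_1}^T\lambda\,q_{\xi\xi\xi}$ arising from the $y_1$-dependence of $\|\lambda\|^2$, and then says ``estimate $Q_{xy_2}$ in a similar manner''.

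The one real difference is how $\Theta^{\delta,\epsilon}$ is represented. You use the exact identity $\Theta^{\delta,\epsilon}=\tfrac12\,V^{\delta,\epsilon}_{xx}\,\|\sigma(\pi^\approx-\pi^*)\|^2$ and reduce to $\sigma(\pi^\approx-\pi^*)=O(\sqrt{\delta+\epsilon})$; the paper instead asserts directly that $\Theta^{\delta,\epsilon}$ is a linear combination of $Q_x,Q_{xx},Q_{xy_1},Q_{xy_2}$ with coefficients that are uniformly bounded wherever $\overline{\lim}\,V_x$, $\overline{\lim}|V_{xy_1}|$, $\overline{\lim}|V_{xy_2}|$, $\overline{\lim}(-1/V_{xx})$ are finite. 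Your quadratic identity is more explicit and, in particular, exposes the $\epsilon^{-1/2}$ weight on $Q_{xy_2}$ that you correctly flag as the principal obstacle; the paper's short argument does not isolate this point. Your structural resolution --- the $y_2$-dependence of $V$ enters only at order $\epsilon$ through $V^{(0,2)}$, and that contribution is already absorbed into the last term of $\pi^\approx$ --- is the right idea; as you note, making it fully rigorous from the stated hypotheses requires either carrying the $\epsilon$-expansion one step further or reading off the extra $\sqrt{\epsilon}$ from the explicit formula for $\eta^{\delta,\epsilon}$ in Appendix~\ref{Appeta}.
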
 

\begin{proof}
The SDE \eqref{V_SDE_approx} is obtained by combining
\begin{equation}
\mathrm{d}X^\pi(t)=\mu\big(Y^{\delta}(t),Y^{\epsilon}(t)\big)^T\pi^\approx(t)\,\mathrm{d}t
+\sigma\big(Y^{\delta}(t),Y^{\epsilon}(t)\big)^T\pi^\approx(t)\,\mathrm{d}W(t),
\end{equation} 
the definition of $\pi^\approx$ (Definition \ref{opt_def}), and It\^o's formula. Writing $V^{\delta,\epsilon}=V^{(0)}+\sqrt{\delta}\,V^{(1,0)}+\sqrt{\epsilon}\,V^{(0,1)}+Q^{\delta,\epsilon}$ and recalling the HJB equation \eqref{HJB1} we see that the drift coefficient $\Theta^{\delta,\epsilon}$ is a linear combination of $Q^{\delta,\epsilon}_x$, $Q^{\delta,\epsilon}_{xy_1}$, $Q^{\delta,\epsilon}_{xy_2}$, and $Q^{\delta,\epsilon}_{xx}$, with the coefficients being uniformly bounded in $\delta$, $\epsilon$ wherever the limits superior $\underset{\delta\downarrow0,\epsilon\downarrow0}{\overline{\lim}}\,V^{\delta,\epsilon}_x$, $\underset{\delta\downarrow0,\epsilon\downarrow0}{\overline{\lim}}\,\big|V^{\delta,\epsilon}_{xy_1}\big|$, $\underset{\delta\downarrow0,\epsilon\downarrow0}{\overline{\lim}}\,\big|V^{\delta,\epsilon}_{xy_2}\big|$, $\underset{\delta\downarrow0,\epsilon\downarrow0}{\overline{\lim}}\,-\frac{1}{V^{\delta,\epsilon}_{xx}}$ are finite. 

\medskip

Next, recall the change of variables \eqref{coc_multi}, write $q^{\delta,\epsilon}(t,\xi,y_1,y_2)$ for $Q^{\delta,\epsilon}(t,x,y_1,y_2)$, and recall from the proof of Theorem \ref{multi_conv_thm} that $q^{\delta,\epsilon}$ satisfies the backward equation
\eq\label{q_heat}
q^{\delta,\epsilon}_t+\frac{\|\lambda\|^2}{2}\,q^{\delta,\epsilon}_{\xi\xi}=\tilde{\eta}^{\delta,\epsilon}
\en
with zero initial condition. Differentiation of this equation in $\xi$, Duhamel's principle for the resulting equation, and the formula for the inverse Weierstrass transform given in \cite{Wi2} yield that $q^{\delta,\epsilon}_\xi$ (and therefore also $Q^{\delta,\epsilon}_x$) admits a control of the form \eqref{Theta_cont} at a point if condition \eqref{conv_cond_multi} holds for $\tilde{\eta}^{\delta,\epsilon}_\xi$ at that point. Repeated differentiation in $\xi$ gives the same result for $q^{\delta,\epsilon}_{\xi\xi}$ (and therefore also $Q^{\delta,\epsilon}_{xx}$) and $\tilde{\eta}^{\delta,\epsilon}_{\xi\xi}$, as well as for $q^{\delta,\epsilon}_{\xi\xi\xi}$ and $\tilde{\eta}^{\delta,\epsilon}_{\xi\xi\xi}$.

\medskip

Lastly, differentiate the equation \eqref{q_heat} in $\xi$, $y_1$ to get 
\eq
\big(q^{\delta,\epsilon}_{\xi y_1}\big)_t+\frac{\|\lambda\|^2}{2}\,\big(q^{\delta,\epsilon}_{\xi y_1}\big)_{\xi\xi}=\tilde{\eta}^{\delta,\epsilon}_{\xi y_1}-\lambda_{y_1}^T\,\lambda\,q^{\delta,\epsilon}_{\xi\xi\xi}. 
\en
By linearity $q^{\delta,\epsilon}_{\xi y_1}=\hat{q}^{\delta,\epsilon}-t\,\lambda_{y_1}^T\,\lambda\,q^{\delta,\epsilon}_{\xi\xi\xi}$, where $\hat{q}^{\delta,\epsilon}$ solves
\eq
\hat{q}^{\delta,\epsilon}_t+\frac{\|\lambda\|^2}{2}\,\hat{q}^{\delta,\epsilon}_{\xi\xi}=\tilde{\eta}^{\delta,\epsilon}_{\xi y_1}.
\en
Hence, $\hat{q}^{\delta,\epsilon}$ admits a control of the form \eqref{Theta_cont} wherever condition \eqref{conv_cond_multi} is satisfied by $\tilde{\eta}^{\delta,\epsilon}_{\xi y_1}$, whereas $q^{\delta,\epsilon}_{\xi\xi\xi}$ has already been controlled before. The desired bound on $Q^{\delta,\epsilon}_{xy_1}$ then follows from the bounds on $q^{\delta,\epsilon}_{\xi y_1}$, $q^{\delta,\epsilon}_{\xi\xi}$, and $q^{\delta,\epsilon}_\xi$. To finish the proof it remains to estimate $Q^{\delta,\epsilon}_{xy_2}$ in a similar manner and to combine the estimates on $Q^{\delta,\epsilon}_x$, $Q^{\delta,\epsilon}_{xy_1}$, $Q^{\delta,\epsilon}_{xy_2}$, and $Q^{\delta,\epsilon}_{xx}$.
\end{proof}

\begin{rmk}
In the case that only a slow factor is present (that is, in the setting of Section \ref{sec_slow}) a statement analogous to that of Proposition \ref{opt_prop} holds for the portfolio function
\begin{equation}\label{opt_port_slow}
\pi^\approx=-(\sigma^T)^{-1}\,\lambda\,\bigg(\frac{V^{(0)}_x}{V^{(0)}_{xx}}+\sqrt{\delta}\,\frac{V^{(0)}_{xx}\,V^{(1)}_x-V^{(0)}_x\,V^{(1)}_{xx}}{\big(V^{(0)}_{xx}\big)^2}\bigg)-\sqrt{\delta}\,\kappa\,(\sigma^T)^{-1}\,\rho\,\frac{V^{(0)}_{xy}}{V^{(0)}_{xx}}
\end{equation}
where $V^{(0)}$ and $V^{(1)}$ are as in Propositions \ref{slow_lead} and \ref{CorrTerm}, respectively.

\medskip

We easily deduce that a similar result holds for the case that only a fast factor is present (that is, in the setting of Section \ref{sec_fast}) for the portfolio function
\begin{equation}\label{opt_port_fast}
\pi^\approx=-(\sigma^T)^{-1}\,\lambda\,\bigg(\frac{V^{(0)}_x}{V^{(0)}_{xx}}+\sqrt{\epsilon}\,\frac{V^{(0)}_{xx}\,V^{(1)}_x-V^{(0)}_x\,V^{(1)}_{xx}}{\big(V^{(0)}_{xx}\big)^2}\bigg)+\sqrt{\epsilon}\,(\sigma^T)^{-1}\rho\alpha\frac{\phi_{y}}{V^{(0)}_{xx}}\bigg(\frac{\big(V^{(0)}_x\big)^2}{V^{(0)}_{xx}}\bigg)_x
\end{equation}
where $V^{(0)}$ and $V^{(1)}$ are as in Propositions \ref{ffV0prop} and \ref{fast_corr_prop}, respectively, and $\phi(y)$ is given in \eqref{lambdahat}.   
\end{rmk}

\section{Conclusion}
We have provided a convergent approximation for forward performance processes in a multifactor incomplete markets model, as well as for the corresponding optimal portfolio. Our approach is based on a perturbation analysis of the corresponding ill-posed HJB equation. The principal term in the approximation results from the appropriately averaged problem, whose solution is known from Widder's Theorem. The correction terms for fast and slow volatility factors can be computed explicitly in terms of this leading order term.

\medskip

We have also given explicit calculations in the case of power utility. The ease of the formulas provided in more general cases, as well as conditions for convergence, should allow future work to develop the financial implications of forward performance processes in realistic market environments.

\newpage

\appendix
\section{Expression for $\eta^{\delta,\epsilon}$ in Section \ref{sec_multi}\label{Appeta}}
\begin{equation*}
\begin{split}
& \eta^{\delta,\epsilon}:=-\frac{1}{V_{xx}}\bigg(\frac{\delta}{2}\,\|\lambda\|^2\,\big(V^{(1,0)}_x\big)^2
+\frac{\epsilon}{2}\,\|\lambda\|^2\,\big(V^{(0,1)}_x\big)^2
+\frac{\|\lambda\|^2}{2}
\,\big(V_x-V^{(0)}_x-\sqrt{\delta}\,V^{(1,0)}_x-\sqrt{\epsilon}\,V^{(0,1)}_x\big)^2 \\
&+\frac{\kappa^2}{2}\,\|\rho^{\mathbf{s}}\|^2\,\delta\,\big(V^{(0)}_{xy_1}\big)^2 
+\frac{\delta^2}{2}\,\kappa^2\,\|\rho^{\mathbf{s}}\|^2\,\big(V^{(1,0)}_{xy_1}\big)^2 
+\frac{\delta}{2}\,\epsilon\,\kappa^2\,\|\rho^{\mathbf{s}}\|^2\,\big(V^{(0,1)}_{xy_1}\big)^2 
+\frac{1}{2\epsilon}\,\alpha^2\,\|\rho^{\mathbf{f}}\|^2\,(V_{xy_2})^2 \\
&+\frac{\delta}{2}\,\kappa^2
\,\|\rho^{\mathbf{s}}\|^2\,\big(V_{xy_1}-V^{(0)}_{xy_1}-\sqrt{\delta}\,V^{(1,0)}_{xy_1} 
-\frac{\sqrt{\epsilon}}{2}\,V^{(0,1)}_{xy_1}\big)^2 
+\delta\,\kappa\,\lambda^T\rho^{\mathbf{s}}\,V^{(0)}_x\,V^{(1,0)}_{xy_1} \\
&+\kappa\,\lambda^T\rho^{\mathbf{s}}\,\sqrt{\delta\,\epsilon}\,V^{(0)}_x\,V^{(0,1)}_{xy_1}
+\sqrt{\delta\,\epsilon}\,\|\lambda\|^2\,V^{(1,0)}_x\,V^{(0,1)}_x 
+\kappa\,\lambda^T\rho^{\mathbf{s}}\,\delta\,V^{(0)}_{xy_1}\,V^{(1,0)}_x \\
& +\delta^{3/2}\,\kappa\,\lambda^T\rho^{\mathbf{s}}\,V^{(1,0)}_x\,V^{(1,0)}_{xy_1} 
+\delta\,\sqrt{\epsilon}\,\kappa\,\lambda^T\rho^{\mathbf{s}}\,V^{(1,0)}_x\,V^{(0,1)}_{xy_1}  
+\frac{\sqrt{\delta}}{\sqrt{\epsilon}}\,\alpha\,\lambda^T\rho^{\mathbf{f}}\,V^{(1,0)}_x\,V_{xy_2} \\
& +\sqrt{\delta\,\epsilon}\,\kappa\,\lambda^T\rho^{\mathbf{s}}\,V^{(0)}_{xy_1}\,V^{(1,0)}_x 
+\delta\,\sqrt{\epsilon}\,\kappa\,\lambda^T\rho^{\mathbf{s}}\,V^{(0,1)}_x\,V^{(1,0)}_{xy_1} 
+\sqrt{\delta}\,\epsilon\,\kappa\,\lambda^T\rho^{\mathbf{s}}\,V^{(0,1)}_x\,V^{(0,1)}_{xy_1} \\
& +\alpha\,\lambda^T\rho^{\mathbf{f}}\,V^{(0,1)}_x\,V_{xy_2}
+\delta^{3/2}\,\kappa^2\,\|\rho^{\mathbf{s}}\|^2\,V^{(0)}_{xy_1}\,V^{(1,0)}_{xy_1}
+\delta\,\sqrt{\epsilon}\,\kappa^2\,\|\rho^{\mathbf{s}}\|^2\,V^{(0)}_{xy_1}\,V^{(0,1)}_{xy_1} \\
& +\frac{\sqrt{\delta}}{\sqrt{\epsilon}}\,\alpha\,\kappa\,\big(\rho^{\mathbf{s}}\big)^T\rho^{\mathbf{f}}\,
V^{(0)}_{xy_1}\,V_{xy_2} 
+\delta^{3/2}\,\sqrt{\epsilon}\,\kappa^2\,\|\rho^{\mathbf{s}}\|^2\,
V^{(1,0)}_{xy_1}\,V^{(0,1)}_{xy_1} 
+\frac{\delta}{\sqrt{\epsilon}}\,\alpha\,\kappa\,\big(\rho^{\mathbf{s}}\big)^T\rho^{\mathbf{f}}\,
V^{(1,0)}_{xy_1}\,V_{xy_2} \\
& +\sqrt{\delta}\,\alpha\,\kappa\,\big(\rho^{\mathbf{s}}\big)^T\rho^{\mathbf{f}}\,
V^{(0,1)}_{xy_1}\,V_{xy_2} 
+\kappa\,\rho^{\mathbf{s}}\,\big(V_{xy_1}-V^{(0)}_{xy_1}-\sqrt{\delta}\,V^{(1,0)}_{xy_1}
-\sqrt{\epsilon}\,V^{(0,1)}_{xy_1}\big) \\
&\quad\times\big(\sqrt{\delta}\,\lambda\,V^{(0)}_x+\sqrt{\delta}\,\lambda\,V^{(1,0)}_x
+\delta\,\kappa\,\rho^{\mathbf{s}}\,V^{(0)}_{xy_1}
+\delta^{3/2}\,\kappa\,\rho^{\mathbf{s}}\,V^{(1,0)}_{xy_1}
+\delta\,\sqrt{\epsilon}\,\kappa\,\rho^{\mathbf{s}}\,V^{(0,1)}_{xy_1}
+\frac{\sqrt{\delta}}{\sqrt{\epsilon}}\,\alpha\,\rho^{\mathbf{f}}\,V_{xy_2}\big) \\
& +\lambda\,\big(V_x-V^{(0)}_x-\sqrt{\delta}\,V^{(1,0)}_x-\sqrt{\epsilon}\,V^{(0,1)}_x\big) \\
& \quad \times\big(\sqrt{\delta}\,\lambda\,V^{(1,0)}_x+\sqrt{\epsilon}\,\lambda\,V^{(0,1)}_x
+\sqrt{\delta}\,\kappa\,\rho^{\mathbf{s}}\,V^{(0)}_{xy_1}
+\delta\,\kappa\,\rho^{\mathbf{s}}\,V^{(1,0)}_{xy_1} 
+\sqrt{\delta\,\epsilon}\,\kappa\,\rho^{\mathbf{s}}\,V^{(1,0)}_{xy_1}
+\frac{1}{\sqrt{\epsilon}}\,\alpha\,\rho^{\mathbf{f}}\,V_{xy_2}\big) 
\end{split}
\end{equation*}
\begin{equation*}
\begin{split}
& +\sqrt{\delta}\,\kappa\,\lambda^T\rho^{\mathbf{s}}
\big(V_x-V^{(0)}_x-\sqrt{\delta}\,V^{(1,0)}_x-\sqrt{\epsilon}\,V^{(0,1)}_x\big)
\,\big(V_{xy_1}-V^{(0)}_{xy_1}-\sqrt{\delta}\,V^{(1,0)}_{xy_1}-\sqrt{\epsilon}\,V^{(0,1)}_{xy_1}\big)
\bigg) \\
& -\bigg(\frac{1}{V_{xx}}-\frac{1}{V^{(0)}_{xx}}\bigg)
\,\bigg(\|\lambda\|^2\,\sqrt{\delta}\,V^{(0)}_x\,V^{(0),\delta}_x
+\|\lambda\|^2\,\sqrt{\epsilon}\,V^{(0)}_x\,V^{(0),\epsilon}_x
+\sqrt{\delta}\,\kappa\,\lambda^T\rho^{\mathbf{s}}\,V^{(0)}_x\,V^{(0)}_{xy_1} \\
&\quad\quad\quad\quad\quad\quad\quad\quad
+\frac{1}{\sqrt{\epsilon}}\,\alpha\,\lambda^T\rho^{\mathbf{f}}\,V^{(0)}_x\,V_{xy_2} 
+\|\lambda\|^2\,V^{(0)}_x
\,\big(V_x-V^{(0)}_x-\sqrt{\delta}\,V^{(1,0)}_x-\sqrt{\epsilon}\,V^{(0,1)}_x\big)\bigg) \\
&-\frac{\|\lambda\|^2}{2}\left(\frac{V^{(0)}_x}{V^{(0)}_{xx}}\right)^2
\frac{\big(V_{xx}-V^{(0)}_{xx}\big)^2}{V_{xx}}-\mathcal{A}^{\delta}_{y_1}\,V
-\frac{\mathcal{A}^{\epsilon}_{y_2}}{\epsilon}\Big(V-\epsilon\,V^{(2)}-\epsilon^{3/2}\,V^{(3)}
-\sqrt{\delta}\,\epsilon^{3/2}\,V^{(1,2)}\Big).
\end{split}
\end{equation*}
Here $\mathcal{A}^{\epsilon}_{y_2}\,V^{(2)}$, $\mathcal{A}^{\epsilon}_{y_2}\,V^{(3)}$ and $\mathcal{A}^{\epsilon}_{y_2}\,V^{(1,2)}$ are defined through \eqref{LV2}, \eqref{LV3} and \eqref{V1slmixed}, respectively. 

\newpage

\bigskip\bigskip

\end{document}